\newcommand\blfootnote[1]{%
  \begingroup
  \renewcommand\thefootnote{}\footnote{#1}%
  \addtocounter{footnote}{-1}%
  \endgroup
}
\title{The Capacity Region of the Arbitrarily Varying MAC: With and Without Constraints}
\author{\IEEEauthorblockN{Uzi Pereg and Yossef Steinberg}\\
\IEEEauthorblockA{Department of Electrical Engineering, 
Technion, Haifa 32000, Israel.\\
Email: {\tt uzipereg@campus.technion.ac.il}, {\tt ysteinbe@ee.technion.ac.il}
 }
}
\definecolor{light-gray}{gray}{0.8}
\definecolor{dark-gray}{gray}{0.3}
\newlength{\dhatheight}
\newcommand{\bieee}{\begin{IEEEeqnarray}{rCl}}
\newcommand{\eieee}{\end{IEEEeqnarray}}
\newcommand{\prob}[1]{\Pr\left(#1\right)}
\newcommand{\given}{\mid}
\newcommand{\cprob}[2]{\Pr\left(#1\given #2\right)}
\newcommand{\E}{\mathbb{E}}
\newcommand{\var}{\mathbb{V}\mathrm{ar}}
\newcommand{\eps}{\varepsilon}
\newcommand{\ie}{\emph{i.e.} }
\newcommand{\eg}{\emph{e.g.} }
\newcommand{\etal}{\emph{et al.} }
\newcommand{\cf}{\emph{cf.} }
\newcommand{\tm}{\widetilde{m}}	
\newcommand{\tq}{\widetilde{q}}
\newcommand{\tX}{\widetilde{X}}
\newcommand{\tY}{\widetilde{Y}}
\newcommand{\tS}{\widetilde{S}}
\newcommand{\tZ}{\widetilde{Z}}
\newcommand{\tx}{\tilde{x}}
\newcommand{\ts}{\widetilde{s}}
\newcommand{\tLambda}{\widetilde{\Lambda}}
\newcommand{\oS}{\overline{S}}
\newcommand{\oq}{\overline{q}}
\newcommand{\hm}{\hat{m}}
\newcommand{\hP}{\hat{P}}
\newcommand{\Aset}{\mathcal{A}}
\newcommand{\Dset}{\mathcal{D}}
\newcommand{\Fset}{\mathcal{F}}
\newcommand{\Uset}{\mathcal{U}}
\newcommand{\Vset}{\mathcal{V}}
\newcommand{\Qset}{\mathcal{Q}}
\newcommand{\Sset}{\mathcal{S}}
\newcommand{\Wset}{\mathcal{W}}
\newcommand{\Xset}{\mathcal{X}}
\newcommand{\Yset}{\mathcal{Y}}
\newcommand{\Eset}{\mathcal{E}}
\newcommand{\markovC}[1]{%
\begin{tikzpicture}[#1]%
\draw (0,0.3ex) -- (1ex,0.3ex);%
\draw (0.5ex,0.3ex) circle (0.2ex);
\draw[white] (0.2ex,0) -- (0.5ex,0);%
\end{tikzpicture}%
}
\newcommand{\Cbar}{\markovC{scale=2}}
\newcommand{\interior}[1]{\text{int}\hspace{-0.01cm}\big( #1 \big)}
\theoremstyle{remark}	\newtheorem{theorem}{Theorem}
\theoremstyle{remark}	\newtheorem{lemma}[theorem]{Lemma}
\theoremstyle{remark}	\newtheorem{coro}[theorem]{Corollary}
\theoremstyle{remark}	\newtheorem{proposition}[theorem]{Proposition}
\theoremstyle{remark} \newtheorem{definition}{Definition}
\theoremstyle{remark} \newtheorem{remark}{Remark}
\theoremstyle{remark} \newtheorem{example}{Example}
\newcommand{\channel}{W_{Y|X,S}}
\newcommand{\avc}{\Wset}																		
\newcommand{\opC}{\mathbb{C}}																
\newcommand{\inC}{\mathsf{C}}															 	
\newcommand{\inR}{\mathsf{R}}
\newcommand{\pSpace}{\mathcal{P}}														
\newcommand{\dB}{\mathsf{B}}	
\newcommand{\dE}{\mathsf{E}}																
\newcommand{\dM}{\mathsf{M}}															 	
\newcommand{\enc}{f}																				
\newcommand{\dec}{g}																			 	
\newcommand{\code}{\mathscr{C}}															
\newcommand{\gcode}{\mathscr{C}^{\,\Gamma}}									
\newcommand{\cerr}{P_{e|s^n}^{(n)}}													
\newcommand{\err}{P_e^{(n)}}															
\newcommand{\cost}{\phi}																		
\newcommand{\plimit}{\Omega}																			
\newcommand{\tset}{\Aset^{\delta}}													
\newcommand{\Tset}{\mathcal{T}}												
\newcommand{\qn}{q}
\newcommand{\tQ}{\hat{\Qset}_n}														
\newcommand{\rstarC}{																			  
\, \hspace{-0.3cm} \text{ $$ \mbox{  
\hspace{-0.1cm} 
\small $\star$   
} $$ }
\hspace{-0.25cm}}
\newcommand{\apLSpaceS}{\overline{\pSpace}_\Lambda(\Sset)}			
\newcommand{\pLSpaceS}{\overline{\pSpace}_\Lambda(\Sset)}			
\newcommand{\pLSpaceSn}{\pSpace_\Lambda(\Sset^n)}		
\newcommand{\mac}{W_{Y|X_1,X_2,S}}
\newcommand{\avmac}{\mathscr{A}}
\newcommand{\Mcompound}{\avmac^\Qset} 
\newcommand{\MrCav}{\opC^{\rstarC}\hspace{-0.1cm}(\avmac)} 			
\newcommand{\MCavc}{\opC(\avmac)} 					  																	
\newcommand{\MrICav}{\inC^{\rstarC}\hspace{-0.1cm}(\avmac)} 		
\newcommand{\MICavc}{\inC(\avmac)} 																							
\newcommand{\MrCcompound}{\opC^{\rstarC}\hspace{-0.1cm}(\avmac^\Qset)} 			
\newcommand{\MCcompound}{\opC(\avmac^\Qset)} 					  																	
\newcommand{\MICcompound}{\inC(\avmac^\Qset)} 																							
\newcommand{\cMICavc}{\overline{\inR}(\avmac)} 		
\newcommand{\MCavcf}{\opC(\avmac_{\text{free}})} 					  																	
\newcommand{\MdrCav}{\opC^{\rstarC \rstarC}\hspace{-0.1cm}(\avmac)} 			
\newcommand{\MdrICav}{\inC^{\rstarC \rstarC}\hspace{-0.1cm}(\avmac)} 		
\newcommand{\MdrCcompound}{\opC^{\rstarC \rstarC}\hspace{-0.1cm}(\avmac^\Qset)} 			
\newcommand{\MdrCavf}{\opC^{\rstarC \rstarC}\hspace{-0.1cm}(\avmac_{\text{free}})} 			
\newcommand{\MdrICavf}{\inC^{\rstarC \rstarC}\hspace{-0.1cm}(\avmac_{\text{free}})} 		
\begin{document}
\maketitle

{}

\begin{abstract} 
We determine both the random code capacity region and the deterministic code capacity region of the arbitrarily varying multiple access channel (AVMAC) under input and state constraints. The underlying assumption is that zero-rate transmission can be arbitrary in the deterministic setting as well, where there is no shared randomness.
As opposed to the random code capacity region, the deterministic code capacity region can be non convex.
For the AVMAC without constraints, the characterization due to Ahlswede and Cai is complete except for two cases, pointed out in the literature as an open problem. The missing piece is obtained as a special case of our results.
\end{abstract}

\begin{IEEEkeywords}
Arbitrarily varying channel, multiple access, minimax, 
deterministic code, symmerizability,
random code, input and state constraints.
\end{IEEEkeywords}

\blfootnote{
This work was supported by the Israel Science Foundation (grant No. 1285/16).
}

\section{Introduction}
The arbitrarily  varying multiple access channel (AVMAC) without constraints was first considered by Jahn \cite{Jahn:79t,Jahn:81p}, to describe a communication network with unknown statistics, that may change over time.
It  is especially relevant to uplink communication in the presence of an 
adversary, or a \emph{jammer}, attempting to disrupt communication. Another scenario is that one of multiple users becomes adversarial and attacks the other users
\cite{SangwanBakshiDeyPrabhakaran:19a, SangwanBakshiDeyPrabhakaran:19c}.  
Jahn  established the `divided-randomness capacity region' \cite{Jahn:79t,Jahn:81p}, namely the capacity region achieved when each encoder shares randomness with the decoder independently, 
and showed that the AVMAC inherits 
 some of the properties of its single user counterpart. In particular, the divided-randomness capacity region is not necessarily achievable using deterministic codes \cite{BBT:60p}. Furthermore, 
 Jahn showed that 
the deterministic code capacity region  either coincides with the 
divided-randomness capacity region or else, it has an empty interior \cite{Jahn:79t,Jahn:81p}. This phenomenon is an analogue of Ahlswede's dichotomy property \cite{Ahlswede:78p}. 
Therefore, in order to calculate the deterministic code capacity region, it is essential to confirm that the capacity region has a non empty interior, \ie positive rates are achievable using deterministic codes. Gubner \cite{Gubner:90p} presented three computable conditions which are necessary for this to hold, and conjectured that they are also sufficient.
Then, Ahlswede and Cai \cite{AhlswedeCai:99p} confirmed Gubner's conjecture \cite{Gubner:90p}, implying that Gubner's conditions are both necessary and sufficient for a non empty capacity region.
As Wiese and Boche recognized,
the case where exactly one of the users has zero capacity has remained an open problem
\cite[Remark 9]{WieseBoche:13p}, until now.

Furthermore, 
 constraints are known to have a drastic effect on the behavior of the single user AVC
\cite{
CsiszarNarayan:88p}, while the effect on the AVMAC has never been established.
%
 Csisz{\'{a}}r and Narayan \cite{
CsiszarNarayan:88p} considered the single user AVC when input and state constraints are imposed on the user and the jammer, respectively. 
Such constraints are often due to 
power limitations of the transmitter and the jamming signal.
Not only the constrained setting provokes serious technical difficulties analytically, but also, as shown in \cite{CsiszarNarayan:88p}, there is 
a significant effect on the behavior of the deterministic code capacity. Specifically, it is shown in \cite{CsiszarNarayan:88p} that dichotomy in the notion of \cite{Ahlswede:78p} no longer holds when state constraints are imposed on the jammer. That is, the deterministic code capacity can be lower than the random code capacity, and yet non-zero. As for the AVMAC under constraints, Gubner and Hughes \cite{GubnerHughes:95p} determined the divided-randomness capacity region. Results on the Gaussian AVMAC were recently presented in a talk \cite{HosseinigokiKosut:19c}.
Solved  examples can be found in \cite{Gubner:91p,Gubner:92p} as well.

Other relevant settings include the AVMAC with conferencing encoders  \cite{WieseBoche:13p,Wiese:13z,BocheSchaefer:14c}, list codes \cite{Nitinawarat:13p,BocheSchaefer:14c,Cai:16p}, fading \cite{ShafieeUlukus:05c,Budkuley:15c}, and an eavesdropper \cite{HeKhistiYener:13p,ArumugamBloch:16c,ChouYener:17c}. 
Among the models of channel uncertainty are also the compound multiple access channel 
\cite{CsiszarKorner:82b,MYK:05c,WBBJ:11p,ZAAA:14c,MitraAsnaniPillai:19c} and the random parameter multiple access with side information
\cite{DasNarayan:00c,CemalSteinberg:05p,SomekhBaruchShamaiVerdu:08p,LapidothSteinberg:13p1,LapidothSteinberg:13p2}. After the publication of this work, Sangwan \etal \cite{SangwanBakshiDeyPrabhakaran:19a} considered a multiple access channel with three users, where one of the users is possibly adversarial, yet the identity of the jamming user is not known in advance (see also \cite{SangwanBakshiDeyPrabhakaran:19c}).

In this work, we consider the AVMAC when input and state constraints are imposed on the users and the jammer, respectively. 
We give full characterization for  both the random code capacity region and the deterministic code capacity region. 
The underlying assumption is that zero-rate transmission can be arbitrary in the deterministic setting as well, where there is no shared randomness.
In particular, 
the encoder can simulate a random transmission,  as long as there is no shared randomness between the parties. This assumption is generally considered to be natural for real-life communication systems (see \eg remark in \cite[p. 282]{CsiszarKorner:82b}), 
whereas shared randomness 
is often impractical \cite{Jahn:81p,BBCM:95p}. 
Nonetheless, the analysis without shared randomness is a lot more challenging.
When state constraints are imposed, the operational time sharing argument does not apply to the AVMAC.  Roughly speaking, using a code over a part of the blocklength effectively increases the state constraint and loosens the restriction on the jammer over this period of time.
Thus, it becomes essential to replace the operational time sharing argument with coded time sharing \cite{HanKobayashi:81p}. 
Our decoder is then  a coded time sharing variant of Ahlswede and Cai's decoding rule \cite{AhlswedeCai:99p},
while the time sharing sequence is deterministic and known to the jammer as well.
Yet, a fundamental difference between our coding scheme and the one in \cite{AhlswedeCai:99p} arises from the dichotomy discrepancy, since
Ahlswede and Cai only showed achievability of positive rates $R_1=R_2=\eps>0$, proving that the capacity region has a non-empty interior.
However, for the AVMAC under constraints, dichotomy does not apply and achievability of positive rates is insufficient. Hence, in our problem, proving achievability is more demanding. Hereby, the codebooks construction and the analysis are based on generalization of the techniques by Csisz\'ar and Narayan \cite{CsiszarNarayan:88p}, along with the insights of Ahlswede and Cai \cite{AhlswedeCai:99p}. The converse proof involves observations by Gubner \cite{Gubner:90p} as well.
As a special case, we obtain a full characterization of the capacity region of the AVMAC without constraints, filling the gap left by 
Ahlswede and Cai \cite{AhlswedeCai:99p}.

\section{Definitions and Previous Results}
\label{sec:Mnotation}
 We use the following notation conventions throughout. 
Calligraphic letters $\Xset,\Sset,\Yset,...$ are used for finite sets.
Lowercase letters $x,s,y,\ldots$  stand for constants and values of random variables, and uppercase letters $X,S,Y,\ldots$ stand for random variables.  
 The distribution of a random variable $X$ is specified by a probability mass function (pmf) 
	$P_X(x)=p(x)$ over a finite set $\Xset$. The set of all pmfs over $\Xset$ is denoted by $\pSpace(\Xset)$.
 We use $x^j=(x_1,x_{2},\ldots,x_j)$ to denote  a sequence of letters from $\Xset$. 
 A random sequence $X^n$ and its distribution $P_{X^n}(x^n)$ are defined accordingly. 
The type $\hP_{x^n}$ of a given sequence $x^n$ is defined as the empirical distribution $\hP_{x^n}(a)=N(a|x^n)/n$ for $a\in\Xset$, where $N(a|x^n)$ is the number of occurrences of the symbol $a$ in the sequence $x^n$.
A type class is denoted by $\Tset^n(\hP)=\{ x^n \,:\; \hP_{x^n}=\hP \}$.
For a pair of integers $i$ and $j$, $1\leq i\leq j$, we define the discrete interval $[i:j]=\{i,i+1,\ldots,j \}$. 
In the continuous case, we use the cumulative distribution function 
	$F_Z(z)=\prob{Z\leq z}$ for $z\in\mathbb{R}$, or alternatively, the probability density function (pdf) $f_Z(z)$,  when it exists. 

	\subsection{Channel Description}
	\label{subsec:Mchannels}
 A state-dependent discrete memoryless multiple access channel (MAC) 
$(\Xset_1\times\Xset_2\times\Sset,\mac,\Yset)$ consists of  finite input alphabets $\Xset_1$ and $\Xset_2$, state alphabet $\Sset$, output alphabet $\Yset$,  and a 
conditional pmf 
$\mac$ over $\Yset$. The channel is memoryless without feedback, and therefore   
$W_{Y^n|X_1^n,X_2^n,S^n}(y^n|x_1^n,x_2^n,s^n)= \prod_{i=1}^n \mac(y_i|x_{1,i},x_{2,i},s_i)$. 
The AVMAC is a MAC  with a state sequence of unknown distribution,  not necessarily independent nor stationary. That is, $S^n\sim \qn(s^n)$ with an unknown joint pmf $\qn(s^n)$ over $\Sset^n$. In particular, $\qn(s^n)$ could give mass $1$ to some state sequence $s^n$.
The AVMAC 
is denoted by $\avmac=\{\mac\}$.

The compound MAC is used as a tool in the analysis.  Different models of compound MACs are described in the literature \cite{CsiszarKorner:82b,MYK:05c}. Here,   the compound MAC is  a channel with a discrete memoryless state, where the state distribution $q(s)$ is not known in exact, but rather belongs to a family of distributions $\Qset$, with $\Qset\subseteq \pSpace(\Sset)$. That is, the state sequence $S^n$ is independent and identically distributed (i.i.d.) according to $ q(s)$, for some pmf $q\in\Qset$. 
We note that this differs from the classical definition of the compound channel, as in \cite{CsiszarKorner:82b}, where the state is fixed throughout the transmission.
The compound MAC is denoted by $\Mcompound$.

\subsection{Coding}
\label{subsec:Mcoding}
We introduce some preliminary definitions, starting with the definitions of a deterministic code and a random code for the AVMAC $\avmac$ under input and state constraints. 

\begin{definition}[Code] 
\label{def:Mcapacity}
A $(2^{nR_1},2^{nR_2},n)$ code for the AVMAC $\avmac$ 
consists of the following;   
two message sets $[1:2^{nR_1}]$ and $[1:2^{nR_2}]$, where $2^{nR_1}$ and $2^{nR_2}$ are assumed to be integers, two encoding functions $\enc_1:[1:2^{nR_1}]\rightarrow \Xset_1^n$ and $\enc_2:[1:2^{nR_2}]\rightarrow \Xset_2^n$, and a decoding function
$
\dec: \Yset^n\rightarrow [1:2^{nR_1}]\times [1:2^{nR_2}]  
$. 

Given a pair of messages $m_1\in [1:2^{nR_1}]$ and  $m_2\in [1:2^{nR_2}]$,
 Encoder $k$ transmits the codeword $x^n_k=\enc_k(m_k)$, for $k=1,2$. 
The decoder  receives the channel output $y^n$, and finds an estimate of the message pair $(\hm_1,\hm_2)=g(y^n)$.  We denote the code by $\code=\left(\enc_1(\cdot),\enc_2(\cdot),\dec(\cdot) \right)$.
\end{definition}

We proceed now to coding schemes 
when using stochastic-encoders stochastic-decoder pairs with common randomness.
We distinguish between two classes; random codes \cite{WieseBoche:13p} and divided-randomness codes \cite{Jahn:81p}.

\begin{definition}[Random code]
\label{def:McorrC} 
A $(2^{nR_1},2^{nR_2},n)$ random code for the AVMAC $\avmac$ consists of a collection of 
$(2^{nR_1},2^{nR_2},n)$ codes $\{\code_{\gamma}=(\enc_{1,\gamma},\enc_{2,\gamma},\dec_\gamma)\}_{\gamma\in\Gamma}$, along with a probability distribution $\mu(\gamma)$ over the code collection $\Gamma$. 
We denote such a code by $\gcode=(\mu,\Gamma,\{\code_{\gamma}\}_{\gamma\in\Gamma})$.
\end{definition}

\begin{definition}[Divided-randomness code]
\label{def:DivMcorrC} 
A $(2^{nR_1},2^{nR_2},n)$ divided-randomness code for the AVMAC $\avmac$ is a random code, where the random element consists of two components, \ie $\gamma=(\gamma_1,\gamma_2)$, one at each encoder. 
The components are drawn according to a product distribution $\mu(\gamma_1,\gamma_2)=\mu_1(\gamma_2)\mu_2(\gamma_2)$ over $\Gamma_1\times\Gamma_2$.
Then, User 1  sends $x_1^n=f_{1,\gamma_1}(m_1)$, User 2 sends $x_2^n=f_{2,\gamma_2}(m_2)$, and upon receiving the channel output $y^n$, the receiver applies a decoding mapping $g_{\gamma_1,\gamma_2}$. We denote such a code by $\code^{\Gamma_1\times\Gamma_2}$.
\end{definition}

The general  random code in Definition~\ref{def:McorrC} can thus be thought of as a divided-randomness code where statistical dependence between $\gamma_1$ and $\gamma_2$ is viable. 

\begin{remark}
\label{remark:stochE}
Our underlying assumption is that zero-rate transmission can be arbitrary in the deterministic setting as well, as long as there is no shared randomness.
%
In particular, if User 1 has zero capacity while User 2 transmits at a positive rate, then Encoder 1 may transmit a random sequence at zero rate, \ie $x_1^n=f_1(\sigma)$ where $\sigma\in [1:2^{n\eps}]$ is a random parameter,  which is not known to the other encoder, the decoder, nor the jammer, and the decoder is not required to recover the value of $\sigma$. 
This means that the encoders have access to $n\eps$ random bits, where $\eps>0$ is arbitrarily small.
Since the randomness is local, such an assumption is generally considered to be reasonable (see \eg remark in \cite[p. 282]{CsiszarKorner:82b}). 
On the other hand, in the codes in Definitions \ref{def:McorrC} and \ref{def:DivMcorrC}, there is shared randomness between the encoders and the decoder, which 
is often impractical \cite{Jahn:81p,BBCM:95p}. 

One may also consider the AVMAC with stochastic encoders, \ie when Encoder $k$ transmits $x_k^n=f_k(m_k,\sigma_k)$,
for $m_k\in [1:2^{nR_k}]$, $k=1,2$, where $\sigma_1$ and $\sigma_2$ the random parameters . Then, 
our results apply to the stochastic encoder capacity region. 
\end{remark}

\subsection{Input and  State Constraints} 
\label{subsec:Mconstraints}
Next, we consider input constraints and state constraint, imposed on the encoders and the jammer, respectively.
We note that the constraints specifications are known to both users and the jammer in this model.
Let $\cost_k:\Xset_k\rightarrow [0,\infty)$, $k=1,2$, and $l:\Sset\rightarrow [0,\infty)$ be some given bounded functions, and define
	\begin{align}
	\cost_k^n(x_k^n)=&\frac{1}{n} \sum_{i=1}^n \cost_k(x_{k,i}) \,,\; k=1,2 \,,
	\label{eq:MLInConstraintStrict} \\
	l^n(s^n)=&\frac{1}{n} \sum_{i=1}^n l(s_i) \,.
	\end{align}
Let $\plimit_1>0$, $\plimit_2>0$, and $\Lambda>0$. Below, we specify the input constraints $(\plimit_1,\plimit_2)$ and state constraint $\Lambda$, corresponding to  the functions
$\cost_1^n(x_1^n)$, $\cost_2^n(x_2^n)$, and $l^n(s^n)$, respectively, for the AVMAC and the compound MAC.
%
	
	%
	
	Given input constraints $(\plimit_1,\plimit_2)$, the encoding functions need to satisfy 
	\begin{align}
	\cost^n_k(\enc_k(m_k))\leq\plimit_k \,,\;\text{for all $m_k\in [1:2^{nR_k}]$} \,,\; k=1,2.
	\label{eq:MinputCstrict}
	\end{align}
	That is, the inputs satisfy $\cost^n_1(X_1^n)\leq\plimit_1$ and $\cost^n_2(X_2^n)\leq\plimit_2$ with probability $1$. 
	Moving to the state constraint $\Lambda$, we have different definitions for the AVMAC and for the compound MAC.
	
	The compound MAC has a constraint on average, with a memoryless state such that $\E_q l(S)\leq\Lambda$, while the AVMAC has an almost-surely constraint, with a non-stationary state sequence such that $l^n(S^n)\leq\Lambda$ with probability $1$. 
	Explicitly, we say that a compound MAC $\Mcompound$  is under a state constraint $\Lambda$, if the set $\Qset$ of state distributions is limited to $\Qset \subseteq \pLSpaceS$, where
	\begin{align}
	\pLSpaceS&\triangleq 
\{ q(s)\in\pSpace(\Sset) \,:\; \E_q\, l(S)\leq\Lambda \} \,.
\label{eq:pLSpaceS}
\intertext{ 	
%
As for the AVMAC $\avmac$, 
 it is now assumed that the joint distribution of the state sequence is limited to $q(s^n)\in\pLSpaceSn$, where
}
\pLSpaceSn &\triangleq\{ q(s^n) \in\pSpace(\Sset^n) \,:\; q(s^n)=0 \;\text{ if $l^n(s^n)>\Lambda$}\, \} \,.
\end{align}
This includes the case of a deterministic unknown state sequence, \ie when $q$ gives probablity $1$ to a particular $s^n\in\Sset^n$ with
$l^n(s^n)\leq \Lambda$.

		We may assume without loss of generality that $0\leq \plimit_k\leq \cost_{k,max}$, $k=1,2$, and $0\leq\Lambda\leq l_{max}$, where $\cost_{k,max}=\max_{x_k\in\Xset_k} \cost_k(x_k)$, $k=1,2$, and $l_{max}=\max_{s\in\Sset} l(s)$. It is also assumed that for some $a\in\Xset_1$, $b\in\Xset_2$, and $s_0\in\Sset$, $\cost_1(a)=\cost_2(b)=l(s_0)=0$.

\subsection{Capacity Region Under Constraints}
We move to the definition of  achievable rate pairs and the capacity region of the AVMAC $\avmac$  under input and state constraints.
Deterministic codes and random codes over the AVMAC $\avmac$ are defined as in 
Definition~\ref{def:Mcapacity} and Definition~\ref{def:McorrC},
 respectively, with the additional constraint 
 (\ref{eq:MinputCstrict}) on the codebook.

 Define the conditional probability of error of a code $\code$ given a state sequence $s^n\in\Sset^n$ by  
\begin{subequations}
\begin{align}
\label{eq:Mcerr}
&\cerr(\code)\triangleq 
\frac{1}{2^{ n(R_1+R_2) }}\sum_{m_1=1}^{2^ {nR_1}}\sum_{m_2=1}^{2^ {nR_2}}
\sum_{y^n:\dec(y^n)\neq (m_1,m_2)} W_{Y^n|X_1^n,X_2^n,S^n}(y^n|\enc_1(m_1),\enc_2(m_2),s^n) \,.
\end{align}
Now, define the average probability of error of $\code$ for some distribution $\qn(s^n)\in\pSpace(\Sset^n)$, 
\begin{align}
\err(\qn,\code)\triangleq 
\sum_{s^n\in\Sset^n} \qn(s^n)\cdot\cerr(\code) \,.
\end{align}
\end{subequations}

\begin{definition}[
Achievable rate pair and capacity region under constraints]
\label{def:MLcapacity}
A code $\code=(\enc_1,\enc_2,\dec)$ is a  called a
$(2^{nR_1},2^{nR_2},n,\eps)$ code for the AVMAC $\avmac$ under input constraints $(\plimit_1,\plimit_2)$ and  state constraint $\Lambda$, when (\ref{eq:MinputCstrict})  is satisfied 
 and 
\begin{align}
\label{eq:MLerr}
& \err(q,\code) 
\leq \eps \,,\quad
\text{for all $q\in\pLSpaceSn$} \,,
\end{align}
or, equivalently, $\cerr(\code)\leq\eps$ for all $s^n\in\Sset^n$ with $l^n(s^n)\leq\Lambda$.

  We say that a rate pair $(R_1,R_2)$ is achievable, under	input constraints $(\plimit_1,\plimit_2)$ and state constraint $\Lambda$,
	if for every $\eps>0$ and sufficiently large $n$, there exists a  $(2^{nR_1},2^{nR_2},n,\eps)$ code for the AVMAC	$\avmac$ under input constraints $(\plimit_1,\plimit_2)$ and state constraint $\Lambda$. The operational capacity region is defined as the closure of the set of achievable rate pairs, 
	and it is denoted by $\MCavc$. 
 We use the term `capacity region' referring to this operational meaning, and in some places we call it the deterministic code capacity region in order to emphasize that achievability is measured with respect to  deterministic codes.

Analogously to the deterministic case,  a $(2^{nR_1},2^{nR_2},n,\eps)$ random code $\gcode=$ $(\mu,\Gamma,$ $\{\code_{\gamma}\}_{\gamma\in\Gamma})$
 for the AVMAC $\avmac$, under input constraints $(\plimit_1,\plimit_2)$ and state constraint $\Lambda$, satisfies the requirements
\begin{subequations}
\label{eq:MLrcodeReq}
\begin{align}
&
\sum_{\gamma\in\Gamma}\mu(\gamma)   \cost_k^n(\enc_{k,\gamma}^n(m_k))
  \leq \plimit_k
\,,\; 
\text{for all $m_k\in [1:2^{nR_k}]$}\,,\; k=1,2 \,,  \label{eq:McodeInputCr}
\intertext{and} 
&\err(q,\gcode)\triangleq \sum_{\gamma\in\Gamma} \mu(\gamma) \err(q,\code_\gamma)
\leq \eps \,,\quad\text{for all $q\in\pLSpaceSn$} \,.
\label{eq:MLrerr}				
\end{align}
\end{subequations}
The capacity region achieved by random codes is then denoted by $\MrCav$, and it 
 is referred to as the \emph{random code capacity region}.
%
In addition, a $(2^{nR_1},2^{nR_2},n,\eps)$ divided-randomness code  $\code^{\Gamma_1\times\Gamma_2}$ 
satisfies the requirements
\begin{subequations}
\label{eq:DivMLrcodeReq}
\begin{align}
&
\sum_{\gamma_k}\mu_k(\gamma_k)   \cost_k^n(\enc_{k,\gamma_k}^n(m_k))
  \leq \plimit_k
\,,\; 
\text{for all $m_k\in [1:2^{nR_k}]$}\,,\; k=1,2 \,,  \label{eq:DivMcodeInputCr}
\intertext{and} 
&\err(q,\code^{\Gamma_1\times\Gamma_2})\triangleq \sum_{\gamma_1,\gamma_2} \mu_1(\gamma_1) \mu_2(\gamma_2) 
\err(q,\code_{\gamma_1,\gamma_2})
\leq \eps \,,\quad\text{for all $q\in\pLSpaceSn$} \,.
\label{eq:vMLrerr}				
\end{align}
\end{subequations}
The capacity region achieved by divided-randomness  codes is then denoted by $\MdrCav$, and it 
 is referred to as the \emph{divided-randomness capacity region}.
\end{definition}
 
Note that based on the definitions above,
\begin{align}
\MCavc \subseteq \MdrCav\subseteq \MrCav \,.
\end{align}


The definitions above are naturally extended to the compound MAC, under input constraints 
$(\plimit_1,\plimit_2)$ and state constraint $\Lambda$, by limiting the requirements (\ref{eq:MinputCstrict}), (\ref{eq:MLerr}) and (\ref{eq:MLrcodeReq}) to i.i.d. state distributions $q\in\Qset$. 
The respective deterministic code capacity region, random code capacity region, and divided-randmoness capacity region 
 $\MCcompound$, $\MrCcompound$ and $\MdrCcompound$ are defined accordingly.

\subsection{Related Work}
\subsubsection{Without Constraints} 
\label{sec:MLNOsi}
In this subsection, we briefly review known results for the case where there are no constraints.   
Denote the deterministic code capacity region and the divided-randomness capacity regions of the AVMAC free of constraints by $\MCavcf$ and $\MdrCavf$, respectively. 
We note that this is a special case of the AVMAC under constraints, with $\plimit_1\geq\cost_{1,max}$, $\plimit_2\geq\cost_{2,max}$,  and $\Lambda\geq l_{max}$.

We cite the divided-randomness capacity theorem of the AVMAC free of constraints, due to Jahn 
\cite{Jahn:79t}. Let 
\begin{align}
\label{eq:MICf}
\inC^{\rstarC\rstarC}\hspace{-0.1cm}(\avmac_{\text{free}})=\bigcup_{P_U P_{X_1|U} P_{X_2|U}}
\left\{
\begin{array}{lrl}
(R_1,R_2) \,:\; & R_1 		\leq&   \min_{q(s|u)} I_q(X_1;Y|X_2,U)  \,, \\
								& R_2 		\leq&   \min_{q(s|u)} I_q(X_2;Y|X_1,U)  \,, \\
								& R_1+R_2 \leq&   \min_{q(s|u)} I_q(X_1,X_2;Y|U)  
\end{array}
\right\}
 \,,
\end{align}
with $(U,X_1,X_2,S)\sim P_U(u) P_{X_1|U}(x_1|u) P_{X_2|U}(x_2|u) q(s|u)$. 

\begin{theorem}[see {\cite[Theorem 1a]{Jahn:79t}}] 
\label{theo:MavcC0R}
The divided-randomness capacity region of an AVMAC, free of constraints,  is  given by 
\begin{align}
\MdrCavf = \inC^{\rstarC\rstarC}\hspace{-0.1cm}(\avmac_{\text{free}}) \,.
\end{align}
\end{theorem}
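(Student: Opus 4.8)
The plan is to establish the two inclusions $\MdrICavf\subseteq\MdrCavf$ (achievability) and $\MdrCavf\subseteq\MdrICavf$ (converse). Two structural facts of the unconstrained model drive the argument: in a divided-randomness code each encoder shares an \emph{independent} random element with the decoder but not with the other encoder, and---since there is no state constraint---a deterministic time-sharing sequence is operational, hence public and known to the jammer as well.

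For achievability, fix $P_U P_{X_1|U} P_{X_2|U}$ and a deterministic sequence $u^n$ whose type is close to $P_U$. On the coordinates $i$ with $u_i=u$ I would draw the two codebooks at random, the codeword of $m_k$ having independent $P_{X_k|U=u}$-distributed letters; the codebook draw of Encoder $k$ is exactly its random element $\gamma_k$, independent across encoders, so the scheme is a legitimate divided-randomness code. The decoder applies a \emph{state-robust} joint-typicality rule: it outputs $(\hm_1,\hm_2)$ if the codeword pair together with $u^n$ and $y^n$ has a joint type consistent with $\mac$ for some state conditional type, and no competing pair is at least as consistent. Conditioning on an arbitrary---possibly non-i.i.d., adversarially chosen---state sequence $s^n$ and using the method of types, the pairwise error probabilities are controlled by the empirical conditional mutual informations $I_{q}(X_1;Y|X_2,U)$, $I_{q}(X_2;Y|X_1,U)$, $I_{q}(X_1,X_2;Y|U)$ with $q$ ranging over the joint conditional types compatible with $s^n$; because the averaged error of the random code for a given state sequence depends on it only through this type, a union bound over the polynomially many types shows that it vanishes whenever $R_1<\min_{q(s|u)}I_q(X_1;Y|X_2,U)$, $R_2<\min_{q(s|u)}I_q(X_2;Y|X_1,U)$ and $R_1+R_2<\min_{q(s|u)}I_q(X_1,X_2;Y|U)$. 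Taking the union over the input distributions and the closure gives $\MdrICavf\subseteq\MdrCavf$.

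For the converse, take a sequence of $(2^{nR_1},2^{nR_2},n,\eps_n)$ divided-randomness codes with $\eps_n\to0$, let $u^n$ denote the public time-sharing sequence built into the $n$-th code, and let the jammer generate $S_i$ independently with law $q(\cdot\,|\,u_i)$. Since the messages are uniform and independent and $\gamma_1\perp\gamma_2$, the inputs satisfy $X_{1,i}\perp X_{2,i}$ at every coordinate; Fano's inequality applied to the code averaged over its shared randomness, followed by the standard chain-rule single-letterization of the MAC over the (time-varying) memoryless channel obtained by averaging $\mac$ against $q(\cdot\,|\,u_i)$, yields $R_1\le\frac1n\sum_i I(X_{1,i};Y_i|X_{2,i})+\eps_n$ and the analogous inequalities for $R_2$ and $R_1+R_2$. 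Introducing a uniform time index $J$ and setting $U=(J,u_J)$, $X_k=X_{k,J}$, $S=S_J$, $Y=Y_J$, these become $R_1\le I_q(X_1;Y|X_2,U)+\eps_n$ and so on, with $(U,X_1,X_2,S)\sim P_U P_{X_1|U} P_{X_2|U} q(s|u)$ for the code-induced input distribution and the jammer's conditional law; here the product structure $P_{X_1|U}P_{X_2|U}$ is exactly what the divided (independent) randomness guarantees. As the bounds hold for \emph{every} conditional pmf $q(s|u)$, each right-hand side may be replaced by its minimum over $q(s|u)$, placing $(R_1,R_2)$ within $\eps_n$ of the subregion of $\MdrICavf$ indexed by the code's input distribution; letting $n\to\infty$ and invoking the closedness of $\MdrICavf$ (which, by a cardinality-reduction argument for $\Uset$, is a union of closed sets over a compact parameter set) finishes the converse.

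The main obstacle is the achievability, specifically the design and analysis of the state-robust decoder: one must certify a positive error exponent \emph{simultaneously} for every adversarial state sequence, which is the method-of-types machinery of Csisz\'ar--Narayan and Ahlswede--Cai specialized to the random-code setting, and this is where the bookkeeping concentrates; note, though, that because the shared randomness is present, no symmetrizability obstruction arises here, in contrast with the deterministic-code analysis of the later sections. A smaller but genuine subtlety is that the converse must carry the \emph{public} time-sharing symbol into the auxiliary $U$ while preserving the conditional independence $X_1\perp X_2\mid U$, which is why the divided-randomness structure (rather than a general shared-randomness code) is what singles out the $\min_{q(s|u)}$ form.
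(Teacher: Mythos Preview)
The paper does not itself prove Theorem~\ref{theo:MavcC0R}; it is quoted from Jahn as prior work, and the remark following it only records that Jahn stated the region as a closed convex hull of pentagons, with formula~\eqref{eq:MICf} obtained via operational time sharing (legitimate in the unconstrained setting). Your achievability outline---fixed public $u^n$, independently drawn random codebooks serving as the divided elements $\gamma_1,\gamma_2$, universal joint-typicality decoding---is a valid alternative to Jahn's route and would work.

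Your converse, however, has a genuine gap. You write ``let $u^n$ denote the public time-sharing sequence built into the $n$-th code'' and set $U=(J,u_J)$, but a general divided-randomness code carries no such distinguished sequence; assuming one exists is assuming the code already has your achievability structure, which a converse may not do. The repair is to take $U=J$, the bare time index. The required conditional independence $X_1\perp X_2\mid U$ then holds because $X_{1,j}$ is a function of $(M_1,\gamma_1)$ and $X_{2,j}$ of $(M_2,\gamma_2)$ with $(M_1,\gamma_1)\perp(M_2,\gamma_2)$---this is precisely where divided randomness enters. The jammer may still pick any product law $S_i\sim q_i(\cdot)$, so as $q(s|u)$ ranges over all conditionals on $\Uset=[1:n]$ the Fano bounds deliver the full $\min_{q(s|u)}$. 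Your closing intuition is right in spirit, but the mechanism is that divided randomness lets you keep $\gamma_1,\gamma_2$ \emph{out} of $U$, not that the code arrives equipped with a public $u^n$.
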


\begin{remark}
Originally, the random code capacity region is expressed in \cite[Theorem 1]{Jahn:79t} as a closed convex hull of a union of regions.
Achievability of (\ref{eq:MICf}) is established through time sharing. 
\end{remark}

Now, we move to the deterministic code capacity region. 
\begin{theorem}[Ahlswede's Dichotomy \cite{Jahn:79t,Jahn:81p}]  
\label{theo:MavcC0}
The capacity region of an AVMAC, free of  constraints, either coincides with the divided-randomness capacity region or else, it has an empty interior.
That is, 
$
\MCavcf = \MdrCavf  
$ or else, 
$
\interior{\MCavcf}=\emptyset 
$. 
\end{theorem}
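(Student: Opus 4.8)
The plan is to reduce the statement to a single inclusion and then invoke Jahn's divided-randomness version \cite{Jahn:79t,Jahn:81p} of Ahlswede's elimination technique \cite{Ahlswede:78p}. If $\interior{\MCavcf}=\emptyset$, the second alternative holds and there is nothing to prove; so assume $\interior{\MCavcf}\neq\emptyset$. A deterministic code is a divided-randomness code with $|\Gamma_1|=|\Gamma_2|=1$, hence $\MCavcf\subseteq\MdrCavf$ always, and it remains to show $\MdrCavf\subseteq\MCavcf$. Fix an arbitrary divided-randomness achievable pair $(R_1,R_2)$ and $\eps>0$; I will build, for all large $n$, a \emph{deterministic} code over $\avmac$ of rate tending to $(R_1,R_2)$ whose error tends to $0$. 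Two steps are combined: \emph{(i) elimination}, reducing the encoders' private randomness to polynomial size; and \emph{(ii) localization}, conveying the two (now short) indices by a deterministic code over a sublinear preamble --- the only place where $\interior{\MCavcf}\neq\emptyset$ is used.

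\emph{Step (i).} Let $\code^{\Gamma_1\times\Gamma_2}=(\mu_1,\mu_2,\{\code_{\gamma_1,\gamma_2}\})$ be a $(2^{nR_1},2^{nR_2},n,\eps)$ divided-randomness code for $\avmac$. Sample $\gamma_1^{(1)},\ldots,\gamma_1^{(N_1)}$ i.i.d.\ $\sim\mu_1$ and, independently, $\gamma_2^{(1)},\ldots,\gamma_2^{(N_2)}$ i.i.d.\ $\sim\mu_2$, with $N_1=N_2=\lceil c\,n\rceil$ and $c=c(|\Sset|,\eps)$ a suitable constant. Fix $s^n\in\Sset^n$. For each fixed $i$, the quantities $\cerr(\code_{\gamma_1^{(i)},\gamma_2^{(j)}})$, $j\in[1:N_2]$, are, given $\gamma_1^{(i)}$, i.i.d.\ in $[0,1]$ with mean $A(\gamma_1^{(i)})\triangleq\E_{\gamma_2\sim\mu_2}[\cerr(\code_{\gamma_1^{(i)},\gamma_2})\mid\gamma_1^{(i)}]$; Hoeffding's inequality and a union bound over all $s^n$ and all $i$ give, with probability tending to $1$, $\frac1{N_2}\sum_{j}\cerr(\code_{\gamma_1^{(i)},\gamma_2^{(j)}})\le A(\gamma_1^{(i)})+\eps$ for every $s^n$ and $i$ once $N_2\ge c\,n$. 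In the same way $\frac1{N_1}\sum_i A(\gamma_1^{(i)})\le\E_{\gamma_1\sim\mu_1}[A(\gamma_1)]+\eps$ for every $s^n$ once $N_1\ge c\,n$, while $\E_{\gamma_1}[A(\gamma_1)]=\E_{(\gamma_1,\gamma_2)\sim\mu_1\times\mu_2}[\cerr(\code_{\gamma_1,\gamma_2})]\le\eps$ by the error bound of the original code. Hence some realization yields \emph{product} collections $\Gamma_k'=\{\gamma_k^{(i)}\}_i$ of size $|\Gamma_k'|=\lceil c\,n\rceil$, $k=1,2$, with $\frac1{N_1N_2}\sum_{i,j}\cerr(\code_{\gamma_1^{(i)},\gamma_2^{(j)}})\le 3\eps$ for every $s^n\in\Sset^n$.

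\emph{Step (ii).} By $\interior{\MCavcf}\neq\emptyset$ there is a pair $(R_1^0,R_2^0)$ with $R_1^0,R_2^0>0$ achievable by deterministic codes for $\avmac$. Choose $n_0=O(\log n)$ large enough that a deterministic code of blocklength $n_0$ conveys a message pair from $\Gamma_1'\times\Gamma_2'$ with error at most $\eps$ uniformly over all preamble state sequences --- possible for all large $n$ since $\log|\Gamma_k'|=O(\log n)$ while $R_k^0>0$ (a routine expurgation makes the error maximal rather than average, at negligible rate cost). Now form the length-$(n+n_0)$ \emph{deterministic} code for $\avmac$ in which User $k$'s message set is $\Gamma_k'\times[1:2^{nR_k}]$: over the first $n_0$ symbols the two users jointly send $(\gamma_1,\gamma_2)$ via the positive-rate MAC code, and over the remaining $n$ symbols they run $\code_{\gamma_1,\gamma_2}$ on the messages $(m_1,m_2)$; the decoder first recovers $(\hat\gamma_1,\hat\gamma_2)$ from the preamble output and then applies $\dec_{\hat\gamma_1,\hat\gamma_2}$ to the main block. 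For every $s^{n+n_0}$, since the uniform average of the error over $\bigl(\Gamma_1'\times[1:2^{nR_1}]\bigr)\times\bigl(\Gamma_2'\times[1:2^{nR_2}]\bigr)$ contains the uniform average over $(\gamma_1,\gamma_2)\in\Gamma_1'\times\Gamma_2'$, this error is at most the preamble error ($\le\eps$) plus $\frac1{N_1N_2}\sum_{i,j}\cerr(\code_{\gamma_1^{(i)},\gamma_2^{(j)}})\le 3\eps$, i.e.\ at most $4\eps$. The rate of User $k$ is $\frac1{n+n_0}\log\bigl(|\Gamma_k'|\,2^{nR_k}\bigr)=\frac{n}{n+n_0}R_k+O\!\bigl(\tfrac{\log n}{n}\bigr)\to R_k$. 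Letting $n\to\infty$ and then $\eps\to 0$ shows $(R_1,R_2)\in\MCavcf$; as $(R_1,R_2)$ was an arbitrary divided-randomness achievable pair, $\MdrCavf\subseteq\MCavcf$, and with the reverse inclusion this proves the theorem.

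\emph{Main obstacle.} The delicate point is the divided-randomness form of elimination in Step (i): the $N_1\times N_2$ array of sampled conditional errors is \emph{not} independent --- entries in a common row share $\gamma_1^{(i)}$ and entries in a common column share $\gamma_2^{(j)}$ --- so a single concentration bound does not control the grid average, and one cannot reduce to a single index (as in the single-user case) because the two encoders' randomizations must remain decoupled into a product form $\Gamma_1'\times\Gamma_2'$. The two-stage argument --- concentrate over columns for each fixed row key, then over the row keys --- resolves this while keeping $N_1,N_2$ polynomial in $n$, which is exactly what forces the index-description rate to vanish in Step (ii). A second, smaller point is that one genuinely needs \emph{both} $R_1^0>0$ and $R_2^0>0$ (a nonempty interior, not just positive sum-rate) in order to fit the two index descriptions into the $o(n)$-length preamble.
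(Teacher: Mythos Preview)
Your proof is correct and is precisely the approach the paper attributes to this result. The paper does not give its own proof of Theorem~\ref{theo:MavcC0}; it cites the statement from Jahn \cite{Jahn:79t,Jahn:81p} and, in a later remark, notes only that it was proved via Ahlswede's Elimination Technique \cite{Ahlswede:78p}, in which the encoders transmit the random elements $\gamma_1,\gamma_2$ over a negligible portion of the blocklength --- exactly your Step~(ii). Your Step~(i) is the standard elimination argument, correctly adapted to preserve the product structure $\Gamma_1'\times\Gamma_2'$ required for divided randomness; the two-stage concentration (columns given a row key, then row keys) is the right way to handle the dependence you flag, and $N_1,N_2=O(n)$ suffices to beat the $|\Sset|^n$ union bound. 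One minor comment: the expurgation to maximal preamble error is not needed, since the indices $(\gamma_1,\gamma_2)$ are absorbed into the message sets and you are bounding the \emph{average} error of the concatenated code; the average over $(\gamma_1,\gamma_2)$ of the preamble error already gives what you want.
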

Necessary and sufficient conditions for the capacity region to have a non-empty interior were 
 established by Gubner \cite{Gubner:90p,Gubner:88z} and Ahlswede and Cai \cite{AhlswedeCai:99p}  in terms of the following definition.
\begin{definition} \cite{Gubner:90p,Gubner:88z,Ericson:85p}
\label{def:Msymmetrizable}
 A state-dependent MAC $\mac$ is said to be 
\begin{enumerate}[1)]
\item
 \emph{symmetrizable}-$\Xset_1\times\Xset_2$ if for some conditional distribution $J(s|x_1,x_2)$,
\begin{multline}
\label{eq:MsymmetrizableJ}
\sum_{s\in\Sset} \mac(y|x_1,x_2,s)J(s|\tx_1,\tx_2)=\sum_{s\in\Sset} \mac(y|\tx_1,\tx_2,s)J(s|x_1,x_2) \,,\; \\
\forall\, x_1,\tx_1\in\Xset_1 \,,\; x_2,\tx_2\in\Xset_2 \,,\; y\in\Yset \,.
\end{multline}
Equivalently,  the channel $\widetilde{W}(y|x_1,x_2,\tx_1,\tx_2)$ $=$ $
\sum_{s\in\Sset} \channel(y|x_1,x_2,s)J(s|\tx_1,\tx_2)$ is symmetric with respect to $(x_1,x_2)$ and 
$(\tx_1,\tx_2)$.

\item
 \emph{symmetrizable}-$\Xset_1|\Xset_2$ if for some conditional distribution $J_1(s|x_1)$,
\begin{align}
\label{eq:Msymmetrizable1}
\sum_{s\in\Sset} \mac(y|x_1,x_2,s)J_1(s|\tx_1)=\sum_{s\in\Sset} \mac(y|\tx_1,x_2,s)J_1(s|x_1) \,,\;  
\forall\, x_1,\tx_1\in\Xset_1 \,,\; x_2\in\Xset_2 \,,\; y\in\Yset \,.
\end{align}

\item
 \emph{symmetrizable}-$\Xset_2|\Xset_1$ if for some conditional distribution $J_2(s|x_2)$,
\begin{align}
\label{eq:Msymmetrizable2}
\sum_{s\in\Sset} \mac(y|x_1,x_2,s)J_2(s|\tx_2)=\sum_{s\in\Sset} \mac(y|x_1,\tx_2,s)J_2(s|x_2) \,,\; 
\forall\, x_1\in\Xset_1 \,,\; x_2,\tx_2\in\Xset_2 \,,\; y\in\Yset \,.
\end{align}
%
	\end{enumerate} 
\end{definition}
We say that the AVMAC $\avmac$ is symmetrizable-$\Xset_1\times\Xset_2$ if the corresponding state-dependent MAC $\mac$ is symmetrizable-$\Xset_1\times\Xset_2$, and similarly for symmetrizability -$\Xset_1|\Xset_2$ and symmetrizability-$\Xset_2|\Xset_1$ .
%
\begin{example} \cite{Gubner:90p}
Consider an adder channel specified by $Y=X_1+X_2+S$, where $\Xset_1=\Xset_2=\{0,1 \}$.
For $\Sset=\{0,1,2\}$, the AVMAC satisfies the conditions in Definition~\ref{def:Msymmetrizable}, 
as (\ref{eq:MsymmetrizableJ})-(\ref{eq:Msymmetrizable2}) hold with $J(s|x_1,x_2)=\delta(s-x_1-x_2)$, 
$J_1(s|x_1)=\delta(s-x_1)$, $J_2(s|x_2)=\delta(s-x_2)$, where $\delta(u)$ is the Kronecker delta function, \ie $\delta(u)=1$ for $u=0$, and $\delta(u)=0$ otherwise. On the other hand, it is shown in \cite{Gubner:90p} that for 
$\Sset=\{0,1\}$, the AVMAC is symmetrizable-$\Xset_1|\Xset_2$ and symmetrizable-$\Xset_2|\Xset_1$, but non-symmetrizable-$\Xset_1\times\Xset_2$.
\end{example}
Ahlswede and Cai \cite{AhlswedeCai:99p} showed by example that
it is also possible that an AVMAC satisfies the first condition in Definition~\ref{def:Msymmetrizable} but does not satisfy the other two. 
\begin{example} \cite{AhlswedeCai:99p}
\label{example:MsymmJn1n2}
Consider a binary MAC, with $\Xset_1=\Xset_2=\Sset=\Yset=\{0, 1\}$,
 specified by the following. For $s=0$,
\begin{align}
&\mac(\cdot|0,0,0)=\mac(\cdot|1,1,0)=(1,0) \,,\nonumber\\
&\mac(\cdot|1,0,0)=\mac(\cdot|0,1,0)=\left(  \frac{1}{2},\frac{1}{2} \right) \,,
\intertext{and for $s=1$,}
&\mac(\cdot|0,0,1)=\mac(\cdot|1,1,1)=\left(  \frac{1}{2},\frac{1}{2} \right) \,,\nonumber\\
&\mac(\cdot|1,0,1)=\mac(\cdot|0,1,1)=  (0,1)\,.
\end{align}
Then it is shown in \cite{AhlswedeCai:99p} that $\mac$ is symmetrizable-$\Xset_1\times\Xset_2$, as 
(\ref{eq:MsymmetrizableJ}) holds for $J(s|x_1,x_2)=1$ for $(x_1=x_2,s=0)$ or $(x_1\neq x_2,s=1)$, and 
$J(s|x_1,x_2)=0$ otherwise. On the other hand, plugging $x_2=0$ in (\ref{eq:Msymmetrizable1}) yields 
$J_1(s|x_1)=\delta(s-x_1)$, while plugging $x_2=1$ in (\ref{eq:Msymmetrizable1}) yields 
$J_1(s|x_1)=\delta(s-(1-x_1))$.
This means that fixing $x_2$, both marginals $\mac(\cdot|\cdot,0,\cdot)$ and $\mac(\cdot|\cdot,1,\cdot)$ of User 1 are symmetrizable in the single-user sense, \ie as in \cite[Definition 2]{CsiszarNarayan:88p}. 
However, the AVMAC is not symmetrizable-$\Xset_1|\Xset_2$, because there is no $J_1(s|x_1)$ which symmetrizes both marginals at the same time. 
 In a similar manner, (\ref{eq:Msymmetrizable2}) implies a contradiction as well. 
 Therefore, the AVMAC is symmetrizable-$\Xset_1\times\Xset_2$, but non-symmetrizable-$\Xset_2|\Xset_1$ and non-symmetrizable-$\Xset_1|\Xset_2$. 
\end{example}

Intuitively, symmetrizability-$\Xset_1\times\Xset_2$ identifies a poor channel, where 
the jammer can impinge the communication scheme by randomizing the state sequence $S^n$ according to $J^n(s^n|\tx_1^n,\tx_2^n)=\prod_{i=1}^n J(s_i|\tx_{1,i},\tx_{2,i})$, for some codewords $\tx_1^n$ and $\tx_2^n$ in the codebooks of User 1 and User 2, respectively. 
 Suppose that the transmitted codewords are $x_1^n$ and $x_2^n$. The codewords $\tx_1^n$ and $\tx_2^n$ can be thought of as impostors transmitted by the jammer.  Now, since  the ``average channel" $\widetilde{W}$ is symmetric with respect to $(x_1^n,x_2^n)$ and  $(\tx_1^n,\tx_2^n)$, the codeword pairs appear to the receiver as equally likely. Similarly, if the AVMAC is symmetrizable-$\Xset_1|\Xset_2$, then the decoder  
confuses between $(x_1^n,x_2^n)$ and  $(\tx_1^n,x_2^n)$, and if the AVMAC is symmetrizable-$\Xset_2|\Xset_1$, then the decoder confuses between $(x_1^n,x_2^n)$ and  $(x_1^n,\tx_2^n)$, where $\tx_1$ and $\tx_2$ are the codewords chosen by the jammer.
 Indeed, by \cite{Gubner:90p}, if one of the conditions in Definition~\ref{def:Msymmetrizable} holds, then the capacity region of the AVMAC free of constraints has an empty interior. This means that the capacity region $\MCavcf$ is either an interval or $\{(0,0)\}$, \ie  one of the users or both have zero capacity.
 
Ahlswede and Cai \cite{AhlswedeCai:99p} proved that the three types of non-symmetrizability are 
not only a necessary condition for a non-empty capacity region, but they are sufficient conditions as well. This yields the following theorem.
\begin{theorem}[see \cite{Gubner:90p,Gubner:88z}{\cite[Theorem 1]{AhlswedeCai:99p}}]
\label{theo:Msymm0}
An AVMAC free of constraints has a capacity region with a non-empty interior, \ie $\interior{\MCavcf}\neq\emptyset$, if and only if it is non-symmetrizable-$\Xset_1\times\Xset_2$, non-symmetrizable-$\Xset_1|\Xset_2$, and non-symmetrizable-$\Xset_2|\Xset_1$. 
\end{theorem}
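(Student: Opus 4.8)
Since this statement is classical, the plan has two parts of very different difficulty. For the \emph{necessity} direction (``only if'') the plan is to show that symmetrizability in any one of the three senses forces a coordinate of the region to vanish, so that $\interior{\MCavcf}=\emptyset$. Suppose first $\avmac$ is symmetrizable-$\Xset_1|\Xset_2$, with kernel $J_1(s|x_1)$ as in \eqref{eq:Msymmetrizable1}. Given any deterministic code with $M_1=2^{nR_1}$ messages for User~$1$, I would fix a message $m_2$ for User~$2$, and for two messages $m_1\neq m_1'$ compare the jammer strategy that draws $S^n$ coordinatewise from $J_1$ matched to the impostor codeword $\enc_1(m_1')$ with the one matched to $\enc_1(m_1)$; by \eqref{eq:Msymmetrizable1} applied letterwise, the first strategy when $m_1$ is sent induces exactly the same output law as the second when $m_1'$ is sent, with $m_2$ fixed throughout. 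A now-standard averaging over a uniformly chosen impostor message and over which of the two labels is genuine then shows the error for User~$1$ cannot vanish unless $M_1$ is bounded, i.e.\ $R_1=0$. The same argument with \eqref{eq:Msymmetrizable2} gives $R_2=0$ under symmetrizability-$\Xset_2|\Xset_1$, and with \eqref{eq:MsymmetrizableJ} --- the jammer now using an impostor \emph{pair} $(\enc_1(m_1'),\enc_2(m_2'))$ and confusing $(m_1,m_2)$ with $(m_1',m_2')$ --- gives $R_1+R_2=0$ under symmetrizability-$\Xset_1\times\Xset_2$. In each case $\MCavcf$ lies on a coordinate axis or equals $\{(0,0)\}$.

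For the \emph{sufficiency} direction (``if''), assume $\avmac$ is non-symmetrizable in all three senses; the plan is to produce deterministic codes with $R_1=R_2=\eps>0$ in three stages. (1) \emph{Random coding.} Fix input types $P_{X_1},P_{X_2}$ for which all three mutual-information terms in \eqref{eq:MICf} are positive (such types exist, otherwise $\avmac$ would be symmetrizable), draw the two codebooks with codewords i.i.d.\ uniform over $\Tset^n(P_{X_1})$ and $\Tset^n(P_{X_2})$ at small rates $R_1,R_2$, and decode with a joint-typicality rule of Ahlswede--Cai type (among triples $(x_1^n(m_1),x_2^n(m_2),y^n)$ typical for $\mac$ under some state type, pick the one whose state type is least favourable). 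I would then bound $\E\,\cerr$ for each \emph{fixed} $s^n$, splitting the error into atypicality of the true triple and acceptance of a competitor with only $m_1$ wrong, only $m_2$ wrong, or both wrong; the three non-symmetrizability hypotheses are exactly what is needed, via the method of types, to show the empirical configuration $(x_1^n,\tx_1^n,x_2^n,s^n)$ (respectively the $\Xset_2$- and the joint analogues) producing such a confusion is bounded away from every symmetrizing configuration, whence each term is $\leq e^{-nc}$. (2) \emph{Elimination.} Since $\E\,\cerr\leq e^{-nc}$ for every $s^n$ separately, Ahlswede's elimination technique picks $N=n^2$ of these codebooks that, used with equal probability, have average error tending to $0$ \emph{uniformly over all} $s^n\in\Sset^n$ --- a random code costing only $2\log n$ bits of common randomness. (3) \emph{Derandomization.} The two encoders know the realized seed $j\in[1:N]$, so it remains to prepend a deterministic header of length $\nu=o(n)$ conveying $j$ reliably against an arbitrary jammer; concatenating the header with codebook $\code_j$ yields a fully deterministic code at rates $(R_1-o(1),R_2-o(1))$, and letting $R_1,R_2\downarrow 0$ slowly gives $\interior{\MCavcf}\neq\emptyset$.

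The hard part will be stage (3), which is also why Gubner's conjecture was nontrivial: one cannot realize the header by a single-user AVC with the other user's input frozen, since Example~\ref{example:MsymmJn1n2} exhibits an AVMAC that is non-symmetrizable-$\Xset_1|\Xset_2$ yet becomes single-user symmetrizable once $X_2$ is fixed, so no frozen-input header works in general. The crux --- and the main content of Ahlswede and Cai \cite{AhlswedeCai:99p} --- is to design a genuinely two-user, common-randomness-free, vanishing-rate protocol transmitting $O(\log n)$ bits reliably against the jammer, invoking all three non-symmetrizability properties simultaneously; together with the combinatorial lemmas converting non-symmetrizability into positive error exponents in stage (1), this is where essentially all the difficulty lies. (Alternatively, and this is the route taken in the sequel, the theorem is the special case $\plimit_1\geq\cost_{1,max}$, $\plimit_2\geq\cost_{2,max}$, $\Lambda\geq l_{max}$ of the constrained characterization, so a proof of the constrained direct part subsumes it.)
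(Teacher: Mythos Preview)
The paper does not prove Theorem~\ref{theo:Msymm0} independently; it cites Gubner and Ahlswede--Cai and later recovers the statement as the unconstrained special case of Theorem~\ref{theo:MCavc} (see Theorem~\ref{theo:NoConstraintsFull}). Your necessity argument is correct and is exactly the content of the paper's Lemma~\ref{lemm:Gubner}.

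For sufficiency, your three-stage plan (random coding, elimination, header) is not the route Ahlswede--Cai or the paper take, and it contains the circularity you yourself flag: stage~(3) asks for a deterministic, zero-rate, jammer-robust protocol to convey the seed $j$, but the existence of such a protocol is precisely the sufficiency claim at arbitrarily small positive rate. Once one has a direct deterministic construction at rate~$\eps$, stages~(1)--(2) are superfluous. What Ahlswede--Cai actually do, and what the paper generalizes, is bypass elimination entirely for this step: a \emph{deterministic} codebook is exhibited directly (the paper's Lemma~\ref{lemm:McodeBsets}, via a Csisz\'ar--Narayan random-selection argument with double-exponential concentration over a polynomial number of type constraints, so one codebook satisfies all of \eqref{eq:M11ebn}--\eqref{eq:M2c3ebn} simultaneously), and the decoder of Definition~\ref{def:MLdecoder} is applied to it. The three non-symmetrizability hypotheses enter through the disambiguity lemma (Lemma~\ref{lemm:MdisDec}), not through a header protocol. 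The elimination technique appears only in Jahn's dichotomy (Theorem~\ref{theo:MavcC0}), which upgrades ``nonempty interior'' to ``equals the divided-randomness region''; it is not used to establish the nonempty interior itself. Your closing parenthetical --- that the theorem follows from the constrained characterization with $\plimit_k\geq\cost_{k,\max}$ and $\Lambda\geq l_{\max}$ --- is exactly the paper's route, and is the cleanest way to close the argument here.
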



The following theorem 
combines the results by Gubner \cite{Gubner:90p,Gubner:88z} and Ahlswede and Cai \cite{AhlswedeCai:99p}.
This statement was also given by Boche and Wiese \cite{WieseBoche:13p,Wiese:13z}, who considered the AVMAC with conferencing encoders. 
\begin{theorem}[see {\cite[Theorem 8]{WieseBoche:13p}}]
\label{theo:WieseBoche}
There are four scenarios for the capacity region of the AVMAC free of constraints:
\begin{enumerate}[a)]
\item
If $\mac$ is not symmetrizable-$\Xset_1\times\Xset_2$, -$\Xset_1|\Xset_2$, nor -$\Xset_2|\Xset_1$, then
\begin{align}
\MCavcf=\MdrICavf \,.
\end{align}

\item
If $\mac$ is not symmetrizable-$\Xset_1\times\Xset_2$ nor -$\Xset_2|\Xset_1$, but symmetrizable-$\Xset_1|\Xset_2$,  then 
\begin{align}
\MCavcf\subseteq \left\{(0,R_2) \,:\; R_2\leq \min_{q(s)} \max_{P_{X_1} P_{X_2}} 
I_q(X_2;Y|X_1) \right\} \,.
\label{eq:MCavcf2}
\end{align}

\item
If $\mac$ is not symmetrizable-$\Xset_1\times\Xset_2$ nor -$\Xset_1|\Xset_2$, but symmetrizable-$\Xset_2|\Xset_1$,  then 
\begin{align}
\MCavcf\subseteq \left\{(R_1,0) \,:\; R_1\leq \min_{q(s)} \max_{P_{X_1} P_{X_2}} 
I_q(X_1;Y|X_2) \right\} \,.
\label{eq:MCavcf3}
\end{align}

\item
In all other cases, 
\begin{align}
\MCavcf=\left\{(0,0) \right\} \,.
\end{align}
\end{enumerate}
\end{theorem}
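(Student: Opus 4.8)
The plan is to deduce all four cases from three results already in the excerpt — Ahlswede's dichotomy (Theorem~\ref{theo:MavcC0}), the non-emptiness criterion of Gubner and of Ahlswede and Cai (Theorem~\ref{theo:Msymm0}), and Jahn's divided-randomness formula (Theorem~\ref{theo:MavcC0R}) — together with a short symmetrization converse, in the spirit of Csisz\'ar--Narayan and of Gubner, that turns each symmetrizability notion into a vanishing rate. So almost nothing here is new: the work is to package these into the four-way statement.

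\emph{Case a).} If $\mac$ is symmetrizable in none of the three senses, then $\interior{\MCavcf}\neq\emptyset$ by Theorem~\ref{theo:Msymm0}; the empty-interior alternative in Theorem~\ref{theo:MavcC0} is thereby excluded, so $\MCavcf=\MdrCavf$, which equals $\MdrICavf$ by Theorem~\ref{theo:MavcC0R}.

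\emph{Cases b)--d).} In each of these at least one symmetrizability condition holds, so $\interior{\MCavcf}=\emptyset$ by Theorem~\ref{theo:Msymm0}; what remains is to locate $\MCavcf$ on a coordinate axis or at the origin, and, in the axis cases, to bound the free coordinate. For the converse I would argue as follows. Suppose $\mac$ is symmetrizable-$\Xset_1|\Xset_2$ via $J_1(s|x_1)$. The jammer draws an impostor codeword $\tx_1^n$ of User~1 uniformly at random and sets $S^n\sim\prod_i J_1(\cdot|\tx_{1,i})$; applying (\ref{eq:Msymmetrizable1}) coordinatewise for every value of User~2's input, the output law of a true pair $(x_1^n(m_1),x_2^n(m_2))$ under this state equals that of $(\tx_1^n,x_2^n(m_2))$ with the state-generating codeword swapped, so whenever User~1 has at least two messages the decoder confuses $(m_1,m_2)$ with $(\tm_1,m_2)$, forcing the average error bounded away from $0$; hence $C_1=0$. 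Symmetrically, symmetrizable-$\Xset_2|\Xset_1$ via $J_2(s|x_2)$ gives $C_2=0$, and symmetrizable-$\Xset_1\times\Xset_2$ via $J(s|x_1,x_2)$ — now randomizing an impostor \emph{pair} $(\tx_1^n,\tx_2^n)$ and using (\ref{eq:MsymmetrizableJ}) — lets the jammer confuse any two distinct message pairs, so $C_1=C_2=0$. Since $(0,0)$ is trivially achievable, this yields $\MCavcf=\{(0,0)\}$ in case d) (where -$\Xset_1\times\Xset_2$ holds, or both -$\Xset_1|\Xset_2$ and -$\Xset_2|\Xset_1$ hold), pins every achievable pair to $R_1=0$ in case b), and to $R_2=0$ in case c).

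\emph{Bounding the free coordinate, and the hard part.} In case b) it remains to bound $R_2$. Since a deterministic code is a special case of a divided-randomness code, $\MCavcf\subseteq\MdrCavf=\MdrICavf$ by Theorem~\ref{theo:MavcC0R}, so any $(0,R_2)\in\MCavcf$ satisfies $R_2\leq\min_{q(s|u)} I_q(X_2;Y|X_1,U)$ for some $P_U P_{X_1|U} P_{X_2|U}$ in (\ref{eq:MICf}), the remaining two constraints there being automatic when $R_1=0$. Restricting the inner minimization to state laws $q(s)$ that do not depend on $u$ gives, for every such $q$, $\min_{q(s|u)} I_q(X_2;Y|X_1,U)\leq\max_u I_q(X_2;Y|X_1,U{=}u)\leq\max_{P_{X_1}P_{X_2}} I_q(X_2;Y|X_1)$, whence $R_2\leq\min_{q(s)}\max_{P_{X_1}P_{X_2}} I_q(X_2;Y|X_1)$, which is exactly (\ref{eq:MCavcf2}); case c) is symmetric. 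The only delicate points are the precise form of the symmetrization converse above and the $\max\min\leq\min\max$ step just used; the genuinely hard input, namely ``non-symmetrizable $\Rightarrow$ non-empty interior'' behind case a), is Ahlswede and Cai's theorem and is invoked as a black box.
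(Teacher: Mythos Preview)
Your proposal is correct and is essentially the assembly the paper has in mind: Theorem~\ref{theo:WieseBoche} is stated in the paper as a citation to Wiese--Boche and is introduced as ``combin[ing] the results by Gubner and Ahlswede and Cai,'' so there is no separate proof in the paper to compare against beyond the ingredients you invoke (Theorems~\ref{theo:MavcC0R}, \ref{theo:MavcC0}, \ref{theo:Msymm0}). Your symmetrization converses are exactly what the paper later records as Lemma~\ref{lemm:Gubner}, and your bound on the free coordinate via $\MCavcf\subseteq\MdrICavf$ together with the $\max\min\leq\min\max$ step is a clean way to obtain \eqref{eq:MCavcf2}--\eqref{eq:MCavcf3}.
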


\begin{remark}
\label{remark:WieseBoche}
Observe that in Case b) and Case c) of Theorem~\ref{theo:WieseBoche}, the characterization is incomplete.
As pointed out by Wiese and Boche, this has remained an open problem for nearly 20 years
\cite[Remark 9]{WieseBoche:13p} (see also \cite[Remark 5.6]{Wiese:13z}).
At first glance, it may appear as if achievability in Cases b) and c) immediately follows from the capacity theorem of the single user AVC \cite{CsiszarNarayan:88p}. Consider Case c), and denote the channel from $X_1$ to $Y$, for a fixed $x_2\in\Xset_2$, by 
$W^{(x_2)}_{Y|X_1,S}=\mac(\cdot|\cdot,x_2,\cdot)$. Then, based on the results by Csisz\'{a}r and Narayan 
for the single user AVC \cite{CsiszarNarayan:88p}, if $W^{(x_2)}_{Y|X_1,S}$ is non-symmmetrizable-$\Xset_1$ for  some $x_2\in\Xset_2$, then the capacity of User 1 is positive. Furthermore, if $W^{(x_2)}_{Y|X_1,S}$ is non-symmetrizable for all $x_2\in\Xset_2$, then User 1 can achieve every rate 
$R_1<\min_{q(s)} \max_{P_{X_1} P_{X_2}} I_q(X_1;Y|X_2)$. However, in Case c), knowing that 
the AVMAC is non-symmetrizable-$\Xset_1|\Xset_2$ does not guarantee that $W^{(x_2)}_{Y|X_1,S}$ is non-symmetrizable for all $x_2\in\Xset_2$. Actually, it only guarantees that the channels $W^{(x_2)}_{Y|X_1,S}$,
$x_2\in\Xset_2$, are not all symmetrized by a single  $J_1(s|x_1)$ (see Example~\ref{example:MsymmJn1n2}). 
 Therefore, it is not immediately clear whether the conditions in Cases b) and c) are sufficient for achievability.
 We are going to fill this gap and show that (\ref{eq:MCavcf2})
and (\ref{eq:MCavcf3}) hold with equality.
\end{remark}

\begin{remark}
The dichotomy property in Theorem~\ref{theo:MavcC0} was proved using Ahlswede's Elimination Technique \cite{Ahlswede:78p}, where 
the encoder transmits the random elements $\gamma_1$ and $\gamma_2$ over a negligible portion of the blocklength.
The Elimination Technique only works without state constraints \cite{CsiszarNarayan:88p}, since positive capacity under a state constraint does not guarantee reliable transmission over a fraction of the blocklength.
Moreover, as Csisz{\'{a}}r and Narayan demonstrated in the single user setting,  the dichotomy property does not hold when state constraints are imposed on the jammer. That is, the deterministic code capacity can be lower than the capacity with shared randomness, even if positive rates are achievable with deterministic codes.
This demonstrates the significant effect that constraints have on the behavior of the deterministic code capacity of arbitrarily varying channels.
\end{remark}

\subsubsection{Divided-Randomness Capacity Region}
Gubner and Hughes \cite{GubnerHughes:95p} considered the AVMAC under input and state constraints, and determined the divided-randomness capacity region, \ie assuming  each encoder shares an independent random element with the decoder. Their result is given below. Define
\begin{align}
\label{eq:MdrICav}
\inC^{\rstarC\rstarC}\hspace{-0.1cm}(\avmac)=\bigcup_{ \substack{ P_U P_{X_1|U} P_{X_2|U} \,: \\ \E \cost_k(X_k)\leq \plimit_k \,,\; k=1,2\,. }}
\left\{
\begin{array}{lrl}
(R_1,R_2) \,:\; & R_1 		\leq&   \min\limits_{q(s|u) \,:\; \E_q l(S)\leq\Lambda} I_q(X_1;Y|X_2,U)  \,, \\
								& R_2 		\leq&   \min\limits_{q(s|u) \,:\; \E_q l(S)\leq\Lambda} I_q(X_2;Y|X_1,U)  \,, \\
								& R_1+R_2 \leq&   \min\limits_{q(s|u) \,:\; \E_q l(S)\leq\Lambda} I_q(X_1,X_2;Y|U)  
\end{array}
\right\}
 \,,
\end{align}
with $(U,X_1,X_2,S)\sim P_U(u) P_{X_1|U}(x_1|u) P_{X_2|U}(x_2|u) q(s|u)$. 
It is shown in \cite{GubnerHughes:95p} that the region above is not necessarily convex.
In Remark~\ref{rem:rCavConvex}, we discuss the interpretation of this property and the connection to the statistical independence between
the variables $S$ and $U$ above.

\begin{theorem}[see {\cite{GubnerHughes:95p}}]
\label{theo:GubnerHughes}
The divided-randomness capacity region of the AVMAC under input constraints $(\plimit_1,\plimit_2)$ and state constraint $\Lambda$ is given by 
\begin{align}
\MdrCav=\inC^{\rstarC\rstarC}\hspace{-0.1cm}(\avmac) \,.
\end{align}
\end{theorem}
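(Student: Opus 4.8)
The plan is to prove the two inclusions $\MdrCav\supseteq\MdrICav$ (achievability) and $\MdrCav\subseteq\MdrICav$ (converse) separately, having first recorded that $\MdrICav$ is closed, so that for the direct part it suffices to reach every rate pair in its interior. Closedness follows from the standard support-lemma cardinality bound on $U$, which makes the union in (\ref{eq:MdrICav}) range over a compact set, together with continuity in $(P_U,P_{X_1|U},P_{X_2|U})$ of each of the three functionals $\min_{q(s|u):\E_q l(S)\le\Lambda} I_q(X_1;Y|X_2,U)$, etc.; here the constraint set $\{q(s|u):\E_q l(S)\le\Lambda\}$ depends continuously on $P_U$ and is nonempty (take $q(\cdot|u)=\delta_{s_0}$, using $l(s_0)=0$), so the minima are continuous.

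For achievability I would use coded time sharing together with the averaging afforded by the divided randomness, following Gubner and Hughes \cite{GubnerHughes:95p}. Fix $(P_U,P_{X_1|U},P_{X_2|U})$ with $\E\,\cost_k(X_k)\le\plimit_k$ and a rate pair strictly inside the associated pentagon, and fix a deterministic time-sharing word $u^n$ of type close to $P_U$. Let the private random element $\gamma_k$ of Encoder $k$ index an independent i.i.d.\ drawing of its codebook, the $2^{nR_k}$ codewords being drawn according to $\prod_{i=1}^n P_{X_k|U}(\cdot|u_i)$; the decoder, which knows $(u^n,\gamma_1,\gamma_2)$, uses a state-universal rule (a joint-typicality / minimum-conditional-entropy decoder robust to every $s^n$ with $l^n(s^n)\le\Lambda$), exactly as in the single-user constrained AVC \cite{CsiszarNarayan:88p}. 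The error analysis is carried out state by state: for a fixed $s^n$ with $l^n(s^n)\le\Lambda$, averaging the error probability over the two random codebooks, standard multiple-access random-coding bounds show it is small whenever the rates lie in the pentagon attached to the conditional type $q(s|u)$ of $s^n$ given $u^n$, and that conditional type obeys $\sum_u\hP_{u^n}(u)\,\E_{q(\cdot|u)}l(S)=l^n(s^n)\le\Lambda$. A union bound over the polynomially many conditional types, purification of the codebooks so as to meet the (average) input constraints, and letting the rate pair tend to a boundary point complete the direct part. It is precisely the freedom of the nonstationary jammer to redistribute its $l$-budget among the \emph{public} time-sharing classes that forces the state law in (\ref{eq:MdrICav}) to be conditional, $q(s|u)$, rather than unconditional.

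For the converse I would take a sequence of $(2^{nR_1},2^{nR_2},n,\eps_n)$ divided-randomness codes with $\eps_n\to0$. Since $\Gamma_1$ and $\Gamma_2$ are independent, the two encoders share no common randomness, so any coordinating structure they exploit reduces to a deterministic word known to the jammer as well; accordingly the jammer may pick a deterministic sequence $v^n$ of empirical distribution $P_V$ and use the memoryless, position-dependent state law $q(s^n)=\prod_{i=1}^n\bar q(s_i|v_i)$ with $\sum_v P_V(v)\,\E_{\bar q(\cdot|v)}l(S)\le\Lambda-\delta$, which lies in $\pLSpaceSn$ with probability close to $1$ on the event $l^n(S^n)\le\Lambda$. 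Applying Fano's inequality to the three error events and single-letterizing with the auxiliary $U_i=(v_i,\Gamma_1,\Gamma_2,M_2,Y^{i-1})$ (and the symmetric choice with $M_1$ in place of $M_2$), then introducing a uniform time index $T$ and $U=(U_T,T)$, yields $R_1\le I_q(X_1;Y|X_2,U)+o(1)$, $R_2\le I_q(X_2;Y|X_1,U)+o(1)$, and $R_1+R_2\le I_q(X_1,X_2;Y|U)+o(1)$ for a law of the required product form $P_U P_{X_1|U}P_{X_2|U}\,q(s|u)$, with $X_1\perp X_2\mid U$ because $X_{2,i}$ (resp.\ $X_{1,i}$) is a deterministic function of the conditioning variables, and with $q(s|u)$ depending on $u$ through its $v$-coordinate. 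Optimizing over $(P_V,\bar q)$ and letting $\delta\to0$ places $(R_1,R_2)$ in $\MdrICav$.

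The hard part is the converse, specifically ensuring tightness: one must make the jammer effective enough to realize the \emph{minimum over all} $q(s|u)$ in (\ref{eq:MdrICav}), not merely over the restricted family generated by an oblivious choice of $v^n$, since the code does not see $v^n$ and the empirical correlation between $v^n$ and the realized codewords is not directly under the jammer's control — a careless argument only produces the larger region in which the state law is unconditional. Resolving this requires using the code's own empirical statistics to choose $v^n$, in effect extracting from the divided-randomness code the public time-sharing word on which it must be coordinating, so that the auxiliary $U$ produced by the single-letterization is the one the jammer actually correlates with; then the passage $\delta\to0$ together with the cardinality bound on $U$ collapses the family of per-$v^n$ bounds onto $\MdrICav$. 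A minor but genuine point, inherited from the single-user constrained theory \cite{CsiszarNarayan:88p}, is the $\Lambda-\delta$ slack used to convert the type constraint $\E_q l(S)\le\Lambda-\delta$ into the almost-sure constraint $l^n(S^n)\le\Lambda$ while keeping the conditioning event of overwhelming probability; the remaining ingredients (random-coding exponents, purification, cardinality bounds) are routine.
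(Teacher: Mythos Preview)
The paper does not prove Theorem~\ref{theo:GubnerHughes}; it is quoted as a prior result of Gubner and Hughes and no argument is given. So there is nothing in the paper to compare your proposal against, and the relevant question is whether your sketch stands on its own.

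Your achievability outline is essentially the Gubner--Hughes scheme and is fine. The converse, however, has a real gap, and it is precisely the one you flag as ``the hard part.'' By putting $\Gamma_1,\Gamma_2$ (and $M_k,Y^{i-1}$) into the auxiliary $U$, you force yourself into a mismatch: the formula (\ref{eq:MdrICav}) minimizes over \emph{all} $q(s|u)$, whereas the jammer you construct can only make the state depend on the time index (your $v_i$), never on $(\Gamma_1,\Gamma_2)$. Your proposed fix---``extracting from the divided-randomness code the public time-sharing word''---is not an argument; the code need not have any deterministic time-sharing structure for the jammer to latch onto, and nothing you wrote shows that the restricted minimum you obtain lands inside the union (\ref{eq:MdrICav}).

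The clean resolution is to take $U=T$ (the bare time index) and drop $\Gamma_1,\Gamma_2$ from the auxiliary altogether. The point you are missing is that divided randomness already gives $X_{1,i}\perp X_{2,i}$ \emph{unconditionally}: $X_{1,i}$ is a function of $(M_1,\Gamma_1)$ and $X_{2,i}$ of $(M_2,\Gamma_2)$, and all four are mutually independent. Hence $P_{X_1|U}P_{X_2|U}$ holds with $U=T$, without any need to condition on the private randomness. Fano plus the Markov chain $(M_1,M_2,\Gamma_1,\Gamma_2,Y^{i-1})\Cbar(X_{1,i},X_{2,i},S_i)\Cbar Y_i$ then gives $R_1\le\frac{1}{n}\sum_i I_{q_i}(X_{1,i};Y_i|X_{2,i})+o(1)$ directly (and similarly for $R_2$ and the sum rate), and the jammer's nonstationary choice $q_i(\cdot)$ is now a fully general $q(s|u)$ with $u=i$. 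The $\Lambda-\delta$ slack and the passage $\delta\to 0$ are handled exactly as you say. This is how the converse in \cite{GubnerHughes:95p} goes, and it is also how the paper's own deterministic-code converse in Appendix~\ref{app:MCavc} (which faces the same issue, since deterministic codes have no shared randomness either) proceeds: see (\ref{eq:MConvR2})--(\ref{eq:MConvSum}) with $U$ uniform on $[1{:}n]$.
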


In the next sections, we determine both the random code capacity region and the deterministic code capacity region.

\section{Main Results -- Random Code Capacity Region}
In this section, we establish the random code capacity region of the AVMAC under input and state constraints.
To this end, we first give an auxiliary result on the compound MAC.

\subsection{The Compound MAC}
We begin with the capacity theorem for the compound MAC $\Mcompound$ under input constraints $(\plimit_1,\plimit_2)$ and state constraint $\Lambda$. 
This is an auxiliary result, obtained by a simple extension of related work (see \cite{MYK:05c}).
Let
\begin{align}
\MICcompound=
\bigcup_{ \substack{ P_U P_{X_1|U} P_{X_2|U} \,: \\ \E \cost_k(X_k)\leq \plimit_k \,,\; k=1,2\,. } }
\left\{
\begin{array}{lrl}
(R_1,R_2) \,:\; & R_1 		\leq&   \inf_{q\in\Qset} I_q(X_1;Y|X_2,U)  \,, \\
								& R_2 		\leq&   \inf_{q\in\Qset} I_q(X_2;Y|X_1,U)  \,, \\
								& R_1+R_2 \leq&   \inf_{q\in\Qset} I_q(X_1,X_2;Y|U)  
\end{array}
\right\}
 \,.
\end{align}
with $(U,X_1,X_2,S)\sim P_U(u) P_{X_1|U}(x_1|u) P_{X_2|U}(x_2|u) q(s)$.

\begin{lemma}
\label{lemm:MCcompound}
The capacity region of the compound MAC $\Mcompound$ is given by
\begin{align}
\MCcompound=\MICcompound \,,
\end{align}
and it is identical to the divided-randomness capacity region and the random code capacity region, \ie $\MrCcompound=\MdrCcompound=\MCcompound$. 
\end{lemma}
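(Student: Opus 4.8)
The plan is to prove the chain $\MCcompound \subseteq \MrCcompound$ (trivial), $\MrCcompound = \MdrCcompound$ (sandwich), $\MdrCcompound \subseteq \MICcompound$ (converse), and $\MICcompound \subseteq \MCcompound$ (achievability with deterministic codes), so that all four regions collapse. Since $\MCcompound \subseteq \MdrCcompound \subseteq \MrCcompound$ holds by definition, it suffices to establish the converse $\MrCcompound \subseteq \MICcompound$ and the deterministic achievability $\MICcompound \subseteq \MCcompound$. For the converse I would proceed in the standard Fano way: fix a random code of rate pair $(R_1,R_2)$ and error probability $\eps$ that meets the input constraints in expectation; for each fixed $q\in\Qset$, the state sequence is i.i.d.\ $\sim q$, so the channel $W_{Y^n|X_1^n,X_2^n}$ averaged over $S^n$ is a memoryless MAC $\overline{W}^q_{Y|X_1,X_2}=\sum_s \mac(\cdot|\cdot,\cdot,s)q(s)$. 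Applying Fano's inequality and the single-letterization for the ordinary MAC to this averaged channel, with the time-sharing variable $U$ identified as the uniform index on $[1:n]$ together with the common-randomness variable $\gamma$, yields the bounds $R_1 \le I_q(X_1;Y|X_2,U)+\delta_n$, etc., for a joint distribution of the form $P_U P_{X_1|U} P_{X_2|U} q(s)$ — note the crucial point that $S$ is independent of $(U,X_1,X_2)$ because in the compound model the state law $q$ does not depend on anything, which is exactly the form in $\MICcompound$. Taking $\inf_{q\in\Qset}$ and then $n\to\infty$ gives $\MrCcompound \subseteq \MICcompound$.

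For achievability I would use a deterministic-code argument on the compound MAC, essentially as in the classical compound-MAC treatment (see \cite{MYK:05c}), adapted to carry the state constraint. Fix an input distribution $P_U P_{X_1|U} P_{X_2|U}$ satisfying $\E\cost_k(X_k)\le\plimit_k$, and a rate pair strictly inside the corresponding rectangle. Generate a time-sharing sequence $u^n$ of type close to $P_U$, and generate the codebooks $\{x_1^n(m_1)\}$, $\{x_2^n(m_2)\}$ conditionally i.i.d.\ given $u^n$; by standard type-counting and the law of large numbers the codewords satisfy $\cost_k^n(x_k^n(m_k))\le\plimit_k+\eps$ for all messages (and one rescales $\plimit_k$ slightly, using the zero-cost symbols $a,b$, to land back inside the constraint). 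The decoder uses joint-typicality decoding with respect to a single fixed reference distribution — here one can take the worst-case or simply decode by looking for the unique pair jointly typical with $y^n$ under some $q$ compatible with the output type; since $\Qset$ is a fixed compact family, a finite grid / uniform-continuity argument reduces the uncountable union of error events over $q\in\Qset$ to finitely many, and each is controlled by the usual MAC error exponents as long as the rates lie below $I_q(X_1;Y|X_2,U)$, $I_q(X_2;Y|X_1,U)$, $I_q(X_1,X_2;Y|U)$ for that $q$. Since the rates are below the $\inf_{q\in\Qset}$ of these quantities, averaging over the random codebook gives vanishing error uniformly in $q\in\Qset$, and a standard expurgation converts this to a deterministic code with the same property, proving $\MICcompound \subseteq \MCcompound$.

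The main obstacle is not the single-letterization — that is routine — but handling the state constraint together with the compound (uncountable) family $\Qset\subseteq\pLSpaceS$ uniformly. Specifically: (i) in achievability one must ensure the typicality decoder works simultaneously for every $q\in\Qset$ with $\E_q l(S)\le\Lambda$, which requires a uniform (finite-grid plus continuity of mutual information in $q$) argument so that the exponentially small error bounds can be taken uniform; and (ii) one must verify that the input-cost rescaling and the conditional codeword generation are compatible, i.e.\ that shrinking $\plimit_k$ by $\eps$ and using the zero-cost letters $a,b$ only shrinks the achievable region by an amount that vanishes as $\eps\to0$, so that the closure recovers all of $\MICcompound$. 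Everything else — Fano, the identification of $U$, the expurgation from random to deterministic codes — follows the well-trodden path for compound and arbitrarily varying MACs, and the constraint-free statement at the end of Subsection~\ref{sec:MLNOsi} (with $\plimit_k\ge\cost_{k,max}$, $\Lambda\ge l_{max}$) is the special case obtained by dropping these uniformity refinements.
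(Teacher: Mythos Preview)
Your proposal is correct and follows essentially the same route as the paper's proof in Appendix~\ref{app:MCcompound}: a Fano-based converse with $U=(T,\gamma)$ where $T$ is the uniform time index, and a coded-time-sharing achievability with universal joint-typicality decoding over all admissible $q$, followed by expurgation to a deterministic code. The only cosmetic difference is that where you invoke a ``finite grid / uniform-continuity'' argument to handle the uncountable $\Qset$, the paper works directly with the type class $\tQ$ of (\ref{eq:tQ}) and bounds the union by the polynomial factor $(n+1)^{|\Sset|}$; these are the same device.
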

The proof of Lemma~\ref{lemm:MCcompound} is given in Appendix~\ref{app:MCcompound}. 

\begin{remark}
\label{rem:TimSharCompound}
Regardless of the statement in Lemma~\ref{lemm:MCcompound}, the capacity region of the compound MAC must be convex, due to the operational time sharing argument.
That is, if  
$(R_{1,u},R_{2,u})$, $u\in\Uset$, are achievable rate pairs, then any convex combination 
\begin{align}
\left( \sum_{u\in\Uset} \theta_u R_{1,u},\sum_{u\in\Uset} \theta_u R_{2,u} \right)
\end{align}
 is achievable, 
for $\theta_u\geq 0$, $\sum_{u\in\Uset}\theta_u=1$. To achieve this rate pair, one can employ a sequence of consecutive codes 
that achieve $(R_{1,u},R_{2,u})$, such that $\theta_u$ is the fraction of the corresponding code length
from the total blocklength (see \cite[Section 15.3.3]{CoverThomas:06b}).
In the classical setting, the random variable $U$ is referred to as the time sharing variable, since $P_U(u)$ can be interpreted as the coefficient $\theta_u$ in the convex combination. We explain below why this interpretation is lacking in the case of the compound MAC.
Furthermore, we will see that operational time sharing is impossible for the AVMAC, yet the convexity of $\MICcompound$ will play a role (see the remarks below Theorem~\ref{theo:MrCav}).

For the classical MAC, achievability of the capacity region can be established by first considering independent inputs $(X_1,X_2)\sim P_{X_1}(x_1)P_{X_2}(x_2)$, and then generalizing to 
$P_U(u)P_{X_1|U}(x_1|u)P_{X_2|U}(x_2|u)$ through the operational time sharing argument. 
However, for the compound MAC, a straightforward application of the operational time sharing argument is insufficient,  because
\begin{align}
\inf_{q\in\Qset} I_q(X_1,X_2;Y|U)\geq \sum_{u\in\Uset} p(u) \cdot \inf_{q\in\Qset} I_q(X_1,X_2;Y|U=u) \,,
\label{eq:McompConv}
\end{align}
in general, and similarly for $I_q(X_1;Y|X_2,U)$ and $I_q(X_2;Y|X_1,U)$. Hence, achievability of the convex combination in the RHS of (\ref{eq:McompConv}) does \emph{not} immediately imply achievability of the LHS.

We deduce that the external variable $U$ may not represent the \emph{operational} time sharing strategy as in the classical sense. Nevertheless, we associate $U$  with \emph{coded} time sharing \cite{HanKobayashi:81p} \cite[Section 4.5.3]{ElGamalKim:11b}. Specifically, to prove Lemma~\ref{lemm:MCcompound}, we use a coded time sharing scheme,  where a time sharing sequence $U^n$ is generated, and then 
a single codebook is
 selected accordingly.
\end{remark}

\subsection{The AVMAC}
We determine the random code capacity region  of the AVMAC under input constraints $(\plimit_1,\plimit_2)$ and state constraint $\Lambda$.
As opposed to Theorem~\ref{theo:GubnerHughes} by \cite{GubnerHughes:95p}, considering divided-randomness coding, we address the  case where  the three parties, two encoders and decoder, share randomness together.
The random code derivation is based on our result on the compound MAC and a simple extension of Ahlswede's RT. 

Define $\MrICav\triangleq \MICcompound \big|_{\Qset=\pLSpaceS} 
$, \ie 
\begin{align}
\MrICav=
\bigcup_{ \substack{ P_U P_{X_1|U} P_{X_2|U} \,: \\ \E \cost_k(X_k)\leq \plimit_k \,,\; k=1,2\,. } }
\left\{
\begin{array}{lrl}
(R_1,R_2) \,:\; & R_1 		\leq&   \min\limits_{q(s) \,:\; \E_q l(S)\leq\Lambda} I_q(X_1;Y|X_2,U)  \,, \\
								& R_2 		\leq&   \min\limits_{q(s) \,:\; \E_q l(S)\leq\Lambda} I_q(X_2;Y|X_1,U)  \,, \\
								& R_1+R_2 \leq&   \min\limits_{q(s) \,:\; \E_q l(S)\leq\Lambda} I_q(X_1,X_2;Y|U)  
\end{array}
\right\}
 \,.
\label{eq:MrICav}
\end{align}
with $(U,X_1,X_2,S)\sim P_U(u) P_{X_1|U}(x_1|u) P_{X_2|U}(x_2|u) q(s)$. 
%
Notice the resemblance between the random code capacity region formula (\ref{eq:MrICav}) and the divided-randomness capacity region formula (\ref{eq:MdrICav}), as the only difference between the formulas is that the state $S$ and the ``time-sharing" variable $U$   are statistically independent in the random code case, while  $S$ and $U$ are dependent in the divided-randomness case.
An intuitive interpretation is given in the remark. 
\begin{theorem}
\label{theo:MrCav}
The random code capacity region of the AVMAC $\avmac$ under input constraints $(\plimit_1,\plimit_2)$ and state constraint $\Lambda$ is given by 
\begin{align}
\MrCav=\MrICav \,.
\end{align}
\end{theorem}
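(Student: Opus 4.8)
\emph{Overview.} The plan is to prove the two inclusions $\MrICav\subseteq\MrCav$ and $\MrCav\subseteq\MrICav$ separately, using Lemma~\ref{lemm:MCcompound} on the compound MAC as the engine together with Ahlswede's Robustification Technique (RT). Recall that by definition $\MrICav=\MICcompound\big|_{\Qset=\pLSpaceS}$, so everything reduces to transferring compound-MAC statements to the AVMAC through random permutations of the time axis.

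\emph{Achievability.} Fix a rate pair $(R_1,R_2)$ in the interior of $\MrICav$. The achievability part of Lemma~\ref{lemm:MCcompound}, specialized to the constrained compound MAC with $\Qset=\pLSpaceS$, provides for all large $n$ a deterministic $(2^{nR_1},2^{nR_2},n)$ code $\code_0$ obeying the input constraints (\ref{eq:MinputCstrict}) with $\err(q,\code_0)\le 2^{-an}$ for every i.i.d.\ state distribution $q\in\pLSpaceS$ and some $a>0$ (exponential decay is standard for the compound MAC at rates strictly inside the region). I would then define the random code $\gcode$ by drawing a uniformly random permutation $\pi$ of the $n$ coordinates and using the codewords $\pi\enc_{k,0}(m_k)$, $k=1,2$, with decoder $y^n\mapsto\dec_0(\pi^{-1}y^n)$. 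Since $\pi$ is shared by both encoders and the decoder, this is a legitimate random code in the sense of Definition~\ref{def:McorrC}, and because permutations leave empirical averages unchanged, each realization still satisfies $\cost^n_k(\cdot)\le\plimit_k$, so (\ref{eq:McodeInputCr}) holds with equality in the worst case. The key step is the RT bound: for a fixed $s^n$, the conditional error of the permuted code averaged over $\pi$ equals the average of $\cerr(\code_0)$ over the type class $\Tset^n(\hP_{s^n})$ (orbit--stabilizer), and since the i.i.d.\ measure $q^n$ with $q=\hP_{s^n}$ is uniform on that type class and assigns it mass at least $(n+1)^{-|\Sset|}$, one gets $\frac1{n!}\sum_{\pi}\cerr(\code_{0,\pi})\le (n+1)^{|\Sset|}\,\err(\hP_{s^n},\code_0)$. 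The crucial observation for constraints is that $l^n(s^n)\le\Lambda$ forces $\E_{\hP_{s^n}}l(S)=l^n(s^n)\le\Lambda$, i.e.\ $\hP_{s^n}\in\pLSpaceS$, so the right-hand side is at most $(n+1)^{|\Sset|}2^{-an}\to0$. Integrating against any $q(s^n)\in\pLSpaceSn$ (which is supported on $\{l^n(s^n)\le\Lambda\}$) gives $\err(q,\gcode)\le(n+1)^{|\Sset|}2^{-an}$, hence $(R_1,R_2)$ is achievable and, taking closures, $\MrICav\subseteq\MrCav$.

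\emph{Converse.} Let $(R_1,R_2)\in\MrCav$ with a corresponding $(2^{nR_1},2^{nR_2},n,\eps_n)$ random code $\gcode$, $\eps_n\to0$, satisfying (\ref{eq:MLrcodeReq}). Fix any i.i.d.\ $q$ with $\E_q l(S)<\Lambda$. With $S^n$ i.i.d.\ $\sim q$, the event $\{l^n(S^n)>\Lambda\}$ has probability $\eta_n\to0$ by the weak law of large numbers; conditioning $q^n$ on its complement produces a distribution $\tilde q^n\in\pLSpaceSn$, and splitting the error according to whether $l^n(S^n)\le\Lambda$ yields $\err(q^n,\gcode)\le\err(\tilde q^n,\gcode)+\eta_n\le\eps_n+\eta_n\to0$. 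Since the input-constraint requirement on $\gcode$ is exactly that of a random code for the compound MAC, $\gcode$ is a good random code for the compound MAC with state family $\{q:\E_q l(S)<\Lambda\}$, so the converse part of Lemma~\ref{lemm:MCcompound} places $(R_1,R_2)$ in $\MICcompound$ evaluated at $\Qset=\{q:\E_q l(S)<\Lambda\}$. Because the mutual information functionals are continuous in $q$ and $\{q:\E_q l(S)<\Lambda\}$ is dense in $\pLSpaceS$ (it is nonempty since $l(s_0)=0$ for some $s_0$), the infima over the two families agree, so this region is $\MrICav$; taking the closure gives $\MrCav\subseteq\MrICav$.

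\emph{Main obstacle.} I expect the delicate point to be the mismatch between the almost-sure state constraint $l^n(s^n)\le\Lambda$ that defines the AVMAC and the average constraint $\E_q l(S)\le\Lambda$ that defines the compound MAC. In the achievability direction this mismatch is precisely what makes the RT argument close cleanly, because permutations preserve $l^n$ and hence keep the averaging inside the constrained type classes; in the converse it forces the $\eta_n$-slack truncation together with the continuity/denseness step needed to recover the boundary $\E_q l(S)=\Lambda$. Everything else --- the permutation coding itself, preservation of the input cost under permutations, and the Fano-type converse inside Lemma~\ref{lemm:MCcompound} --- is routine once Lemma~\ref{lemm:MCcompound} is available.
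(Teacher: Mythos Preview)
Your proposal is correct and follows essentially the same route as the paper: achievability via Ahlswede's Robustification Technique applied to a compound-MAC code from Lemma~\ref{lemm:MCcompound} (the paper quotes RT as the black-box Lemma~\ref{lemm:LRT}, whereas you unpack the orbit/type-class argument directly), and the converse via the same compound-MAC reduction with the $\Lambda-\delta$ slack handled by the law of large numbers and a continuity/denseness step. The only cosmetic difference is that the paper phrases the converse by contradiction against $\inC(\Mcompound)\big|_{\Qset=\pSpace_{\Lambda-\delta}}$ rather than your direct inclusion argument, but the content is identical.
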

The proof of Theorem~\ref{theo:MrCav} is given in Appendix~\ref{app:MrCav}. The proof is based on the aforementioned result on the compound MAC and an extension of Ahlswede's Robustification Technique \cite{Ahlswede:86p}. Essentially, we use a reliable code for the compound MAC to construct a random code for the AVMAC by  applying random permutations to each codeword symbols. 

\begin{remark}
\label{rem:TimSharAVMAC}
As opposed to the compound channel, the operational time sharing argument is not eligible for the AVC under a state constraint, even in the single user case, as we explain below.
Suppose that two codebooks are used in time sharing, one of rate $R'$ and length $\theta n$, and one of rate $R''$ and length $(1-\theta)n$, for $0<\theta<1$, where $R'$ and $R''$ are both achievable for the AVC under a state constraint $\Lambda$.
Due to the state constraint, it is guaranteed that $\sum_{i=1}^n l(S_i) \leq n\Lambda$ with probability $1$.
However, the jammer is entitled to concentrate the jamming power on the first $\theta n$ symbols,
in which case, $\sum_{i=1}^{\theta n} l(S_i) = n\Lambda$.
If the jammer does so, then the first code, of rate $R'$ and length $\theta n$, needs to be robust against a state sequence $S^{\theta n}$ with the following cost,
\begin{align}
\frac{1}{\theta n} \sum_{i=1}^{\theta n} l(S_i)=\frac{\Lambda}{\theta} >\Lambda \qquad\text{a.s.}
\end{align}
Therefore, in order to achieve the rate $R=\theta R'+(1-\theta)R''$ with operational time sharing, the first coding rate $R'$ needs to be achievable for an AVC under a state constraint $\Lambda/\theta$, and the second coding rate $R''$ needs to be achievable for a state constraint $\Lambda/(1-\theta)$, which is not guaranteed.
Hence, operational time sharing is not eligible for neither the single user AVC, nor the AVMAC (see also \cite{GubnerHughes:95p}).

Nevertheless, we observe that using the proof technique in Appendix~\ref{app:MrCav},   one can devise a reliable coding scheme with ``shuffled time sharing". 
 That is, instead of using the codes  of rates $R'$ and $R''$ above consecutively,  random interleaving of the codes can be realized.
Intuitively, the users are thus able to carry out a time sharing protocol of which the jammer is oblivious,  which explains the statistical independence between the state $S$ and the time sharing variable $U$ in (\ref{eq:MrICav}).
\end{remark}

\begin{remark}
\label{rem:rCavConvex}
As mentioned above, the difference between the formulas given for the random code capacity region and for the divided-randomness capacity region is the statistical independence between the state and the time sharing variable.
Now, we discuss the implications in terms of the convexity of the  regions.

By Theorem~\ref{theo:MrCav} and Lemma~\ref{lemm:MCcompound}, we have that the random code capacity region of the AVMAC under input and state constraints is the same as that of the compound MAC with
$\Qset=\pLSpaceS$. As a consequence, 
we have that the random code capacity region of the AVMAC is convex. It can also be verified directly that the set in the RHS of (\ref{eq:MrICav}) is convex, as $(U,S)\sim P_U(u)q(s)$.

On the other hand, Gubner and Hughes \cite{GubnerHughes:95p} demonstrated that  the divided-randomness capacity region given by  (\ref{eq:MdrICav})
 is \emph{not} necessarily convex, as $(U,S)\sim P_U(u)q(s|u)$  (see Section IV in \cite{GubnerHughes:95p}). In their setting, the encoders have statistically independent random elements 
$\gamma_1$ and $\gamma_2$ (see Definition~\ref{def:DivMcorrC}).
  Gubner and Hughes attribute the non-convexity to the preclusion of operational time sharing \cite{GubnerHughes:95p}.
	It is further mentioned in \cite{GubnerHughes:95p} that there are other instances of non-convex capacity regions in the literature, such as the asynchronous MAC \cite{HuiHumblet:85p,Poltyrev:83p}, where the users' timeframes do not synchronize hence time sharing does not work either.
We observe that the ``shuffled time sharing" mentioned in the previous remark could only work if the shared randomness element is exploited for the coordination between the users. 
Whereas, in a scenario where the random elements are independent, as in \cite{GubnerHughes:95p}, such coordination is impossible.

In our setting, the random elements are not independent, as $\gamma_1=\gamma_2=\gamma$ (\cf Definition~\ref{def:McorrC} and Definition~\ref{def:DivMcorrC}).
In the remark that follows Definition~2 in \cite{GubnerHughes:95p}, it is stated without proof that removing the restriction of independence 
could result in a strictly larger capacity region. Indeed, our result above that $\MrCav$ is convex implies that for the erasure AVMAC in \cite[Section IV]{GubnerHughes:95p}, our random code capacity region $\MrCav$  must be strictly larger than the non-convex divided-randomness capacity region. 
Therefore, we have now validated the assertion by Gubner and Hughes \cite{GubnerHughes:95p}.
%
Comparing (\ref{eq:MdrICav}) and (\ref{eq:MrICav}), we infer that the conditioning of the state distribution on $U$ may lead to a strictly smaller region. Intuitively, knowing the time sharing protocol helps the jammer reduce the coding rates for such a channel.
\end{remark}

\section{Main Results -- Deterministic Code Capacity Region}
The principal result of this paper is the deterministic code capacity theorem, \ie without shared randomness.
The deterministic code derivation is independent of our previous results, and the analysis modifies the techniques of Csisz\'{a}r and Narayan \cite{CsiszarNarayan:88p}, and merges their ideas  with those of Ahlswede and Cai \cite{AhlswedeCai:99p}.

Before we state the capacity theorem, we give the following definitions.
Given an input distribution $P_{X_1,X_2}\in\pSpace(\Xset_1\times\Xset_2)$, consider the average state costs below,
\begin{subequations}
\label{eq:Mtheta}
\begin{align}
\Psi(P_{X_1,X_2})=&
\min_{ \text{symm. $J$}}\sum_{ \substack{x_k\in\Xset_k  \\ k=1,2}   } \sum_{s\in\Sset} 
P_{X_1,X_2}(x_1,x_2) J(s|x_1,x_2) l(s) \,, 
\label{eq:MthetaJ}
\\
\Psi_1(P_{X_1})=&\min_{ \text{symm. $J_1$}}\sum_{x_1\in\Xset_1} \sum_{s\in\Sset} 
P_{X_1}(x_1) J_1(s|x_1) l(s) \,, \label{eq:MthetaJ1}\\
\Psi_2(P_{X_2})=&\min_{ \text{symm. $J_2$}}\sum_{x_2\in\Xset_2} \sum_{s\in\Sset} 
P_{X_2}(x_2) J_2(s|x_2) l(s) \,, \label{eq:MthetaJ2}
\end{align}
\end{subequations}
where the minimizations are over $J(s|x_1,x_2)$, $J_1(s|x_1)$, and $J_2(s|x_2)$, which satisfy the symmetrizing conditions in (\ref{eq:MsymmetrizableJ}), (\ref{eq:Msymmetrizable1}), and (\ref{eq:Msymmetrizable2}), respectively. We use the convention that a minimum over an empty set is $+\infty$.
Then, 
for every $P_{U,X_1,X_2}\in\pSpace(\Uset\times\Xset_1\times\Xset_2)$, 
define
\begin{subequations}
\label{eq:Mtlambda}
\begin{align}
\tLambda(P_{U,X_1,X_2})=& \sum_{u\in\Uset} P_U(u) \Psi(P_{X_1,X_2|U=u})
 \,, 
\label{eq:MtlambdaJ}
\\
\tLambda_1(P_{U,X_1})=& \sum_{u\in\Uset} P_U(u) \Psi_1(P_{X_1|U=u})
\,, \label{eq:MtlambdaJ1}\\
\tLambda_2(P_{U,X_2})=& \sum_{u\in\Uset} P_U(u) \Psi(P_{X_1,X_2|U=u})
\,. \label{eq:MtlambdaJ2}
\end{align}
\end{subequations}
Intuitively, $\min\{\tLambda(P_{U,X_1,X_2})$, $\tLambda_1(P_{U,X_1})$, $\tLambda_2(P_{U,X_2})\}$ is the minimal average state cost which the jammer has to pay to symmetrize the channel, for a given inputs distribution $P_{X_1,X_2|U}$,  where symmetrizing refers to using a conditional distribution that satisfies either one of the symmetrizability conditions in Definition~\ref{def:Msymmetrizable}. If this minimal state cost violates the state constraint $\Lambda$, then the jammer is prohibited from symmetrizing the channel. 

\begin{remark}
\label{rem:MLambdaJeq}
The minimal average state cost can be expressed more explicitly as
\begin{align}
\tLambda(P_{U,X_1,X_2})=&
\min_{ \text{symm. $\{J_u \}$}}\sum_{u,x_1,x_2,s} P_U(u) P_{X_1,X_2|U}(x_1,x_2|u) J_u(s|x_1,x_2) l(s) \,, 
\label{eq:MLambdaJeq}
\end{align}
with minimization over a set of distribution $\{ J_u \}_{u\in\Uset}$, where each distribution $J_u(s|x_1,x_2)$ 
symmetrizes-$\Xset_1\times\Xset_2$ the AVMAC. Notice that the state distributions are indexed by the time sharing variable.
 This has the interpretation of a jamming scheme that varies over time in accordance with the time sharing sequence chosen by the users.
\end{remark}

%
We have defined $\tLambda(P_{U,X_1,X_2})$, $\tLambda_1(P_{U,X_1})$ and $\tLambda_2(P_{U,X_2})$
in (\ref{eq:Mtlambda}) as the minimal average state costs which the jammer has to pay to symmetrize the channel, for a given input distribution $P_{U,X_1,X_2}$.
Intuitively, the users are interested in restricting the jammer by increasing those costs as much as possible, hence the following quantities represent the best thresholds the users can obtain, 
\begin{align}
L^*\triangleq& \max_{P_{U,X_1,X_2} \,:\; \E\cost_k(X_k)\leq\plimit_k ,\, k=1,2} \tLambda(P_{U,X_1,X_2}) \,,
\label{eq:Lstar}
\\
L_1^*\triangleq& \max_{P_{U,X_1} \,:\; \E\cost_1(X_1)\leq\plimit_1 } \tLambda_1(P_{U,X_1}) \,,
\label{eq:Lstar1}
\\
L_2^*\triangleq& \max_{P_{U,X_2} \,:\; \E\cost_2(X_2)\leq\plimit_2 } \tLambda_2(P_{U,X_2}) \,,
\label{eq:Lstar2}
\end{align}
We note that $L^*$, $L_1^*$ and $L_2^*$ depend on the input constraints $(\plimit_1,\plimit_2)$, the MAC
$\mac$, and the state cost function $l:\Sset\rightarrow [0,\infty)$, but they do not depend on the state constraint $\Lambda$. 

\begin{definition}
\label{def:MICavc}
Define the rate region $\MICavc$ as follows. 
\begin{subequations}
\label{eq:MICavc}
\begin{enumerate}[a)]
\item
If $L^*>\Lambda$, $L_1^*>\Lambda$, and $L_2^*>\Lambda$, then
\begin{align}
\MICavc=
\bigcup_{ 
\overline{\pSpace}_{\plimit_1,\plimit_2,\Lambda}(\Uset\times\Xset_1\times\Xset_2) 
}
\left\{
\begin{array}{lrl}
(R_1,R_2) \,:\; & R_1 		\leq&   \min\limits_{q(s|u) \,:\; \E_q l(S)\leq\Lambda} I_q(X_1;Y|X_2,U)  \,, \\
								& R_2 		\leq&   \min\limits_{q(s|u) \,:\; \E_q l(S)\leq\Lambda} I_q(X_2;Y|X_1,U)  \,, \\
								& R_1+R_2 \leq&   \min\limits_{q(s|u) \,:\; \E_q l(S)\leq\Lambda} I_q(X_1,X_2;Y|U)  
\end{array}
\right\}
 \,,
\label{eq:Mcasea}
\end{align}
where 
\begin{align}
\overline{\pSpace}_{\plimit_1,\plimit_2,\Lambda}(\Uset\times\Xset_1\times\Xset_2)=
\{& P_U P_{X_1|U} P_{X_2|U} \,:\; 
\E \cost_k(X_k)\leq \plimit_k \,,\; k=1,2\,,\;\text{and }
\nonumber\\
&\min\{ \tLambda(P_{U,X_1,X_2}),\tLambda_1(P_{U,X_1}),\tLambda_2(P_{U,X_2}) \} \geq \Lambda
\} \,.
\label{eq:MpI}
\end{align}

\item
If $L^*>\Lambda$, $L_2^*>\Lambda$, but $L_1^*\leq\Lambda$, then
\begin{align}
\MICavc=\{ (0,R_2) \,:\; R_2\leq \min_{q(s)\,:\; \E_q l(S)\leq\Lambda} \;\,
\max_{ \substack{ P_{X_1} P_{X_2} \,:\;  \E \cost_k(X_k)\leq \plimit_k \,,\; k=1,2\,,
\\
\min\{ \tLambda(P_{X_1}P_{X_2}),\tLambda_2(P_{X_2})  \} \geq \Lambda
 } }\, I_q(X_2;Y|X_1)
 \} \,.
\label{eq:Mcaseb}
\end{align}

\item
If $L^*>\Lambda$, $L_1^*>\Lambda$, but $L_2^*\leq\Lambda$, then
\begin{align}
\MICavc=\{ (R_1,0) \,:\; R_1\leq \min_{q(s)\,:\; \E_q l(S)\leq\Lambda} \;\,
\max_{ \substack{ P_{X_1}P_{X_2} \,:\;  \E \cost_k(X_k)\leq \plimit_k \,,\; k=1,2\,,
\\
\min\{ \tLambda(P_{X_1}P_{X_2}),\tLambda_1(P_{X_1})  \} \geq \Lambda
 } }\, I_q(X_1;Y|X_2)
 \} \,.
\label{eq:Mcasec}
\end{align}

\item
Otherwise, if $L^*\leq\Lambda$, or if both $L_1^*\leq\Lambda$ and $L_2^*\leq\Lambda$, then
\begin{align}
\MICavc=\{ (0,0) \} \,. 
\end{align}
\label{eq:Mcased}
\end{enumerate}

\end{subequations}
\end{definition}
\begin{remark}
\label{rem:MICdetSymmR}
Observe that the optimization set of the input distribution $P_U P_{X_1|U} P_{X_2|U}$ in (\ref{eq:MICavc}) is a subset of the corresponding set in (\ref{eq:MdrICav}), for the divided-randomness capacity region, hence $\MICavc\subseteq\MdrICav$.
Furthermore, if the AVMAC is non-symmetrizable in the sense of neither 
$\Xset_1\times\Xset_2$, $\Xset_1|\Xset_2$, nor $\Xset_2|\Xset_1$, then
$\tLambda(P_{U,X_1,X_2})=\tLambda_1(P_{U,X_1})=\tLambda_2(P_{U,X_2})=+\infty$, in which case we have that 
$\MICavc=\MdrICav$.
\end{remark}

\begin{theorem}
\label{theo:MCavc}
Assume that $L^*\neq\Lambda$, $L_1^*\neq\Lambda$ and $L_2^*\neq\Lambda$.
Then, the capacity region of the AVMAC $\avmac$ under input constraints $(\plimit_1,\plimit_2)$ and state constraint $\Lambda$ is given by
\begin{align}
\MCavc=\MICavc \,.
\label{eq:MCavcTheo}
\end{align}
Furthermore, if $\avmac$ is non-symmetrizable-$\Xset_1\times\Xset_2$, non-symmetrizable-$\Xset_1|\Xset_2$, and non-symmetrizable-$\Xset_2|\Xset_1$, then the capacity region coincides with the divided-randomness capacity region, \ie $\MCavc=\MdrCav=\MdrICav$.
\end{theorem}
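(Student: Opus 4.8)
The plan is to prove the two inclusions $\MCavc\subseteq\MICavc$ (converse) and $\MICavc\subseteq\MCavc$ (achievability) separately, treating the four regimes (a)--(d) of Definition~\ref{def:MICavc} in parallel, and then to read off the last assertion from Remark~\ref{rem:MICdetSymmR}. The non-degeneracy hypotheses $L^*\neq\Lambda$, $L_1^*\neq\Lambda$, $L_2^*\neq\Lambda$ are what supply the strict slack needed by the type-counting estimates on the achievability side and keep the regime split unambiguous on the converse side.

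\emph{Converse.} Since a deterministic code is a special divided-randomness code, $\MCavc\subseteq\MdrCav$, and Theorem~\ref{theo:GubnerHughes} identifies $\MdrCav$ with $\MdrICav$; running the resulting Fano-type AVMAC converse, in which the public deterministic time-sharing sequence is folded into an auxiliary $U$ and the jammer is granted the i.i.d.\ attack $s_i\sim q(\cdot|u_i)$ with $\E_q l(S)\le\Lambda$, produces from any sequence of reliable codes a single-letter $P_U P_{X_1|U} P_{X_2|U}$ obeying the input costs for which the three $\min_{q(s|u):\E_q l(S)\le\Lambda}$ mutual-information bounds hold. The crux is to further pin this $P_{U,X_1,X_2}$ into the feasible set (\ref{eq:MpI}) and to obtain the collapses in (b)--(d); this is the constrained, time-shared analogue of the symmetrizability converse of Gubner \cite{Gubner:90p} and of the single-user argument of Csisz\'ar and Narayan \cite{CsiszarNarayan:88p}. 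If $\tLambda(P_{U,X_1,X_2})<\Lambda$, the jammer selects an impostor codeword pair together with a bank of symmetrizing kernels $\{J_u\}$ attaining the minimum in (\ref{eq:MLambdaJeq}); the induced state sequence stays within budget while the averaged channel of Definition~\ref{def:Msymmetrizable} makes the true and impostor pairs equiprobable, so the error probability is bounded away from $0$. Hence reliability forces $\tLambda(P_{U,X_1,X_2})\ge\Lambda$, and the single-user kernels $J_1,J_2$ force $\tLambda_1(P_{U,X_1})\ge\Lambda$ whenever $R_1>0$ and $\tLambda_2(P_{U,X_2})\ge\Lambda$ whenever $R_2>0$; feeding these into the definitions of $L^*,L_1^*,L_2^*$ reproduces exactly the regime split and the degenerate conclusions $\MICavc=\{(0,R_2)\}$, $\{(R_1,0)\}$, or $\{(0,0)\}$.

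\emph{Achievability.} Regime (d) is immediate. In regime (a) I would fix $P_U P_{X_1|U} P_{X_2|U}$ in the set (\ref{eq:MpI}), generate a public deterministic time-sharing sequence $u^n$ of type $P_U$, and build the two codebooks by conditionally constant-composition random selection given $u^n$, followed by a Csisz\'ar--Narayan-style expurgation \cite{CsiszarNarayan:88p} that discards codewords violating the input constraint or taking part in undesirable joint type configurations. The decoder is the coded time-sharing variant of the Ahlswede--Cai rule \cite{AhlswedeCai:99p,HanKobayashi:81p}: declare the unique $(m_1,m_2)$ for which $(u^n,x_1^n(m_1),x_2^n(m_2),y^n)$ is jointly typical and no competing message admits a jointly typical reconstruction together with some $s^n$ of cost $l^n(s^n)\le\Lambda$. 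The error analysis splits into the familiar list of events, the decisive ones being the symmetrization-type events; these are controlled by showing that, because $\min\{\tLambda(P_{U,X_1,X_2}),\tLambda_1(P_{U,X_1}),\tLambda_2(P_{U,X_2})\}\ge\Lambda$, no admissible state sequence can make an impostor competitive -- the step where the conditional (time-shared) extension of the Csisz\'ar--Narayan decoding lemma is fused with the counting arguments of Ahlswede and Cai. Closing the union over all feasible $P_U P_{X_1|U} P_{X_2|U}$ yields (\ref{eq:Mcasea}). In regimes (b) and (c), one user, say User~1, is given rate zero and, per Remark~\ref{remark:stochE}, transmits a locally random sequence $x_1^n=\enc_1(\sigma)$, $\sigma\in[1:2^{n\eps}]$, of a fixed type $P_{X_1}$ with $\min\{\Psi(P_{X_1,X_2}),\Psi_2(P_{X_2})\}\ge\Lambda$; combined with a codebook for User~2 of type $P_{X_2}$, this randomization lets User~2 attain every $R_2<\min_{q(s):\E_q l(S)\le\Lambda} I_q(X_2;Y|X_1)$, the Ahlswede--Cai construction being precisely what overcomes the obstruction of Remark~\ref{remark:WieseBoche}: non-symmetrizability-$\Xset_1|\Xset_2$ of the AVMAC does \emph{not} make the marginals $W^{(x_2)}_{Y|X_1,S}$ individually non-symmetrizable, so the jammer cannot fall back on a single kernel $J_1$.

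\emph{Last assertion and main obstacle.} If $\avmac$ is non-symmetrizable-$\Xset_1\times\Xset_2$, -$\Xset_1|\Xset_2$ and -$\Xset_2|\Xset_1$, then by Remark~\ref{rem:MICdetSymmR} we have $\tLambda\equiv\tLambda_1\equiv\tLambda_2\equiv+\infty$ and $\MICavc=\MdrICav$; hence $L^*=L_1^*=L_2^*=+\infty>\Lambda$, (\ref{eq:MCavcTheo}) applies, and with Theorem~\ref{theo:GubnerHughes} we get $\MCavc=\MICavc=\MdrICav=\MdrCav$. I expect the main obstacle to be the achievability in regime (a): making the coded-time-sharing Ahlswede--Cai decoder provably robust against \emph{every} state sequence within the cost budget $\Lambda$ forces one to port both the type-expurgation machinery of \cite{CsiszarNarayan:88p} and the delicate symmetrizability estimates of \cite{AhlswedeCai:99p} to a setting in which all distributions are conditioned on a time-sharing sequence; regimes (b) and (c) are a close second, precisely because the naive reduction to the single-user AVC fails as explained in Remark~\ref{remark:WieseBoche}.
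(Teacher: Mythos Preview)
Your proposal is correct and follows essentially the same architecture as the paper's proof: the converse combines a Fano-type argument (with $U$ the time index) to extract the single-letter distribution $P_U P_{X_1|U}P_{X_2|U}$ and the three mutual-information bounds, together with the Gubner-style impostor-codeword attack (the paper packages this as Lemma~\ref{lemm:Gubner}) to force $\min\{\tLambda,\tLambda_1,\tLambda_2\}\ge\Lambda$; the achievability in Case~(a) uses exactly the coded time-sharing Ahlswede--Cai decoder (Definition~\ref{def:MLdecoder}, disambiguation via Lemma~\ref{lemm:MdisDec}) fed by Csisz\'ar--Narayan codebooks (Lemma~\ref{lemm:McodeBsets}); Cases~(b)--(c) drop the relevant decoding condition and let the zero-rate user transmit a locally random sequence; and the final assertion is read off from Remark~\ref{rem:MICdetSymmR}.

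One small point of departure: you invoke Theorem~\ref{theo:GubnerHughes} (the divided-randomness result of Gubner and Hughes) as the starting point of the converse, whereas the paper deliberately keeps the deterministic-code converse self-contained and independent of \cite{GubnerHughes:95p}, running its own Fano argument against a jammer with an \emph{average} state constraint and then passing back to the almost-sure constraint via Chebyshev. Your route is not wrong---the inclusion $\MCavc\subseteq\MdrCav=\MdrICav$ is of course valid---but it is slightly redundant, since you still need the explicit Fano step to identify the \emph{particular} $P_{U,X_1,X_2}$ tied to the code so that the symmetrizability attack can be mounted; the paper's version does this in one pass.
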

The proof of Theorem~\ref{theo:MCavc} is given in Appendix~\ref{app:MCavc}.
The second part of the theorem follows from Remark~\ref{rem:MICdetSymmR} and (\ref{eq:MCavcTheo}).
The proof does not use our results on the random code capacity region of the AVMAC and on the compound MAC, and it is independent of the divided-randomness analysis by Gubner and Hughes \cite{GubnerHughes:95p}.
The analysis, however, makes use of the properties established for the decoding rule and codebooks specified below,
in Subsections \ref{subsec:MDec} and \ref{subsec:Mcodebooks}.
As mentioned, coded time sharing is an essential replacement for the classical operational time sharing argument, which cannot be applied to the AVMAC under constraints (see Remark~\ref{rem:TimSharAVMAC}).
Hence, our analysis combines our coded time sharing variant of the decoder by Ahlswede and Cai
\cite{AhlswedeCai:99p}, with our generalization of the codebook generated by Csisz\'{a}r and Narayan \cite{CsiszarNarayan:88p}. The converse proof further uses Gubner's observations in \cite{Gubner:90p}.

\begin{remark}
\label{rem:SingleDiffM}
As explained in Remark~\ref{remark:WieseBoche} for the AVMAC without constraints, the case where one of the users has zero capacity does not immediately follow from the results on the single user AVC.
The reason behind this is that Gubner's second and third conditions are stronger than single-user symmetrizability, as defined in \cite[Definition 2]{CsiszarNarayan:88p}. In particular, if the AVMAC is
 symmetrized-$\Xset_1|\Xset_2$ by $J_1(s|x_1)$, then the marginal AVC
$W_{Y|X_1,S}$ is also symmetrized by $J_1(s|x_1)$, but the other direction is not true. 
In the constrained setting, this means that the minimal state cost $\tLambda_1(P_{X_1})$ for symmetrizability-$\Xset_1|\Xset_2$ can be higher than the minimal state cost $\tLambda_0(P_{X_1})$ in 
\cite[Equation (2.13)]{CsiszarNarayan:88p}, for symmetrizability of the marginal AVC $W_{Y|X_1,S}$.
In Example~\ref{example:MsymmJn1n2}, we have seen that the AVMAC is non-symmetrizable-$\Xset_1|\Xset_2$, even though the marginal $W_{Y|X_1,S}$ could be symmetrizable, in which case,
 $\tLambda_1(P_{X_1})=+\infty$ but $\tLambda_0(P_{X_1})\leq l_{max}<\infty$. Therefore, even if $L_1^*>\Lambda$, it is not guaranteed that User 1 can achieve a positive rate.
\end{remark}

\begin{remark}
The boundary case where either $L^*=\Lambda$ or $L_k^*=\Lambda$, $k=1,2$, remains unsolved. Even in the single user setting, say $\Xset_2=\emptyset$, the case of $L_1^*=\Lambda$ is an open problem
 (see \cite{CsiszarNarayan:88p}), although it is conjectured in \cite{CsiszarNarayan:88p} that the capacity is zero in this case. Similarly, we conjecture that the capacity region is $\MCavc=\MICavc$ for all values of $L^*$, $L_1^*$ and $L_2^*$. There are special cases where we can prove that this holds, given in the corollary below. The corollary generalizes the remark following Theorem 3 in \cite{CsiszarNarayan:88p}.
\end{remark}

\begin{coro}
\label{coro:MCavc01}
Let $\avmac$ be an AVMAC under input constraints $(\plimit_1,\plimit_2)$ and state constraint $\Lambda$,
where $\avmac$ is symmetrizable-$\Xset_1\times\Xset_2$, symmetrizable-$\Xset_1|\Xset_2$, and
symmetrizable-$\Xset_2|\Xset_1$. If the symmetrizability equations (\ref{eq:MsymmetrizableJ}), (\ref{eq:Msymmetrizable1}), and (\ref{eq:Msymmetrizable2}) are only satisfied by conditional distributions $J(s|x_1,x_2)$, $J_1(s|x_1)$
$J_2(s|x_2)$ which are $0$-$1$ laws, then
\begin{align}
\MCavc=\MICavc \,.
\label{eq:MCavcTheoSp}
\end{align}
\end{coro}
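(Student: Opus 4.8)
The plan is to reduce the statement to Theorem~\ref{theo:MCavc} away from the regime boundaries, and to handle the boundaries by hand, where the $0$-$1$ hypothesis is precisely what makes the jammer's symmetrizing cost \emph{deterministic}. If none of $L^*$, $L_1^*$, $L_2^*$ equals $\Lambda$, then Theorem~\ref{theo:MCavc} applies and already gives $\MCavc=\MICavc$. Hence the cases that remain are $L^*=\Lambda$, $L_1^*=\Lambda$, or $L_2^*=\Lambda$ (possibly several at once), in which $\MICavc$ in Definition~\ref{def:MICavc} is read off from case~(b), (c) or (d), since the conditions there use ``$\le\Lambda$'', and one must verify that this degenerate region is the capacity region.

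Two elementary observations follow from the hypothesis. First, since the only solutions of (\ref{eq:MsymmetrizableJ})--(\ref{eq:Msymmetrizable2}) are $0$-$1$ laws, the minimizations in (\ref{eq:Mtheta}) defining $\Psi$, $\Psi_1$, $\Psi_2$ range over the \emph{finite}, nonempty sets of deterministic symmetrizing maps $\sigma:\Xset_1\times\Xset_2\to\Sset$, $\sigma_1:\Xset_1\to\Sset$, $\sigma_2:\Xset_2\to\Sset$. Second, taking $U$ deterministic in (\ref{eq:Lstar})--(\ref{eq:Lstar2}) yields $L^*\ge\Psi(P_{X_1,X_2})$ and $L_k^*\ge\Psi_k(P_{X_k})$ for every input distribution meeting the cost constraints; consequently every user-$k$ codeword $x_k^n$ of a code obeying the constraint $\plimit_k$ satisfies $\Psi_k(\hP_{x_k^n})\le L_k^*$, and every codeword pair $(x_1^n,x_2^n)$ obeying $(\plimit_1,\plimit_2)$ satisfies $\Psi(\hP_{x_1^n,x_2^n})\le L^*$.

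Achievability at a boundary is inherited from the proof of Theorem~\ref{theo:MCavc}: in each boundary case the thresholds associated with the users that are \emph{active} in the region of Definition~\ref{def:MICavc} are still \emph{strictly} above $\Lambda$ --- for instance, if $L_1^*=\Lambda$ while $L^*,L_2^*>\Lambda$, the prescribed region is that of case~(b), whose achievability uses only $L^*,L_2^*>\Lambda$ --- so the codebook construction and the coded time sharing decoder employed there apply without change; if no user is active the region is $\{(0,0)\}$ and nothing is required. The new work is the converse, and here the deterministic cost is decisive. Suppose $L_1^*=\Lambda$ and $R_1>0$. For each user-$1$ codeword fix a symmetrizing map attaining $\Psi_1$ at its type; since there are only finitely many symmetrizing maps, a sub-codebook of the same exponential rate consists of codewords sharing one map $\sigma_1^*$, and for every $x_1^n$ in it $\sum_{x_1}\hP_{x_1^n}(x_1)\,l(\sigma_1^*(x_1))=\Psi_1(\hP_{x_1^n})\le L_1^*=\Lambda$. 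The jammer then symmetrizes User~$1$ by transmitting the state sequence $s^n=(\sigma_1^*)^n(x_{1,j}^n)$, where $j$ is drawn uniformly from this sub-codebook; because $\sigma_1^*$ is deterministic, this state sequence has cost \emph{exactly} $\sum_{x_1}\hP_{x_{1,j}^n}(x_1)\,l(\sigma_1^*(x_1))\le\Lambda$, hence it satisfies the state constraint even at the boundary, and the classical symmetrizability converse (Csisz\'{a}r--Narayan \cite{CsiszarNarayan:88p}, in its MAC form via Gubner \cite{Gubner:90p}) forces the average probability of error above a positive constant --- contradicting $R_1>0$. The identical deterministic-cost attack with $\sigma_2$ excludes $R_2>0$ when $L_2^*=\Lambda$, and with $\sigma$ it excludes $R_1+R_2>0$ when $L^*=\Lambda$, via the symmetrizability-$\Xset_1\times\Xset_2$ argument of Gubner \cite{Gubner:90p} and Ahlswede--Cai \cite{AhlswedeCai:99p}. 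Combined with the rate bounds for the active users, which come from the converse part of Theorem~\ref{theo:MCavc} applied with the strict thresholds, this matches $\MICavc$ in every boundary case.

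I expect the main obstacle to be the $L^*=\Lambda$ converse. The attacks that exclude $R_1>0$ (when $L_1^*=\Lambda$) and $R_2>0$ (when $L_2^*=\Lambda$) are essentially single-user: a pigeonhole over the finitely many symmetrizing maps followed by the standard symmetrizability argument. But the symmetrizability-$\Xset_1\times\Xset_2$ attack operates on a product grid of codeword pairs, so one must extract a subgrid on which a \emph{single} $0$-$1$ symmetrizing map of cost at most $\Lambda$ serves every pair --- coordinating this pigeonhole/Ramsey-type localization with the more delicate Ahlswede--Cai/Gubner combinatorics, while losing only a subexponential factor in the codebook sizes --- before $R_1+R_2>0$ can be ruled out.
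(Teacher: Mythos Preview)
Your proposal is correct and tracks the paper's argument closely: the core observation --- that $0$-$1$ symmetrizing laws make the jammer's state cost deterministic, so the symmetrization converse of Lemma~\ref{lemm:Gubner} applies even when $L^*$, $L_1^*$, or $L_2^*$ equals $\Lambda$ exactly --- is precisely what the paper uses. The paper frames the proof (Appendix~\ref{app:MCavc01}) as an addendum to the converse of Theorem~\ref{theo:MCavc}: it shows that $\tLambda(P_{U,X_1,X_2})=\Lambda$ forces $R_1+R_2=0$, and $\tLambda_k(P_{U,X_k})=\Lambda$ forces $R_k=0$, which matches your boundary analysis.

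The one substantive difference is in how the $L^*=\Lambda$ case is handled. You anticipate a pigeonhole/Ramsey-type localization to extract a product subgrid of codeword pairs on which a \emph{single} $0$-$1$ map has cost at most $\Lambda$, and flag this as the main obstacle. The paper does not do this. Instead it first reduces (via the polynomial-types argument, citing \cite[Problem~6.19]{CsiszarKorner:82b}) to codes in which the joint type $P_{U,X_1,X_2}$ of $(u^n,f_1(m_1),f_2(m_2))$ is the same for \emph{all} $(m_1,m_2)$; with this in hand, the optimal $0$-$1$ maps $G_u$ are fixed once and for all, and the state cost $l^n(S^n)=\frac{1}{n}\sum_i l(G_i(x_{1,i},x_{2,i}))$ equals $\Lambda$ identically for every pair the jammer draws --- so Lemma~\ref{lemm:Gubner} applies directly with no truncation or concentration step. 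The paper also allows the symmetrizing map to vary with the time-sharing index $u$ (one $G_u$ per $u\in\Uset$), rather than insisting on a single map as in your pigeonhole scheme. Your anticipated obstacle is therefore dispatched by the paper in one line of type reduction rather than by any combinatorial rectangle argument; your plan would work, but is more elaborate than what the paper actually does.
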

The proof of Corollary~\ref{coro:MCavc01} is given in Appendix~\ref{app:MCavc01}.
In particular, we note that the condition of $0$-$1$ laws in Corollary~\ref{coro:MCavc01} holds when the
output $Y$ is a deterministic function of $X_1$, $X_2$ and $S$. 

\subsection{
Decoding Rule}
\label{subsec:MDec}
We specify the decoding rule and state the corresponding properties, which are used in the analysis. 
As mentioned in Remark~\ref{rem:TimSharAVMAC} above, we cannot use the operational time sharing argument for the AVMAC under constraints, and therefore, we use coded time sharing \cite{HanKobayashi:81p} \cite[Section 4.5.3]{ElGamalKim:11b}.
Our decoder is similar to that of Ahlswede and Cai \cite{AhlswedeCai:99p}, and the codebooks are generated based on the techniques of Csisz\'ar and Narayan \cite{CsiszarNarayan:88p}, along with the insights of Ahlswede and Cai \cite{AhlswedeCai:99p}. We note that since the code is deterministic, the time sharing sequence is also deterministic, and it is known to the encoders, the decoder, and the jammer as well.

A fundamental difference between our coding scheme and the one in \cite{AhlswedeCai:99p} arises from Ahlswede's dichotomy property. Specifically, Ahlswede and Cai only showed achievability of positive rates 
$R_1=R_2=\eps>0$, proving that the capacity region has a non-empty interior.
According to the dichotomy result in Theorem~\ref{theo:MavcC0} by Jahn \cite{Jahn:79t,Jahn:81p}, this implies that the capacity region of the AVMAC free of constraints is the same as the random code capacity region, which was also determined in \cite{Jahn:79t,Jahn:81p}. 
However, for the AVMAC under constraints, dichotomy does not apply and achievability of positive rates is insufficient. Thereby, our proof is a lot more involved than the one in  \cite{AhlswedeCai:99p}.

%

To specify the decoding rule, we define the decoding sets $\Dset(m_1,m_2)\subseteq \Yset^n$, for $(m_1,m_2)\in [1:2^{nR_1}]\times [1:2^{nR_2}]$, such that $g(y^n)=(m_1,m_2)$ iff $y^n\in\Dset(m_1,m_2)$.
\begin{definition}[Decoder]
\label{def:MLdecoder}
Given the codebooks $\{ \enc_k(m_k) \}_{m_k\in [1:2^{nR_k}]}$, $k=1,2$, and a time sharing sequence $u^n$,
 declare that $y^n\in \Dset(m_1,m_2)$ if there exists $s^n\in\Sset^n$ with $l^n(s^n)\leq\Lambda$ such that the following hold.
\begin{enumerate}[1)]
\item
For $(U,X_1,X_2,S,Y)$ which is distributed according to the joint type 
$\hP_{u^n,f_1(m_1),f_2(m_2),s^n,y^n}$, we have that 
\begin{align}
D(P_{U,X_1,X_2,S,Y}|| P_U\times P_{X_1|U}\times P_{X_2|U}\times P_{S|U} \times \mac )\leq \eta \,.
\end{align}
 
\item
\begin{enumerate}[a)]
\item
For every $\tm_1\neq m_1$ and $\tm_2\neq m_2$ such that for some $\ts^n\in\Sset^n$ with $l^n(\ts^n)\leq\Lambda$, 
\begin{align}
\label{eq:MDcompA} 
D(P_{U,\tX_1,\tX_2,\tS,Y}|| P_U\times P_{\tX_1|U}\times P_{\tX_2|U}\times P_{\tS|U} \times \mac)\leq \eta \,,
\end{align} 
where $(U,\tX_1,\tX_2,\tS,Y)\sim \hP_{u^n,f_1(\tm_1),f_2(\tm_2),\ts^n,y^n}$,
 we have that
\begin{align}
I(X_1,X_2,Y;\tX_1,\tX_2|U,S)\leq \eta  \,.
\end{align}

\item
For every $\tm_1\neq m_1$  such that for some $\ts^n\in\Sset^n$  with $l^n(\ts^n)\leq\Lambda$, 
\begin{align}
D(P_{U,\tX_1,X_2,\tS,Y}|| P_U\times P_{\tX_1|U}\times P_{X_2|U}\times P_{\tS|U} \times \mac )\leq \eta \,,\; 
\end{align}
where $(U,\tX_1,X_2,\tS,Y)\sim \hP_{u^n,f_1(\tm_1),f_2(m_2),\ts^n,y^n}$, we have that
\begin{align}
I(X_1,X_2,Y;\tX_1|U,S)\leq \eta_1  \,.
\end{align}

\item
For every $\tm_2\neq m_2$ such that for some $\ts^n\in\Sset^n$  with $l^n(\ts^n)\leq\Lambda$,
\begin{align}
D(P_{U,X_1,\tX_2,\tS,Y}|| P_U\times P_{X_1|U}\times P_{\tX_2|U}\times P_{\tS|U} \times \mac)\leq \eta \,, 
\end{align}
where $(U,X_1,\tX_2,\tS,Y)\sim \hP_{u^n,f_1(m_1),f_2(\tm_2),\ts^n,y^n}$,
 we have that
\begin{align}
I(X_1,X_2,Y;\tX_2|U,S)\leq \eta_2  \,.
\end{align}
\end{enumerate}

\end{enumerate}
\end{definition}
We note that in Definition~\ref{def:MLdecoder}, the variables $U,X_1,X_2, \tX_1,\tX_2,S,\tS,Y$ are dummy random variables, distributed according to the joint type of 
$(u^n,f_1(m_1),f_2(m_2),f_1(\tm_1),f_2(\tm_2),s^n,\ts^n,y^n)$, where
$u^n$ is a given time sharing sequence, $f_1(m_1),f_2(m_2)$ are ``tested" codewords, 
$f_1(\tm_1),f_2(\tm_2)$ are competing codewords, $s^n$ is a ``tested" state sequence, $\ts^n$ is a competing state sequence, and $y^n$ is the received sequence. None of the sequences are random here.
The Markov relation $U\Cbar (X_1,X_2,S)\Cbar Y$ may not hold for those dummy variables, and we may have that the conditional type $P_{Y|X_1,X_2,S}$ differs from the actual channel $ \mac$. Therefore, the divergences and mutual informations in Definition~\ref{def:MLdecoder} could be positive.

For the definition above to be proper, we need to verify that the decoding sets are disjoint, as stated in the following lemma.
\begin{lemma}[Decoding Disambiguity]
\label{lemm:MdisDec}
Let $u^n$ be a given time sharing sequence, and denote by $P_U$ its type, that is, $P_U=\hP_{u^n}$.
Suppose that in each codebook, all codewords have the same  conditional type, \ie $\hP_{\enc_1(m_1)|u^n}= P_{X_1|U}$ and $\hP_{\enc_2(m_2)|u^n}=P_{X_2|U}$ for all $(m_1,m_2)$.
 Assume that for some $\delta,\delta_k>0$, $P_U(u)\geq\delta$, $P_{X_k|U}(x_k|u)\geq\delta_k$ $\forall x_k\in\Xset_k$, $u\in\Uset$, $k=1,2$, and also
\begin{align}
\min\left\{  \tLambda(P_{U,X_1,X_2}) ,\,   \tLambda_1(P_{U,X_1}) ,\, \tLambda_2(P_{U,X_2})   \right\}  >\Lambda \,.
\label{eq:MdecLambda}
\end{align}
 Then, for sufficiently small $\eta,\eta_1,\eta_2>0$, 
\begin{align}
\Dset(m_1,m_2)\neq \Dset(\tm_1,\tm_2) \,,\;\text{for all $(m_1,m_2)\neq (\tm_1,\tm_2)$} \,.
\end{align}
Specifically,
\begin{enumerate}[1)]
\item
Conditions 1) and 2a) of the decoding rule, with $\tLambda(P_{U,X_1,X_2})>\Lambda$, imply that for sufficiently small $\eta$,
\begin{align}
\Dset(m_1,m_2)\cap \Dset(\tm_1,\tm_2)=\emptyset \,,\;\text{for $m_1\neq \tm_1$ and $m_2\neq \tm_2$} \,.
\label{eq:MdisAmb}
\end{align}
\item
Conditions 1) and 2b) of the decoding rule,  with $\tLambda_1(P_{U,X_1})>\Lambda$, imply that for sufficiently small $\eta$ and $\eta_1$,
\begin{align}
\Dset(m_1,m_2)\cap \Dset(\tm_1,m_2)=\emptyset \,,\;\text{for $m_1\neq \tm_1$} \,.
\label{eq:MdisAmb1}
\end{align}
\item
Conditions 1) and 2c) of the decoding rule, with $\tLambda_2(P_{U,X_2})>\Lambda$, imply that for sufficiently small 
$\eta$ and $\eta_2$,
\begin{align}
\Dset(m_1,m_2)\cap \Dset(m_1,\tm_2)=\emptyset \,,\;\text{for $m_2\neq \tm_2$} \,.
\label{eq:MdisAmb2}
\end{align}
\end{enumerate}
\end{lemma}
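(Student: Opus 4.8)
The plan is to argue by contradiction and to treat case~1) in full, indicating at the end how cases~2)--3) reduce to the same scheme with one user's codeword frozen. Suppose $y^{n}\in\Dset(m_{1},m_{2})\cap\Dset(\tm_{1},\tm_{2})$ with $m_{1}\neq\tm_{1}$ and $m_{2}\neq\tm_{2}$. By Definition~\ref{def:MLdecoder} the two memberships are witnessed by state sequences $s^{n},\ts^{n}$ with $l^{n}(s^{n}),l^{n}(\ts^{n})\leq\Lambda$. I would pass to the joint type $\pi$ of $\big(u^{n},\enc_{1}(m_{1}),\enc_{2}(m_{2}),\enc_{1}(\tm_{1}),\enc_{2}(\tm_{2}),s^{n},\ts^{n},y^{n}\big)$ and name its coordinates $U,X_{1},X_{2},\tX_{1},\tX_{2},S,\tS,Y$. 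Condition~1) applied to each pair gives the ``channel-consistency'' bounds $D\big(\pi_{U,X_{1},X_{2},S,Y}\,\big\|\,P_{U}\,P_{X_{1}|U}\,P_{X_{2}|U}\,P_{S|U}\,\mac\big)\leq\eta$ and its tilded analogue; and because the witness data of each pair verifies the hypothesis of condition~2a) invoked from the other pair, one also gets $I(X_{1},X_{2},Y;\tX_{1},\tX_{2}\mid U,S)\leq\eta$ and $I(\tX_{1},\tX_{2},Y;X_{1},X_{2}\mid U,\tS)\leq\eta$. The hypothesis that all codewords share the conditional type $P_{X_{k}|U}$, combined with the pairwise joint-type property of the codebooks of Subsection~\ref{subsec:Mcodebooks}, keeps $\pi_{X_{1},X_{2},\tX_{1},\tX_{2}\mid U}$ near the product $P_{X_{1}|U}\,P_{X_{2}|U}\,P_{X_{1}|U}\,P_{X_{2}|U}$.

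Next I would make everything exact by compactness: were \eqref{eq:MdisAmb} to fail for all $\eta>0$, there would be $\eta^{(j)}\downarrow 0$, blocklengths, admissible codebooks and messages with intersecting decoding sets, and a subsequential limit $\pi^{*}$ of the joint types above. In the limit the divergences vanish and the mutual informations become $0$, so under $\pi^{*}$: $X_{1},X_{2},S$ are mutually conditionally independent given $U$ with $\pi^{*}_{Y\mid U,X_{1},X_{2},S}=\mac$; the same holds for $\tX_{1},\tX_{2},\tS$; $(\tX_{1},\tX_{2})$ is conditionally independent of $(X_{1},X_{2},Y)$ given $(U,S)$, and $(X_{1},X_{2})$ of $(\tX_{1},\tX_{2},Y)$ given $(U,\tS)$; the conditional joint law $\pi^{*}_{X_{1},X_{2},\tX_{1},\tX_{2}\mid U}$ is the full product of its marginals $P_{X_{k}|U}$; and $\E_{\pi^{*}}[l(S)]\leq\Lambda$, $\E_{\pi^{*}}[l(\tS)]\leq\Lambda$. (One could instead carry continuity estimates throughout and argue with $\eta$-perturbed versions of the identities below.)

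The heart of the argument is the construction of a symmetrizing kernel. For each $u$ I would set $J_{u}(s\mid x_{1},x_{2})\triangleq\pi^{*}_{\tS\mid U,X_{1},X_{2}}(s\mid u,x_{1},x_{2})$ and $J'_{u}(s\mid x_{1},x_{2})\triangleq\pi^{*}_{S\mid U,\tX_{1},\tX_{2}}(s\mid u,x_{1},x_{2})$, both well defined for every $(x_{1},x_{2})$ since each codeword letter has conditional mass at least $\delta_{k}>0$. Using Bayes' rule together with the conditional independences one checks that $\pi^{*}_{\tS\mid U,X_{1},X_{2},\tX_{1},\tX_{2}}=\pi^{*}_{\tS\mid U,X_{1},X_{2}}$ and $\pi^{*}_{S\mid U,X_{1},X_{2},\tX_{1},\tX_{2}}=\pi^{*}_{S\mid U,\tX_{1},\tX_{2}}$; expanding $\pi^{*}_{Y\mid U,X_{1},X_{2},\tX_{1},\tX_{2}}$ once through $\tS$ and once through $S$ then yields, for all $x_{1},x_{2},\tx_{1},\tx_{2},y$,
\[
\textstyle\sum_{s}J_{u}(s\mid x_{1},x_{2})\,\mac(y\mid\tx_{1},\tx_{2},s)=\sum_{s}J'_{u}(s\mid\tx_{1},\tx_{2})\,\mac(y\mid x_{1},x_{2},s)\,.
\]
Swapping $(x_{1},x_{2})\leftrightarrow(\tx_{1},\tx_{2})$ gives the companion identity, and adding the two shows that $\bar J_{u}\triangleq\tfrac{1}{2}(J_{u}+J'_{u})$ satisfies the symmetrizing equation~\eqref{eq:MsymmetrizableJ} for every $u$. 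Since $\pi^{*}_{X_{1},X_{2}\mid U}=P_{X_{1}|U}\,P_{X_{2}|U}$, the family $\{\bar J_{u}\}$ has average state cost $\tfrac{1}{2}\big(\E_{\pi^{*}}[l(S)]+\E_{\pi^{*}}[l(\tS)]\big)\leq\Lambda$, so Remark~\ref{rem:MLambdaJeq} gives $\tLambda(P_{U,X_{1},X_{2}})\leq\Lambda$, contradicting~\eqref{eq:MdecLambda}.

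Cases~2)--3) follow the same template with one user's codeword fixed. For \eqref{eq:MdisAmb1}, taking $m_{2}=\tm_{2}$ so that $X_{2}=\tX_{2}$, condition~2b) yields $I(X_{1},X_{2},Y;\tX_{1}\mid U,S)\leq\eta_{1}$ and its mirror image; the kernels become $\pi^{*}_{\tS\mid U,X_{1}}$ and $\pi^{*}_{S\mid U,\tX_{1}}$, their average symmetrizes-$\Xset_{1}|\Xset_{2}$ via~\eqref{eq:Msymmetrizable1}, and one gets $\tLambda_{1}(P_{U,X_{1}})\leq\Lambda$; \eqref{eq:MdisAmb2} is the mirror case, with condition~2c), \eqref{eq:Msymmetrizable2} and $\tLambda_{2}$. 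Each contradicts~\eqref{eq:MdecLambda}, and since every $(m_{1},m_{2})\neq(\tm_{1},\tm_{2})$ falls under one of the three cases, the decoding sets are pairwise disjoint. The step I expect to be hardest is the construction of $\bar J_{u}$: neither $J_{u}$ nor $J'_{u}$ alone need symmetrize the channel, and their average is seen to work only after both the displayed identity and its swap are written out; getting there calls for reading the right conditional factorizations off a dense web of conditional independences, and for checking that the limiting type genuinely inherits the full-product structure of the codeword tuples.
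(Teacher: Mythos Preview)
Your route is essentially the paper's: both arguments culminate in the observation that while $J_u=\pi^*_{\tS\mid U,X_1,X_2}$ and $J'_u=\pi^*_{S\mid U,\tX_1,\tX_2}$ need not symmetrize individually, their average $\bar J_u=\tfrac12(J_u+J'_u)$ does, with expected state cost at most $\Lambda$, contradicting $\tLambda(P_{U,X_1,X_2})>\Lambda$. The paper packages this step as a separate ``gap'' lemma (an extension of \cite[Lemma~A2]{CsiszarNarayan:88p}) and then works at finite $\eta$ via Pinsker's inequality, dividing by the lower bound $\delta\delta_1^2\delta_2$ on the reference product to pass from an $L_1$ bound to a pointwise one; you instead take a subsequential limit $\pi^*$ and argue exactly. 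Both are valid; the paper's version makes the $n$- and codebook-independence of the threshold on $\eta$ explicit through the constant $\xi$, which your compactness argument also gives but less visibly.

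One correction is needed. You invoke ``the pairwise joint-type property of the codebooks of Subsection~\ref{subsec:Mcodebooks}'' to get $\pi_{X_1,X_2,\tX_1,\tX_2\mid U}$ close to the fourfold product. But Lemma~\ref{lemm:MdisDec} is stated for \emph{arbitrary} codebooks with the prescribed conditional types, not only the codebooks of Lemma~\ref{lemm:McodeBsets}, and even those codebooks do not guarantee near-product joint type for every fixed quadruple of codewords. The good news is that you do not need anything from Subsection~\ref{subsec:Mcodebooks}: the product structure already follows from the decoding conditions in your limit. Condition~1) for $(m_1,m_2)$ gives $(X_1,X_2)\perp S\mid U$ with $\pi^*_{X_1,X_2\mid U}=P_{X_1|U}P_{X_2|U}$, and Condition~2a) for $(m_1,m_2)$ gives $(\tX_1,\tX_2)\perp(X_1,X_2)\mid(U,S)$; combining, $\pi^*_{X_1,X_2,\tX_1,\tX_2\mid U}=P_{X_1|U}P_{X_2|U}\,\pi^*_{\tX_1,\tX_2\mid U}$, and the mirror argument from $(\tm_1,\tm_2)$ supplies the remaining factorization. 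This also makes $\pi^*(x_1,x_2,\tx_1,\tx_2\mid u)\geq\delta_1^2\delta_2^2>0$ everywhere, so your symmetrization identity really does hold for \emph{all} $(x_1,x_2,\tx_1,\tx_2)$, as Definition~\ref{def:Msymmetrizable} requires. With the appeal to Subsection~\ref{subsec:Mcodebooks} replaced by this derivation, your proof is complete.
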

The proof of Lemma~\ref{lemm:MdisDec} is given in Appendix~\ref{app:MdisDec}.

\subsection{
Codebooks}
\label{subsec:Mcodebooks}
While the decoding rule above is similar to that of Ahlswede and Cai \cite{AhlswedeCai:99p},
here we prove a generalization of a lemma by Csisz\'{a}r and Narayan \cite{CsiszarNarayan:88p}, in order to generate proper codebooks. 
\begin{lemma}[Codebooks Generation]
\label{lemm:McodeBsets}
For every $\eps>0$, sufficiently large $n$, rates $R_k\geq \eps$ and types $P_U$ and $P_k=P_{X_k|U}$, $k=1,2$, there exist a time sharing sequence $u^n\in\Tset^n(P_U)$, and codebooks, 
$\{(x_1^n(m_1),x_2^n(m_2)):m_k\in [1:2^{nR_k}], k=1,2\}$ of type $P_1\times P_2$, 
such that for every $a_1^n\in\Xset_1^n$, $a_2^n\in\Xset_2^n$, $s^n\in\Sset^n$ with $l^n(s^n)\leq\Lambda$, and every joint type $P_{U,X_1,X_2,\tX_1,\tX_2,S}$ with $P_{X_1,X_2|U}=P_{\tX_1,\tX_2|U}=P_1\times P_2$, the following hold.
\begin{enumerate}[1)]
\item \emph{Joint Typicality}
\begin{align}
|\{ (\tm_1,\tm_2) \,:\; (u^n,a_1^n,a_2^n,x_1^n(\tm_1),x_2^n(\tm_2),s^n)\in\Tset^n(P_{U,X_1,X_2,\tX_1,\tX_2,S})  \}|
\leq
2^{n\left( \left[ R_1+R_2-I(\tX_1,\tX_2;X_1,X_2,S|U) \right]_{+} +\eps \right)} \,, 
\label{eq:M11ebn}
\end{align}
\begin{align}
|\{ (m_1,m_2) \,:\; (u^n,x_1^n(m_1),x_2^n(m_2),s^n)\in\Tset^n(P_{U,X_1,X_2,S})  \}|
\leq
2^{n\left(  R_1+R_2-\frac{\eps}{2}   \right)}
\,,\;\text{if $I(X_1,X_2;S|U)>\eps$} \,,
\label{eq:M12ebn}
\end{align}
and
\begin{multline}
|\{ (m_1,m_2) \,:\; (u^n,x_1^n(m_1),x_2^n(m_2),x_1^n(\tm_1),x_2^n(\tm_2),s^n)\in\Tset^n(P_{U,X_1,X_2,\tX_1,\tX_2,S})   \,,\;\text{for some $\tm_1\neq m_1$, $\tm_2\neq m_2$}
\}|\\
\leq
2^{n\left(  R_1+R_2-\frac{\eps}{2}   \right)}
\,,\; 
\text{if $I(X_1,X_2;\tX_1,\tX_2,S|U)-\left[ R_1+R_2-I(\tX_1,\tX_2;S|U) \right]_{+}>\eps$} \,.
\label{eq:M13ebn}
\end{multline}

\item \emph{Conditional Typicality Given $m_2$}
\begin{align}
&|\{ \tm_1 \,:\; (u^n,a_1^n,a_2^n,x_1^n(\tm_1),s^n)\in\Tset^n(P_{U,X_1,X_2,\tX_1,S})  \}|
\leq
2^{n\left( \left[ R_1-I(\tX_1;X_1,X_2,S|U) \right]_{+} +\eps \right)} \,, 
\label{eq:M2b1ebn}
\intertext{and}
&|\{ m_1 \,:\; (u^n,x_1^n(m_1),a_2^n,x_1^n(\tm_1),s^n)\in\Tset^n(P_{U,X_1,X_2,\tX_1,S})   \;\text{for some $\tm_1\neq m_1$}
\}| 
\leq
2^{n\left(  R_1-\frac{\eps}{2}   \right)}
\,,\; \nonumber\\&
\text{if $I(X_1,X_2;\tX_1,S|U)-\left[ R_1-I(\tX_1;S|U) \right]_{+}>\eps$} \,.
\label{eq:M2b3ebn}
\end{align}

\item \emph{Conditional Typicality Given $m_1$}
\begin{align}
&|\{ \tm_2 \,:\; (u^n,a_1^n,a_2^n,x_2^n(\tm_2),s^n)\in\Tset^n(P_{U,X_1,X_2,\tX_2,S})  \}|
\leq
2^{n\left( \left[ R_2-I(\tX_2;X_1,X_2,S|U) \right]_{+} +\eps \right)} \,, 
\label{eq:M2c1ebn}
\intertext{
and
}
&|\{ m_2 \,:\; (u^n,a_1^n,x_2^n(m_2),x_2^n(\tm_2),s^n)\in\Tset^n(P_{U,X_1,X_2,\tX_2,S})   \;\text{for some $\tm_2\neq m_2$}
\}|
\leq 
2^{n\left(  R_2-\frac{\eps}{2}   \right)} \nonumber\\&
\,,\;\text{if $I(X_1,X_2;\tX_2,S|U)-\left[ R_2-I(\tX_2;S|U) \right]_{+}>\eps$} \,.
\label{eq:M2c3ebn}
\end{align}
\end{enumerate}
\end{lemma}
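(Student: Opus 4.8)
The plan is to prove the existence of suitable codebooks by a random coding argument with a union-bound/concentration analysis, in the spirit of the packing lemmas of Csisz\'ar and Narayan \cite{CsiszarNarayan:88p} but conditioned throughout on a fixed time sharing sequence $u^n$. First I would fix arbitrary types $P_U$, $P_1=P_{X_1|U}$, $P_2=P_{X_2|U}$ and pick any $u^n\in\Tset^n(P_U)$ (for $n$ large such a sequence exists since $P_U$ is a type). Then I would draw the codewords at random: for each $m_1$, draw $X_1^n(m_1)$ uniformly over the conditional type class $\Tset^n(P_1\mid u^n)$, and independently for each $m_2$, draw $X_2^n(m_2)$ uniformly over $\Tset^n(P_2\mid u^n)$, all draws mutually independent. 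By construction every codeword has the prescribed conditional type, so the ``type $P_1\times P_2$'' requirement holds deterministically; what remains is to show that, with probability tending to $1$, the collection of counting bounds in items 1)–3) holds \emph{simultaneously} for every choice of $a_1^n,a_2^n$, every $s^n$ with $l^n(s^n)\le\Lambda$, and every joint type with the stated marginal constraint.

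The core estimates are standard type-counting facts: for independent uniform draws from conditional type classes, the probability that a given tuple of codewords is jointly typical with fixed sequences $(u^n,a_1^n,a_2^n,s^n)$ of a prescribed joint type is $2^{-n(I(\cdot;\cdot|U)+o(1))}$, where the mutual information is the one dictated by that joint type. The ``$\le 2^{n([\,\cdots\,]_++\eps)}$'' bounds (namely \eqref{eq:M11ebn}, \eqref{eq:M2b1ebn}, \eqref{eq:M2c1ebn}) follow from a first-moment computation: the expected number of such codeword pairs/singletons is $2^{n(R-I)}$ (with $R=R_1+R_2$ or $R_1$ or $R_2$ as appropriate), and then one applies a Markov-type / Chernoff tail bound together with the observation that when the exponent is positive the count concentrates, while when it is negative the count is at most its trivial maximum of $1$ plus the $\eps$-slack — hence the $[\,\cdot\,]_+$. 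For the ``$\le 2^{n(R-\eps/2)}$ under a gap hypothesis'' bounds (\eqref{eq:M12ebn}, \eqref{eq:M13ebn}, \eqref{eq:M2b3ebn}, \eqref{eq:M2c3ebn}), the point is that the gap hypothesis forces the relevant expectation to be exponentially smaller than $2^{nR}$, so a Chernoff bound gives that the actual count is below $2^{n(R-\eps/2)}$ except with doubly-exponentially small probability. After establishing each single-event bound, I would take a union bound over all $a_1^n\in\Xset_1^n$, $a_2^n\in\Xset_2^n$, $s^n\in\Sset^n$ (a count of at most $2^{n(\log|\Xset_1|+\log|\Xset_2|+\log|\Sset|)}$, i.e. exponential) and over the polynomially-many joint types; since the per-event failure probabilities are doubly exponentially small, the union still tends to $0$, so a good $u^n$ and codebook pair exists.

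The step I expect to be the main obstacle is the most refined bound \eqref{eq:M13ebn} (and its conditional analogues \eqref{eq:M2b3ebn}, \eqref{eq:M2c3ebn}): here one must count pairs $(m_1,m_2)$ for which there \emph{exists} a companion pair $(\tm_1,\tm_2)\neq(m_1,m_2)$ making a five-sequence tuple jointly typical, and the subtlety is that the events indexed by different $(\tm_1,\tm_2)$ are not independent of the event indexed by $(m_1,m_2)$, and the inner ``for some $(\tm_1,\tm_2)$'' quantifier couples the two layers. The resolution, following \cite{CsiszarNarayan:88p}, is to first use \eqref{eq:M11ebn}-type bound to control, for each fixed $(m_1,m_2)$, the number of companions $(\tm_1,\tm_2)$ of a given joint type — this is where the $[\,R_1+R_2-I(\tX_1,\tX_2;S|U)\,]_+$ term enters as a ``typical list size'' — and then to perform a second-moment / conditional-expectation computation for the outer count with that list size plugged in; the gap hypothesis $I(X_1,X_2;\tX_1,\tX_2,S|U)-[R_1+R_2-I(\tX_1,\tX_2;S|U)]_+>\eps$ is exactly what makes this conditional expectation exponentially small in $2^{nR}$. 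Carefully separating the two randomization layers (first exposing the companion codewords, then the tested ones, or vice versa) and checking that the independence structure across $m_1$, $m_2$, $\tm_1$, $\tm_2$ supports the Chernoff argument is the delicate bookkeeping; everything else is routine type counting, and the conditioning on the fixed $u^n$ enters only as an extra coordinate that all the standard conditional-type lemmas accommodate without change.
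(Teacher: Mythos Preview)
Your proposal is correct and follows essentially the same approach as the paper: fix $u^n$, draw codewords independently and uniformly from the conditional type classes $\Tset^n(P_k\mid u^n)$, bound each counting event via the Csisz\'ar--Narayan martingale large-deviations lemma (the paper's Lemma~\ref{lemm:MbookLD}, their \cite[Lemma~A1]{CsiszarNarayan:88p}) to obtain doubly-exponentially small failure probabilities, and then union-bound over the exponentially many sequences $(a_1^n,a_2^n,s^n)$ and polynomially many joint types. Your identification of the crux---handling the ``for some $(\tm_1,\tm_2)$'' quantifier in \eqref{eq:M13ebn} by first bounding the companion list size (which is exactly the role of the truncated index set $\Aset_m$ in the paper) and then feeding that bound into a second conditional-expectation step---matches the paper's two-layer argument precisely.
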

The proof of Lemma~\ref{lemm:McodeBsets} is given in Appendix~\ref{app:McodeBsets}.

\subsection{Examples}
To illustrate our results, we give the following examples.

\begin{example}
\label{example:GB95}
(see \cite{GubnerHughes:95p})
In the first example, we use Corollary~\ref{coro:MCavc01} and previous results by Gubner and Hughes \cite{GubnerHughes:95p} to show that the deterministic code capacity region can be non convex. Consider the state dependent erasure MAC, specified
\begin{align}
Y=\begin{cases}
X_1+X_2 & X_1\cdot X_2=S=0 \,, \\
r       & \text{otherwise}
\end{cases}
\,,
\end{align}
where $\Sset=\{0,1 \}$, $\Xset_1=\Xset_2=\{0,1,\ldots, r-1 \}$, and $\Yset=\{0,1,\ldots,r\}$, with $r\geq 2$. Consider the AVMAC under a state constraint $\frac{1}{n}\sum_{i=1}^n s_i\leq \Lambda$, for $0<\Lambda\leq 1$, and with inactive input constraints, \ie
$\plimit_k=\cost_{k,\max}$ for $k=1,2$. 

It can be readily verified that the symmetrizability conditions in Definition~\ref{def:Msymmetrizable} hold when the distributions
$J(s|x_1,x_2)$ and $J_k(s|x_k)$ 
assign unit probability to $S=1$. Then, $\tLambda(P_{U,X_1,X_2})=\tLambda_k(P_{U,X_k})=1$ for all $P_{U,X_1,X_2}$, hence
$L^*=L_k^*=1$ for $k=1,2$. By Corollary~\ref{coro:MCavc01}, we have that without a state constraint, \ie for $\Lambda=1$, the deterministic code capacity region is
\begin{align}
\MCavc=\{(0,0)\} \,.
\end{align}
Whereas, if $\Lambda<1$, then $\Lambda$ is strictly less than $L^*=L_k^*=\tLambda(P_{U,X_1,X_2})=\tLambda_k(P_{U,X_k})=1$, for all $P_{U,X_1,X_2}$, $k=1,2$. Hence, by Corollary~\ref{coro:MCavc01},
\begin{align}
\MCavc=
\bigcup_{ P_U P_{X_1|U} P_{X_2|U} 
}
\left\{
\begin{array}{lrl}
(R_1,R_2) \,:\; & R_1 		\leq&   \min\limits_{q(s|u) \,:\; \E S\leq\Lambda} I_q(X_1;Y|X_2,U)  \,, \\
								& R_2 		\leq&   \min\limits_{q(s|u) \,:\; \E S\leq\Lambda} I_q(X_2;Y|X_1,U)  \,, \\
								& R_1+R_2 \leq&   \min\limits_{q(s|u) \,:\; \E S\leq\Lambda} I_q(X_1,X_2;Y|U)  
\end{array}
\right\} 
 \,.
\label{eq:GB95eMcasea}
\end{align}
Now, based on Theorem~\ref{theo:GubnerHughes} (see \cite{GubnerHughes:95p}), we deduce that the deterministic code capacity region is the same as the divided-randomness capacity region, \ie $\MCavc=\MdrCav$.
For the latter, Gubner and Hughes derive inner and outer bounds that are close enough in order to establish that 
$\MdrCav$ is non convex for high values of $r\geq 2$. This, in turn, implies that the deterministic code capacity region is also non convex in general.
\end{example}

\begin{example} 
\label{example:BSMAC}
Let $\avmac$ be an AVMAC which consists of independent binary symmetric channels. 
Specifically, let the state and the output be pairs as well, \ie $S=(S_1,S_2)$ and $Y=(Y_1,Y_2)$, such that
\begin{align}
Y_1=&X_1+S_1 \,\mod 2 \,,\nonumber\\
Y_2=&X_2+S_2 \,\mod 2 \,,
\end{align}
where $X_1$, $X_2$, $S_1$, $S_2$, $Y_1$ and $Y_2$ are binary. Suppose that 
the input and state cost functions are Hamming weights, \ie
\begin{align}
\cost_1(x_1)=x_1 \,,\; \cost_2(x_2)=x_2 \,,\; l(s)=s_1+s_2 \,,
\end{align}
while the constraints $\plimit_1$, $\plimit_2$ and $\Lambda$ are in the interval $(0,1]$. 

First, we use Theorem~\ref{theo:MrCav} to show that the random code capacity is given by 
\begin{align}
\MrCav=
\left\{
\begin{array}{lrl}
(R_1,R_2) \,:\; & R_1 		\leq&   h(\omega_1*\lambda)-h(\lambda)  \,, \\
								& R_2 		\leq&   h(\omega_2*\lambda)-h(\lambda)  
\end{array}
\right\} \,,
\label{eq:MrCavBSMAC}
\end{align}
where
\begin{align}
\omega_1=\min\left(\plimit_1,\frac{1}{2} \right) \,,\;
\omega_2=\min\left(\plimit_2,\frac{1}{2} \right) \,,\;
\lambda=\min\left(\Lambda,\frac{1}{2} \right) \,.
\end{align}
In particular, if $\Lambda\geq \frac{1}{2}$, then the random code capacity region is $\MrCav=\{ (0,0) \}$.

 It can further be seen that the binary AVMAC is symmetrizable-$\Xset_1\times\Xset_2$, symmetrizable- $\Xset_1|\Xset_2$, and symmetrizable-$\Xset_2|\Xset_1$. 
In particular, the symmetrizability equations (\ref{eq:MsymmetrizableJ}), (\ref{eq:Msymmetrizable1}), and (\ref{eq:Msymmetrizable2}) only hold with the $0$-$1$ laws  $J(s|x_1,x_2)=\delta(s_1-x_1)\delta(s_2-x_2)$, 
$J_1(s|x_1)=\delta(s_1-x_1)\delta(s_2-x_2')$,
$J_2(s|x_2)=\delta(s_1-x_1')\delta(s_2-x_2)$, for arbitrary $x_1',x_2'\in \{0,1\}$, where
$\delta(u)=1$ for $u=0$, and $\delta(u)=0$ otherwise.

 Then, we use Corollary~\ref{coro:MCavc01}  
to show that the capacity region is given by the following. If 
$\plimit_1>\Lambda$ and $\plimit_2>\Lambda$, then
\begin{align}
\MCavc=\MrCav=
\left\{
\begin{array}{lrl}
(R_1,R_2) \,:\; & R_1 		\leq&   h(\omega_1*\lambda)-h(\lambda)  \,, \\
								& R_2 		\leq&   h(\omega_2*\lambda)-h(\lambda)  
\end{array}
\right\} \,,
\label{eq:MCavcBSMACa}
\end{align}
where $h(t)=-t\log t-(1-t)\log(1-t)$ for $0< t< 1$, and $\alpha*\beta=(1-\alpha)\beta+\alpha(1-\beta)$. 
If $\plimit_1\leq\Lambda$ and $\plimit_2>\Lambda$, then
\begin{align}
\MCavc=
\{(0,R_2) \,:\;  R_2 		\leq&\;   h(\omega_2*\lambda)-h(\lambda) \} \,. 
\label{eq:MCavcBSMACb}
\end{align}
If $\plimit_1>\Lambda$ and $\plimit_2\leq\Lambda$, then
\begin{align}
\MCavc=
\{(R_1,0) \,:\;  R_1 		\leq&\;   h(\omega_1*\lambda)-h(\lambda) \} \,. 
\label{eq:MCavcBSMACc}
\end{align}
Otherwise,  if $\plimit_1\leq\Lambda$ and $\plimit_2\leq \Lambda$, then
\begin{align}
\MCavc=\{ (0,0) \} \,.
\label{eq:MCavcBSMACd}
\end{align}
The analysis is given in Appendix~\ref{app:BSMAC}.

We observe that the deterministic code capacity region and the random code capacity region are the same, only if $\Lambda\geq\frac{1}{2}$ or both input constraints are higher than the state constraints. In all other cases, the deterministic code capacity is strictly included within the random code capacity region, \ie
$\MCavc\subset\MrCav$.
\end{example}

\begin{example}
\label{example:GaussMAC}
Consider the Gaussian AVMAC, specified by
\begin{align}
Y=X_1+X_2+S+Z \,,
\end{align}
with $Z\sim\mathcal{N}(0,\sigma^2)$, where the transmitters and the jammer have the power constraints
 $\frac{1}{n}\sum_{i=1}^n X_{k,i}^2\leq \plimit_k$, for $k=1,2$, and 
$\frac{1}{n}\sum_{i=1}^n S_i^2\leq \Lambda$.  This channel was treated independently by 
Hosseinigoki and Kosut \cite{HosseinigokiKosut:19c}, using the packing lemmas from \cite{HosseinigokiKosut:17a}. Here, we use our results on the general AVMAC.

Although we previously assumed that the input, state and output alphabets are finite, our results can be extended to the continuous case as well, using standard discretization techniques \cite{BBT:59p,Ahlswede:78p,Csiszar:92p} \cite[Section 3.4.1]{ElGamalKim:11b}.
First, we use Theorem~\ref{theo:MrCav} to show that the random code capacity region is 
\begin{align}
\MrCav=
\left\{
\begin{array}{lrl}
(R_1,R_2) \,:\; & R_1 		\leq&  \frac{1}{2}\log\left(1+\frac{\plimit_1}{\Lambda+\sigma^2}\right)  \,, \\
								& R_2 		\leq&  \frac{1}{2}\log\left(1+\frac{\plimit_2}{\Lambda+\sigma^2}\right)  \,, \\
								& R_1+R_2 		\leq&  \frac{1}{2}\log\left(1+\frac{\plimit_1+\plimit_2}{\Lambda+\sigma^2}\right)  
\end{array}
\right\} \,.
\label{eq:MrCavGauss}
\end{align}

Then, we use Theorem~\ref{theo:MCavc} to show that the capacity region is given by the following. If 
$\plimit_1>\Lambda$ and $\plimit_2>\Lambda$, then
\begin{align}
\MCavc=\MrCav=
\left\{
\begin{array}{lrl}
(R_1,R_2) \,:\; & R_1 		\leq&  \frac{1}{2}\log\left(1+\frac{\plimit_1}{\Lambda+\sigma^2}\right)  \,, \\
								& R_2 		\leq&  \frac{1}{2}\log\left(1+\frac{\plimit_2}{\Lambda+\sigma^2}\right)  \,, \\
								& R_1+R_2 		\leq&  \frac{1}{2}\log\left(1+\frac{\plimit_1+\plimit_2}{\Lambda+\sigma^2}\right)  
\end{array}
\right\} \,.
\label{eq:MCavcGaussa}
\end{align}
If $\plimit_1\leq\Lambda$ and $\plimit_2>\Lambda$, then
\begin{align}
\MCavc=
\left\{ (0,R_2) \,:\;  R_2 		\leq\;   \frac{1}{2}\log\left(1+\frac{\plimit_2}{\Lambda+\sigma^2}\right) \right\} \,. 
\label{eq:MCavcGaussb}
\end{align}
If $\plimit_1>\Lambda$ and $\plimit_2\leq\Lambda$, then
\begin{align}
\MCavc=
\left\{ (R_1,0) \,:\;  R_1 		\leq\;   \frac{1}{2}\log\left(1+\frac{\plimit_1}{\Lambda+\sigma^2}\right) \right\} \,. 
\label{eq:MCavcGaussc}
\end{align}
Otherwise,  if $\plimit_1\leq\Lambda$ and $\plimit_2\leq \Lambda$, then
\begin{align}
\MCavc=\{ (0,0) \} \,.
\label{eq:MCavcGaussd}
\end{align}
The analysis is given in Appendix~\ref{app:GaussMAC}.
We observe that the deterministic code capacity region and the random code capacity region are the same, only if  both input constraints are higher than the state constraints. In all other cases, the deterministic code capacity is strictly included within the random code capacity region. 

\end{example}

\subsubsection{Without Constraints}
We have seen in Subsection~\ref{sec:MLNOsi} that Gubner \cite{Gubner:90p} and Ahlswede and Cai \cite{AhlswedeCai:99p} determined the capacity region in all but two cases, where User 1 has zero capacity and User 2 has positive capacity, and vice versa (see Remark~\ref{remark:WieseBoche}).
In this subsection, we give full characterization of the capacity region of the AVMAC without constraints, closing the gap in the results by Ahlswede and Cai \cite{AhlswedeCai:99p}.

\begin{theorem}
\label{theo:NoConstraintsFull}
The capacity region of the AVMAC free of constraints is given by the following.
\begin{enumerate}[a)]
\item
If $\mac$ is not symmetrizable-$\Xset_1\times\Xset_2$, -$\Xset_1|\Xset_2$, nor -$\Xset_2|\Xset_1$, then
\begin{align}
\MCavcf=\MdrICavf \,.
\end{align}

\item
If $\mac$ is not symmetrizable-$\Xset_1\times\Xset_2$ nor -$\Xset_2|\Xset_1$, but symmetrizable-$\Xset_1|\Xset_2$,  then 
\begin{align}
\MCavcf= \left\{(0,R_2) \,:\; R_2\leq \min_{q(s)} \max_{p(x_1)p(x_2)} 
I_q(X_2;Y|X_1) \right\} \,.
\label{eq:MCavcf2Full}
\end{align}

\item
If $\mac$ is not symmetrizable-$\Xset_1\times\Xset_2$ nor -$\Xset_1|\Xset_2$, but symmetrizable-$\Xset_2|\Xset_1$,  then 
\begin{align}
\MCavcf= \left\{(R_1,0) \,:\; R_1\leq \min_{q(s)} \max_{p(x_1)p(x_2)} 
I_q(X_1;Y|X_2) \right\} \,.
\label{eq:MCavcf3Full}
\end{align}

\item
In all other cases, 
\begin{align}
\MCavcf=\left\{(0,0) \right\} \,.
\end{align}
\end{enumerate}
\end{theorem}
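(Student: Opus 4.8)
The plan is to obtain Theorem~\ref{theo:NoConstraintsFull} as a specialization of Theorem~\ref{theo:MCavc}. As observed at the start of Subsection~\ref{sec:MLNOsi}, the AVMAC free of constraints is the AVMAC under constraints with $\plimit_1\geq\cost_{1,max}$, $\plimit_2\geq\cost_{2,max}$, and $\Lambda\geq l_{max}$. Since $l^n(s^n)\leq l_{max}$ for every $s^n\in\Sset^n$, the feasible set $\pLSpaceSn$ equals $\pSpace(\Sset^n)$ for every $\Lambda\geq l_{max}$, so $\MCavc$ does not depend on the particular value of $\Lambda$ in that range; hence I may fix $\Lambda>l_{max}$ strictly. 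This choice will both make the various constraints vacuous and keep us away from the unsolved boundary cases $L^*=\Lambda$, $L_1^*=\Lambda$, $L_2^*=\Lambda$.

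Next I would evaluate $L^*$, $L_1^*$, $L_2^*$ for this choice of $\Lambda$. If $\mac$ is symmetrizable-$\Xset_1\times\Xset_2$, then a symmetrizing $J$ exists, and since $l(s)\leq l_{max}$ one gets $\Psi(P_{X_1,X_2})\leq l_{max}$, hence $\tLambda(P_{U,X_1,X_2})\leq l_{max}$ and $L^*\leq l_{max}<\Lambda$. If $\mac$ is non-symmetrizable-$\Xset_1\times\Xset_2$, then the minimization in (\ref{eq:MthetaJ}) is over the empty set, so $\Psi\equiv+\infty$, $\tLambda\equiv+\infty$, and $L^*=+\infty>\Lambda$. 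The same dichotomy holds for $L_1^*$ versus symmetrizability-$\Xset_1|\Xset_2$ and for $L_2^*$ versus symmetrizability-$\Xset_2|\Xset_1$. Thus $L^*\neq\Lambda$, $L_1^*\neq\Lambda$, $L_2^*\neq\Lambda$, so Theorem~\ref{theo:MCavc} applies and gives $\MCavcf=\MICavc$; moreover ``$L^*>\Lambda$'', ``$L_1^*>\Lambda$'', ``$L_2^*>\Lambda$'' are equivalent, respectively, to ``non-symmetrizable-$\Xset_1\times\Xset_2$'', ``non-symmetrizable-$\Xset_1|\Xset_2$'', ``non-symmetrizable-$\Xset_2|\Xset_1$'', so the four cases of Definition~\ref{def:MICavc} correspond exactly to cases a)--d) of the theorem.

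It then remains to simplify $\MICavc$ in each case, using three observations: (i) since $\plimit_k\geq\cost_{k,max}$, the input constraints $\E\cost_k(X_k)\leq\plimit_k$ are vacuous; (ii) since $\Lambda\geq l_{max}$, every $\min_{q\,:\,\E_q l(S)\leq\Lambda}$ collapses to $\min_q$; (iii) in each of cases a), b), c) the relevant quantities among $\tLambda$, $\tLambda_1$, $\tLambda_2$ are $+\infty$ by the dichotomy above, so the constraints $\min\{\tLambda,\ldots\}\geq\Lambda$ are vacuous. With these, case a) of Definition~\ref{def:MICavc} reduces to $\bigcup_{P_U P_{X_1|U} P_{X_2|U}}\{\cdots\}$, which is precisely the region $\inC^{\rstarC\rstarC}\hspace{-0.1cm}(\avmac_{\text{free}})$ of (\ref{eq:MICf}), i.e. $\MdrICavf$; case b) reduces to $\{(0,R_2):R_2\leq\min_{q(s)}\max_{p(x_1)p(x_2)}I_q(X_2;Y|X_1)\}$, which is (\ref{eq:MCavcf2Full}); case c) is symmetric and gives (\ref{eq:MCavcf3Full}); and case d), which matches the ``all other cases'' of the theorem, gives $\{(0,0)\}$. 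Collecting these identifications completes the proof.

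There is no genuine obstacle here; the argument is a specialization together with routine bookkeeping. The only point needing care is the boundary $L^*=\Lambda$ (and $L_k^*=\Lambda$), which is excluded from Theorem~\ref{theo:MCavc} but is sidestepped by the freedom to choose $\Lambda>l_{max}$ without changing $\MCavc$; a secondary point is to track which symmetrizability notion each of $L^*$, $L_1^*$, $L_2^*$ corresponds to when reading off the four cases.
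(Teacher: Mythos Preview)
Your proposal is correct and follows essentially the same route as the paper: specialize Theorem~\ref{theo:MCavc} by taking $\plimit_k>\cost_{k,max}$ and $\Lambda>l_{max}$, observe the dichotomy $L^*=+\infty$ versus $L^*\leq l_{max}<\Lambda$ (and likewise for $L_1^*,L_2^*$) according to non-/symmetrizability, and read off the four cases of Definition~\ref{def:MICavc}. Your writeup is in fact more explicit than the paper's one-paragraph justification, spelling out why the constraints become vacuous and why the boundary case is avoided.
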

Theorem~\ref{theo:NoConstraintsFull} is a direct consequence of our previous results in the presence of constraints, since plugging $\plimit_k>\cost_{k,max}$, $k=1,2$, and $\Lambda>l_{max}$ yields the capacity region without constraints. In particular, if the AVMAC free of constraints is non-symmetrizable-$\Xset_1\times\Xset_2$, then $L^*=+\infty>\Lambda$.  Otherwise, if the AVMAC free of constraints is symmetrizable-$\Xset_1\times\Xset_2$, then $L^*\leq l_{max}<\Lambda$.
A similar argument holds for $L^*_1$, $L_2^*$ and non-symmetrizability-$\Xset_1|\Xset_2$, -$\Xset_2|\Xset_1$, respectively. Thus,  Theorem~\ref{theo:NoConstraintsFull} follows from Theorem~\ref{theo:MCavc} and 
Definition~\ref{def:MICavc}.
The theorem above completes the partial characterization in Cases b) and c) in Theorem~\ref{theo:WieseBoche}.

\begin{appendices}

\section{Proof of Lemma~\ref{lemm:MCcompound}}
\label{app:MCcompound}
Consider the compound MAC $\Mcompound$ under input constraints $(\plimit_1,\plimit_2)$ and state constraint $\Lambda$. 
To prove the direct part, we construct a code based on simultaneous decoding with respect to a state type which is ``close" to some  $q\in\Qset$.
The converse part follows 
by standard arguements.

\subsection{Achievability Proof}
 Let $\eps,\delta>0$ be arbitrarily small.
We use the following notation. Basic method of types concepts are defined as in \cite[Chapter 2]{CsiszarKorner:82b}; including the definition of a type $\hP_{x^n}$ of a sequence $x^n$; a joint type $\hP_{x^n,y^n}$ and a conditional type $\hP_{x^n|y^n}$ of a pair of sequences $(x^n,y^n)$; and 
 a $\delta$-typical set $\tset(P_{X,Y})$ with respect to a distribution $P_{X,Y}(x,y)$. 
 We also define a set of state types $\tQ$ by  
\begin{align}
\label{eq:tQ}
\tQ=\{ \hP_{s^n} \,:\; s^n\in\Aset^{ \delta_1 
}(q) \;\text{ for some  $q\in\pLSpaceS$}\, \} \,,
\end{align}
where 
\begin{align}
\label{eq:delta1}
\delta_1 \triangleq
\frac{\delta}{2\cdot |\Sset|} \,.
\end{align} 
Namely, $\tQ$ is the set of types that are $\delta_1$-close 
 to some state distribution $q(s)$ in $\pLSpaceS$.
Then, fix $P_U P_{X_1|U} P_{X_2|U}$ such that $\E\cost_k(X_k)\leq\plimit_k-\eps$, for $k=1,2$.
 
\emph{Codebook Generation}: 
Generate a random time sharing sequence $u^n\sim \prod_{i=1}^n P_U(u_i)$.
Then,  generate $2^{nR_k}$ conditionally independent sequences 
$x_k^n(m_k)$, $m_k\in[1:2^{nR_k}]$, at random, each according to 
$\prod_{i=1}^n P_{X_k|U}(x_{k,i}|u_i)$, for $k=1,2$. 
 Reveal the  sequence $u^n$ and the codebooks $\{x_1^n(m_1)\}$ and $\{x_2^n(m_2)\}$
 to the encoders and the decoder.

\emph{Encoding}: To send $(m_1,m_2)$, Encoder $k$ transmits $x_k^n(m_k)$, provided that 
\begin{align}
\cost_k(x_k^n(m_k))\leq\plimit_k \,,\; \text{for $k=1,2$}\,. 
\label{eq:McompoundConstI}
\end{align}
Otherwise, repeatedly send the symbol $a_k$ 
with $\cost_k(a_k)=0$.

\emph{Decoding}:
For every state distribution $q(s)$, define
 \begin{align}
\label{eq:MUchannelYL}
P^{q}_{Y|X_1,X_2}(y|x_1,x_2)=\sum_{s\in\Sset} q(s)\mac(y| x_1,x_2,s) \,.
 \end{align}
 As $y^n$ is received, the decoder finds a unique pair $(\hm_1,\hm_2)\in[1:2^{nR_1}]\times[1:2^{nR_2}]$ such that $(u^n,x_1^n(\hm_1),x_2^n(\hm_2),y^n)\in\tset(P_U P_{X_1|U} P_{X_2|U}  P^q_{Y|X_1,X_2})$
for some type $q\in\tQ$. If there is none, or more than one such pair, 
declare an error.

\emph{Analysis of Probability of Error}:
Assume without loss of generality that the users sent the messages 
$m_1=m_2=1$. Let $q(s)\in\Qset$ denote the \emph{actual} state distribution chosen by the jammer.
The error event is within the union of the following events, 
\begin{align}
\Eset_1=&\{ (U^n,X_1^n(1),X_2^n(1))\notin \Aset^{\nicefrac{\delta}{3}}(P_U P_{X_1|U} P_{X_2|U}) \}\,,  \\
\Eset_2=&\{ (U^n,X_1^n(1),X_2^n(1),Y^n)\notin \tset(P_U P_{X_1|U} P_{X_2|U} P^{q'}_{Y|X_1,X_2})   
\,,\;
\text{for all $q'\in\tQ$} \} \,,
\label{eq:MSlemmCompoundCerrEv1cap}\\
\Eset_3=&\{ (U^n,X_1^n(m_1),X_2^n(m_2),Y^n)\in \tset(P_U P_{X_1|U} P_{X_2|U} P^{q'}_{Y|X_1,X_2}) 
\,,\;
\text{for some $m_1\neq 1,\,m_2\neq 1,\, q'\in\tQ$} \}\,, \\
\Eset_4=&\{ (U^n,X_1^n(m_1),X_2^n(1),Y^n)\in \tset(P_U P_{X_1|U} P_{X_2|U} P^{q'}_{Y|X_1,X_2}) 
\,,\;
\text{for some $m_1\neq 1,\, q'\in\tQ$} \}\,, \\
\Eset_5=&\{ (U^n,X_1^n(1),X_2^n(m_2),Y^n)\in \tset(P_U P_{X_1|U} P_{X_2|U} P^{q'}_{Y|X_1,X_2}) 
\,,\;
\text{for some $m_2\neq 1,\, q'\in\tQ$} \}\,.
\,.
\label{eq:MSlemmCompoundCerrEv2cap}
\end{align}
 The probability of error is then bounded by
\begin{align}
 \err(q,\code) 
\leq& \prob{\Eset_1}+\cprob{\Eset_2}{\Eset_1^c}+ \cprob{\Eset_3}{\Eset_2^c}
+\cprob{\Eset_4}{\Eset_2^c}
+ \cprob{\Eset_5}{\Eset_2^c} \,,
\label{eq:MSlemmCompoundCerr}
\end{align}
 where the conditioning on $(M_1,M_2)=(1,1)$ is omitted for convenience of notation. 
The first term in the RHS of (\ref{eq:MSlemmCompoundCerr}) tends to zero exponentially as $n\rightarrow\infty$, by the law of large numbers and Chernoff's bound (see \eg \cite[Theorem 1.1]{Kramer:08n}). Now, 
given that the event $\Eset_1^c$ occurs, we have that   $X_1^n(1)$ and $X_2^n(1)$ satisfy the input constraints (\ref{eq:McompoundConstI}), 
for sufficiently small $\delta>0$, and are thus the channel inputs.

Moving to the second term, suppose that
\begin{align}
(U^n,X_1^n(1),X_2^n(1),Y^n)\in \Aset^{\nicefrac{\delta}{2}}(P_U P_{X_1|U} P_{X_2|U} P^{q}_{Y|X_1,X_2}) 
\,. 
\label{eq:Mux1x2y}
\end{align}
%
Then, for sufficiently large $n$, there is a type $q'(s)$ such that 
$
|q'(s)-q(s)|\leq \delta_1 
$, 
 for all $s\in\Sset$, hence, $q'\in\tQ$ (see definition in (\ref{eq:tQ})), and 
\begin{align}
|P_{Y|X_1,X_2}^{q'}(y|x_1,x_2)-P_{Y|X_1,X_2}^{q}(y|x_1,x_2)|\leq |\Sset|\cdot \delta_1=\frac{\delta}{2} \,,
x_k\in\Xset_k ,\, y\in\Yset \,,
\end{align}
for $\delta_1=\delta/2|\Sset|$ (see (\ref{eq:MUchannelYL})), 
hence $\Eset_2$ does not hold. 
It follows by contradiction that 
	\begin{align}
	\cprob{\Eset_2}{\Eset_1^c} 
	\leq& \cprob{(U^n,X_1^n(1),X_2^n(1),Y^n)\notin \Aset^{\nicefrac{\delta}{2}}(P_U P^{q}_{Y|U})  }{\Eset_1^c} \,,
		\label{eq:MSllnRL}
	\end{align}
which 
  tends to zero exponentially as $n\rightarrow\infty$ by the law of large numbers and  Chernoff's bound. 
	
	As for the third term in the RHS of (\ref{eq:MSlemmCompoundCerr}),  	by the union of events bound 
and the fact that the number of type classes in $\Sset^n$ is bounded by $(n+1)^{|\Sset|}$,  
 we have that  
\begin{align}
&\cprob{\Eset_3}{\Eset_2^c}
\leq (n+1)^{|\Sset|}\cdot \sup_{q'\in\tQ}
\nonumber\\&
 \Pr \bigg(
(U^n,X_1^n(m_1),X_2^n(m_2),Y^n)\in \tset(P_U P_{X_1|U} P_{X_2|U} P^{q'}_{Y|X_1,X_2}) \,,\;
\text{for some $m_1\neq 1 ,\, m_2\neq 1$} 
| \Eset_1^c \bigg) \nonumber\\
\leq& (n+1)^{|\Sset|}\cdot 2^{n(R_1+R_2)} \cdot \sup_{q'\in\tQ}\bigg[  
\sum_{u^n,x_1^n,x_2^n} P_{U^n}(u^n) P_{X_1^n|U^n}(x_1^n|u^n) P_{X_2^n|U^n}(x_2^n|u^n) 
\nonumber\\&
\cdot \sum_{y^n \,:\; (u^n,x_1^n,x_2^n,y^n)\in \tset(P_U P_{X_1|U} P_{X_2|U} P^{q'}_{Y|X_1,X_2})} 
P_{Y^n}^q(y^n|u^n)
\bigg] \,,
\label{eq:MSE2poly}
\end{align}
where we have defined
	$
	P_{Y|U}^q(y|u)
	=\sum\limits_{x_1,x_2,s\in\Sset} P_{X_1|U}(x_1|u) P_{X_2|U}(x_2|u)q(s) \mac(y|$ $x_1,x_2,s) 
	$. 
This follows since 
 $X_1^n(m_1)$ and $X_2^n(m_2)$ are independent of $Y^n$ for every $m_1\neq 1$ and $m_2\neq 1$. 
 Let $y^n$ satisfy $(u^n,x_1^n,x_2^n,y^n)\in \tset(P_U P_{X_1|U} P_{X_2|U}$ $ P^{q'}_{Y|X_1,X_2})$. Then, 
$\,(u^n,y^n)\in\Aset^{\delta_2}(P_{U,Y}^{q'})$ with $\delta_2\triangleq |\Xset_1||\Xset_2|\cdot\delta$. By Lemmas 2.6 and 2.7 in
 \cite{CsiszarKorner:82b},
\begin{align}
P_{Y^n|U^n}^q(y^n|u^n)=2^{-n\left(  H(\hP_{y^n|u^n})+D(\hP_{y^n|u^n}||P_{Y|U}^q)
\right)}
\leq& 2^{-n H(\hP_{y^n|u^n})}
\leq 2^{-n\left( H_{q'}(Y|U) -\eps_1(\delta) \right)} \,,
\label{eq:MSpYbound}
\end{align}
where $\eps_1(\delta)\rightarrow 0$ as $\delta\rightarrow 0$. Therefore, by (\ref{eq:MSE2poly})$-$(\ref{eq:MSpYbound}), along with 
 \cite[Lemma 2.13]{CsiszarKorner:82b},
\begin{align}
& \prob{\Eset_3|\Eset_2^c}           																									
\leq
 \;(n+1)^{|\Sset|}\cdot \sup_{q'\in\Qset} 
2^{-n[ I_{q'}(X_1,X_2;Y|U) 
-R_1-R_2-\eps_2(\delta) ]} \label{eq:MSLexpCR} \,,
\end{align}
with $\eps_2(\delta)\rightarrow 0$ as $\delta\rightarrow 0$, 
 The RHS of (\ref{eq:MSLexpCR})
  tends to zero exponentially as $n\rightarrow\infty$, provided that 
	\begin{align}
	R_1+R_2<\inf_{q'\in\Qset} I_{q'}(X_1,X_2;Y|U)-\eps_2(\delta) \,.
	\end{align}

By similar considerations, the fourth term is bounded by 
$
 \prob{\Eset_4|\Eset_2^c}           																									
\leq
 \;(n+1)^{|\Sset|}\cdot \sup_{q'\in\Qset} 
2^{-n[ I_{q'}(X_1;Y|X_2,U) 
-R_1-\eps_3(\delta) ]} 
$, 
with $\eps_3(\delta)\rightarrow 0$ as $\delta\rightarrow 0$.
 This bound 
tends to zero exponentially as $n\rightarrow\infty$, provided that 
	$R_1<\inf_{q'\in\Qset} I_{q'}(X_1;Y|X_2,U)-\eps_3(\delta)$. By symmetry, we have that 
$ \prob{\Eset_5|\Eset_2^c}$ 
tends to zero as well, provided that 
$R_2<\inf_{q'\in\Qset} I_{q'}(X_2;Y|X_1,U)-\eps_3(\delta)$.

We conclude that the probability of error, averaged over the class of the codebooks, exponentially decays to zero  as $n\rightarrow\infty$. Therefore, there must exist a $(2^{nR_1},2^{nR_2},n,\eps)$ deterministic code, for a sufficiently large $n$.
\qed

\subsection{Converse Proof} 
The converse part follows from the same arguments as in the converse proof of the classical MAC 
\cite{Ahlswede:74p} (see also \cite[Section 15.3.4]{CoverThomas:06b}). 
Since the deterministic code capacity region is always bounded by the random code capacity region,
we consider a sequence of $(2^{nR_1},2^{nR_2},n,\alpha_n)$ random codes, where $\alpha_n\rightarrow 0$  as $n\rightarrow\infty$.
Then, let $X_1^n=f_{1,\gamma}^n(M_1)$ and $X_2^n=f_{2,\gamma}^n(M_2)$ be the channel input sequences, and 
$Y^n$ be the corresponding output sequence, where $\gamma\in\Gamma$ is the random element shared between the encoders and the decoder.
 For every $q\in\Qset$, we have by 
Fano's inequality that $H_q(M_1,M_2|Y^n,\gamma)\leq n\eps_n $, hence 
$H_q(M_1|M_2,Y^n,\gamma)
\leq n\eps_n $ and 
$H_q(M_2|M_1,Y^n,\gamma)
\leq n\eps_n $, where
$\eps_n\rightarrow 0$ as $n\rightarrow\infty$. Since
\begin{align}
n(R_1+R_2)=&H(M_1,M_2|\gamma)=I_q(M_1,M_2;Y^n|\gamma)+H(M_1,M_2|Y^n,\gamma) \,,\\
nR_1=&H(M_1|M_2,\gamma)=I_q(M_1;Y^n|M_2,\gamma)+H(M_1|M_2,Y^n,\gamma) \,,\\
nR_2=&H(M_2|M_1,\gamma)=I_q(M_2;Y^n|M_1,\gamma)+H(M_2|M_1,Y^n,\gamma) \,,
\end{align}
it follows that
\begin{align}
nR_1\leq& I_q(M_1;Y^n|M_2,\gamma)+n\eps_n=\sum_{i=1}^n I_q(M_1;Y_i|M_2,Y^{i-1},\gamma)+n\eps_n \,,\\
nR_2\leq& I_q(M_2;Y^n|M_1,\gamma)+n\eps_n=\sum_{i=1}^n I_q(M_2;Y_i|M_1,Y^{i-1},\gamma)+n\eps_n \,, \\
n(R_1+R_2)\leq& I_q(M_1,M_2;Y^n|\gamma)+n\eps_n=\sum_{i=1}^n I_q(M_1,M_2;Y_i|Y^{i-1},\gamma)+n\eps_n \,.
\end{align}
As $X_k^n=f_{k,\gamma}(M_k)$, this yields
\begin{align}
R_1\leq&  \frac{1}{n}\sum_{i=1}^n I_q(X_{1,i},M_1;Y^n|X_{2,i},M_2,Y^{i-1},\gamma)+\eps_n\\
R_2\leq&  \frac{1}{n} \sum_{i=1}^n I_q(X_{2,i},M_2;Y^n|X_{1,i},M_1,Y^{i-1},\gamma)+\eps_n \\
R_1+R_2 \leq& \frac{1}{n}\sum_{i=1}^n I_q(X_{1,i},X_{2,i},M_1,M_2;Y_i|Y^{i-1},\gamma)+\eps_n \,.
\end{align}
Then, since $(\gamma,M_1,M_2,Y^{i-1})\Cbar (X_{1,i},X_{2,i})\Cbar Y_i$ form a Markov chain, we have that for every $q\in\Qset$,
\begin{align}
R_1\leq& I_q(X_{1,T};Y_T|X_{2,T},T,\gamma)+\eps_n \,, 			\label{eq:compoundConvR2} \\
R_2\leq& I_q(X_{2,T};Y_T|X_{1,T},T,\gamma)+\eps_n \,, 			\label{eq:compoundConvR1}	\\
R_1+R_2 \leq& I_q(X_{1,T},X_{2,T};Y_T|T,\gamma)+\eps_n \,,	\label{eq:compoundConvSum}
\end{align}
where $T$ is a random variable which is uniformly distributed over $[1:n]$, and independent of $(\gamma,X_1^n,X_2^n,S^n,Y^n)$. Defining $X_1=X_{1,T}$, $X_2=X_{2,T}$, $Y=Y_T$, and $U=(T,\gamma)$, it follows that
\begin{align}
R_1\leq& \inf_{q\in\Qset} I_q(X_1;Y|X_2,U)+\eps_n \,, \label{eq:compoundConvR1u} \\
R_2\leq& \inf_{q\in\Qset} I_q(X_2;Y|X_1,U)+\eps_n \,, \label{eq:compoundConvR2u}\\
R_1+R_2 \leq& \inf_{q\in\Qset} I_q(X_{1},X_2;Y|U)+\eps_n \label{eq:compoundConvSumu}\,.
\end{align}
As $X_1$ and $X_2$ are conditionally independent given $U$, this completes  the proof of the converse part.  
%
\qed

\section{Proof of Theorem~\ref{theo:MrCav}}
\label{app:MrCav}
Consider the AVMAC $\avmac$ under input constraints $(\plimit_1,\plimit_2)$ and state constraint $\Lambda$.

\subsection{Achievability Proof}
To prove the random code capacity theorem for the AVMAC, we use our result on the compound MAC along with a simple extension of Ahlswede's Robustification Technique (RT). 
We begin with a lemma from \cite{PeregSteinberg:19p1}, based on 
 Ahlswede's RT \cite{Ahlswede:86p}. 
\begin{lemma}[Ahlswede's RT \cite{Ahlswede:86p} {\cite[Lemma 9]{PeregSteinberg:19p1}}] 
\label{lemm:LRT}
Let $h:\Sset^n\rightarrow [0,1]$ be a given function. If, for some fixed $\alpha_n\in(0,1)$, and for all $ q^n(s^n)=\prod_{i=1}^n q(s_i)$, with 
$q\in\apLSpaceS$, 
\begin{align}
\label{eq:RTcondCs}
\sum_{s^n\in\Sset^n} q^n(s^n)h(s^n)\leq \alpha_n \,,
\end{align}
then,
\begin{align}
\label{eq:RTresCs}
\frac{1}{n!} \sum_{\pi\in\Pi_n} h(\pi s^n)\leq \beta_n \,,\quad\text{for all $s^n\in\Sset^n$ such that $l^n(s^n)\leq\Lambda$} \,,
\end{align}
where $\Pi_n$ is the set of all $n$-tuple permutations $\pi:\Sset^n\rightarrow\Sset^n$, and 
$\beta_n=(n+1)^{|\Sset|}\cdot\alpha_n$. 
\end{lemma}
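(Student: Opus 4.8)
The plan is to reduce the permutation average appearing in the conclusion (\ref{eq:RTresCs}) to an i.i.d.\ average of the type that shows up in the hypothesis (\ref{eq:RTcondCs}), and then close the argument with the standard lower bound on the probability of a type class. First I would fix an arbitrary $s^n\in\Sset^n$ with $l^n(s^n)\leq\Lambda$ and set $P=\hP_{s^n}$, its type. Since $\E_P\, l(S)=\frac1n\sum_{i=1}^n l(s_i)=l^n(s^n)\leq\Lambda$, the type $P$ lies in $\apLSpaceS$, so the hypothesis (\ref{eq:RTcondCs}) may be applied with $q=P$, giving $\sum_{\ts^n\in\Sset^n}P^n(\ts^n)h(\ts^n)\leq\alpha_n$, where $P^n(\ts^n)=\prod_{i=1}^n P(\ts_i)$.

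The key step is that averaging $h$ over all $n!$ permutations of $s^n$ is the same as averaging $h$ uniformly over the type class $\Tset^n(P)$, since each $\ts^n\in\Tset^n(P)$ is the image of $s^n$ under exactly $n!/|\Tset^n(P)|$ permutations:
\begin{align}
\frac{1}{n!}\sum_{\pi\in\Pi_n} h(\pi s^n)=\frac{1}{|\Tset^n(P)|}\sum_{\ts^n\in\Tset^n(P)} h(\ts^n) \,.
\label{eq:RTproofplanA}
\end{align}
Because $P^n$ is constant on $\Tset^n(P)$, for $\ts^n\in\Tset^n(P)$ one has $|\Tset^n(P)|^{-1}=P^n(\ts^n)/P^n(\Tset^n(P))$, and hence, using $h\geq 0$ together with the hypothesis,
\begin{align}
\frac{1}{n!}\sum_{\pi\in\Pi_n} h(\pi s^n)
&=\frac{1}{P^n(\Tset^n(P))}\sum_{\ts^n\in\Tset^n(P)} P^n(\ts^n)h(\ts^n)\nonumber\\
&\leq\frac{1}{P^n(\Tset^n(P))}\sum_{\ts^n\in\Sset^n} P^n(\ts^n)h(\ts^n)
\leq\frac{\alpha_n}{P^n(\Tset^n(P))} \,.
\label{eq:RTproofplanB}
\end{align}
I would then invoke the standard property of types $P^n\big(\Tset^n(P)\big)\geq (n+1)^{-|\Sset|}$ (see \cite[Chapter~2]{CsiszarKorner:82b}), which yields $\frac{1}{n!}\sum_{\pi\in\Pi_n} h(\pi s^n)\leq (n+1)^{|\Sset|}\alpha_n=\beta_n$. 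As $s^n$ was an arbitrary sequence with $l^n(s^n)\leq\Lambda$, this is precisely (\ref{eq:RTresCs}).

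There is no genuine obstacle here: the statement is a direct adaptation of Ahlswede's robustification technique \cite{Ahlswede:86p} to the constrained setting, and the proof is short. The only points that need care are the combinatorial bookkeeping in (\ref{eq:RTproofplanA})--(\ref{eq:RTproofplanB}) --- identifying the uniform-over-permutations average with the i.i.d.\ measure restricted to the type class --- and the remark that the constraint $l^n(s^n)\leq\Lambda$ on $s^n$ is exactly what places its type inside $\apLSpaceS$, so that the hypothesis (\ref{eq:RTcondCs}) is applicable; it is this correspondence between the almost-sure state constraint and an expected-cost constraint on types that makes the constrained version go through.
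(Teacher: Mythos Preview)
Your proof is correct and is precisely the standard argument for Ahlswede's robustification technique. The paper does not actually prove this lemma; it merely cites it from \cite{Ahlswede:86p} and \cite{PeregSteinberg:19p1}, so there is nothing to compare against beyond noting that your derivation is the expected one.
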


Let $(R_1,R_2)\in\MrICav$.
At first, we consider the compound MAC under input constraints $(\plimit_1,\plimit_2)$, with $\Qset=\pLSpaceS$. 
According to Lemma~\ref{lemm:MCcompound},  
for some $\theta>0$ and sufficiently large $n$,   there exists a  $(2^{nR_1},2^{nR_2},n)$  code 
$\code=(\enc_1^n(m_1),\enc_2^n(m_2),$ $\dec(y^n))$ for the compound MAC $\avmac^{\pLSpaceS}$ such that 
\begin{align}
\label{eq:MLrAVcosti}
&
\cost_k^n(\enc_k(m_k))  \leq\plimit_k \,,\; 
\text{for all $m_k\in [1:2^{nR_k}]$, $k=1,2$}\,,
\end{align}
and
\begin{align}
\label{eq:MLrAVerrDirect}
&  \err(q,\code)=\sum_{s^n\in\Sset^n} q(s^n) \cdot \cerr(\code)  \leq e^{-2\theta n} \,,
\end{align}
for all i.i.d. state distributions $q(s^n)=\prod_{i=1}^n q(s_i)$, with $q\in\apLSpaceS$.

Therefore, by Lemma~\ref{lemm:LRT}, taking $h_0(s^n)=\cerr(\code)$ and $\alpha_n=e^{-2\theta n}$, we have that for a sufficiently large $n$,
\begin{align}
\label{eq:MALdetErrC}
\frac{1}{n!} \sum_{\pi\in\Pi_n} \E\, P_{e|\pi s^n}^{(n)}(\code)\leq (n+1)^{|\Sset|}e^{-2\theta n} 
\leq e^{-\theta n}  \,,
\end{align}
for all $s^n\in\Sset^n$ with $l^n(s^n)\leq\Lambda$, where the sum is over the set of all $n$-tuple permutations. 

On the other hand, for every  $\pi\in\Pi_n$,
\begin{align}
P_{e|\pi s^n}^{(n)}(\code) 
  \stackrel{(a)}{=}&
\frac{1}{2^{ n(R_1+R_2) }}\sum_{m_1,m_2}
\sum_{y^n:\dec(y^n)\neq (m_1,m_2)}  W_{Y^n|X_1^n,X_2^n,S^n}(y^n|\enc_1(m_1),\enc_2(m_2),\pi s^n) \nonumber\\
\stackrel{(b)}{=}& \frac{1}{2^{ n(R_1+R_2) }}\sum_{m_1,m_2}
\sum_{y^n:\dec(\pi y^n)\neq (m_1,m_2)}  W_{Y^n|X_1^n,X_2^n,S^n}(\pi y^n|\enc_1(m_1),\enc_2(m_2),\pi s^n) \nonumber\\
\stackrel{(c)}{=}& \frac{1}{2^{ n(R_1+R_2) }}\sum_{m_1,m_2}
\sum_{y^n:\dec(\pi y^n)\neq (m_1,m_2)}  W_{Y^n|X_1^n,X_2^n,S^n}( y^n|\pi^{-1}\enc_1(m_1),\pi^{-1}\enc_2(m_2), s^n) \,,
\label{eq:MLcerrpi}
\end{align}
where $(a)$ is obtained by plugging 
 $\pi s^n$  in (\ref{eq:Mcerr});
in $(b)$ we simply change the order of summation over $y^n$; and $(c)$ holds because the channel is memoryless. 

Then, consider the $(2^{nR_1},2^{nR_2},n)$ random code $\code^\Pi$, specified by 
\begin{align}
\label{eq:MLCpi}
&f_{1,\pi}(m_1)= \pi^{-1} \enc_1(m_1) \,,\; f_{2,\pi}(m_2)= \pi^{-1} \enc_2(m_2) \,,\;
 g_\pi(y^n)=\dec(\pi y^n)   \,,
\end{align}
with a uniform distribution $\mu(\pi)=\frac{1}{|\Pi_n|}=\frac{1}{n!}$ for $\pi\in\Pi_n$. 
As the inputs cost is additive (see (\ref{eq:MLInConstraintStrict})), the permutation does not affect the costs of the codewords, hence the random code satisfies the input constraints $(\plimit_1,\plimit_2)$.
 From (\ref{eq:MLcerrpi}), we see that 
 $
\cerr(\code^\Pi)=\sum_{\pi\in\Pi_n} \mu(\pi) \cdot\E\, P_{e|\pi s^n}^{(n)}(\code) 
$,
for all $s^n\in\Sset^n$ with $l^n(s^n)\leq \Lambda$. Therefore, together with (\ref{eq:MALdetErrC}), we have that the probability of error of the random code $\code^\Pi$ is bounded by 
$
\err(\qn,\code^{\Pi})\leq e^{-\theta n} 
$, 
for every $\qn(s^n)\in\pLSpaceSn$. 
It follows that $\code^\Pi$ is a $(2^{nR},n,e^{-\theta n})$ random 
 code for the AVMAC $\avmac$  under input constraints $(\plimit_1,\plimit_2)$ and state constraint $\Lambda$. 
\qed

\subsection{Converse Proof}
Assume to the contrary that there exists an achievable rate pair 
\begin{align}
(R_1,R_2)\notin\inC(\Mcompound) \big|_{\Qset=\pSpace_{\Lambda-\delta}} \,,
\label{eq:MrCconverseRate}
\end{align}
using random codes over the AVMAC $\avmac$  under input constraints $(\plimit_1,\plimit_2)$ and state constraint $\Lambda$, where $\delta>0$ is arbitrarily small. 
That is, for every $\eps>0$ and sufficiently large $n$,
there exists a $(2^{nR_1},2^{nR_2},n)$ random code $\code^\Gamma=(\mu,\Gamma,\{\code_\gamma\}_{\gamma\in\Gamma})$ for the AVMAC $\avmac$, such that 
 $\sum_{\gamma\in\Gamma} \mu(\gamma)  \cost^n_k(\enc_{k,\gamma}(m_k))  \leq\plimit_k$, for
$k=1,2$, and 
\begin{align}
& \err(q,\code^\Gamma)\leq\eps \,,
\label{eq:MStateConverse1b}
\end{align}
 for all $m_1\in [1:2^{nR_1}]$, $m_2\in [1:2^{nR_2}]$, and $q(s^n)\in\pLSpaceSn$. 
	In particular, for  distributions $q(\cdot)$ which give mass $1$ to some sequence $s^n\in\Sset^n$ with $l^n(s^n)\leq\Lambda$, we have that
	$
	\cerr(\code^\Gamma)\leq\eps 
	$. 
	
	Consider using the random code $\code^\Gamma$ over the compound MAC $\avc^{\overline{\pSpace}_{\Lambda-\delta}(\Sset)}$ under input constraints $(\plimit_1,\plimit_2)$. Let $\oq(s)\in\overline{\pSpace}_{\Lambda-\delta}(\Sset)$ be a given state distribution. Then, 
	 define a sequence of i.i.d. random variables $\oS_1,\ldots,\oS_n\sim \oq(s)$.  
	Letting 
	$\oq^n(s^n)\triangleq\prod_{i=1}^n \oq(s_i)$, the probability of error is bounded by
	\begin{align}
	\err(\oq,\code^\Gamma)
	\leq 
	\sum_{s^n\,:\; l^n(s^n)\leq\Lambda} \oq^{n}(s^n) \cerr(\code^\Gamma)
	+\prob{l^n(\oS^{n})>\Lambda} . 
	\end{align}
	The first sum is bounded by (\ref{eq:MStateConverse1b}), and the second term vanishes 
	by the law of large numbers, since
	$\oq(s)\in\overline{\pSpace}_{\Lambda-\delta}(\Sset)$. 
	It follows that the random code $\code^\Gamma$ achieves a rate pair 
	$(R_1,R_2)$ as in (\ref{eq:MrCconverseRate}) 
	over the compound MAC $\avmac^{\overline{\pSpace}_{\Lambda-\delta}(\Sset)}$ under  input constraints 
	$(\plimit_1,\plimit_2)$, for an arbitrarily small $\delta>0$, in contradiction to Lemma~\ref{lemm:MCcompound}. 
	We deduce that the assumption is false, and $\MrCav\subseteq  \inC(\Mcompound) \big|_{\Qset=\pSpace_{\Lambda}}=\MrICav$.
\qed

\section{Proof of Lemma~\ref{lemm:MdisDec}}
\label{app:MdisDec}
Consider the AVMAC $\avmac$ under input constraints $(\plimit_1,\plimit_2)$ and state constraint $\Lambda$.
We accommodate the proof of Lemma 1 in \cite{AhlswedeCai:99p} to the case where there are state constraints. 
We prove part 2 of the lemma, and the rest follows by similar considerations.

Our first step is to  extend an auxiliary lemma by Csis\`ar and Narayan \cite[Lemma A2]{CsiszarNarayan:88p}.
Fix $P_U$,  $P_{X_k|U}$, for $k=1,2$, as in Lemma~\ref{lemm:MdisDec}.
\begin{lemma}
\label{lemm:MA2}
For every pair of conditional state distributions $Q(s|x_1,u)$ and $Q'(s|x_1,u)$  such that 
\begin{align}
\max\left\{
\sum_{u,x_1,s} P_U(u) P_{X_1|U}(x_1|u)Q(s|x_1,u)l(s)  \,,\; 
\sum_{u,x_1,s} P_U(u) P_{X_1|U}(x_1|u)Q'(s|x_1,u)l(s) 
\right\}
<&\tLambda_1(P_{U,X_1}) \,, 
\label{eq:MA2sump} 
\end{align}
 there exists $\xi>0$ such that
\begin{align}
\max_{u,x_1,\tx_1,x_2,y} \Big|\sum_{s} Q(s|\tx_1,u) \mac(y|x_1,x_2,s) -\sum_{s} Q'(s|x_1,u) \mac(y|\tx_1,x_2,s) \Big|
\geq \xi \,.
\label{eq:MdecCNresA2}
\end{align}
\end{lemma}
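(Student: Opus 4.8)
The plan is to argue by contradiction, mirroring the single–user argument of Csisz\'ar and Narayan \cite[Lemma A2]{CsiszarNarayan:88p} while carrying the time–sharing letter $u$ through the whole construction. Fix an admissible pair $(Q,Q')$ satisfying (\ref{eq:MA2sump}). Since the left–hand side of (\ref{eq:MdecCNresA2}) is a maximum of nonnegative numbers over a fixed finite index set, it can fail to admit \emph{any} strictly positive lower bound $\xi$ only if it equals $0$, i.e. only if
\begin{equation*}
\sum_{s} Q(s|\tx_1,u)\,\mac(y|x_1,x_2,s)=\sum_{s} Q'(s|x_1,u)\,\mac(y|\tx_1,x_2,s)
\qquad\text{for all }u,x_1,\tx_1,x_2,y.\tag{$\star$}
\end{equation*}
So I would assume $(\star)$ and derive a contradiction; once this is done one simply takes $\xi$ equal to the (now strictly positive) value of the left–hand side of (\ref{eq:MdecCNresA2}).

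The key step is to show that $(\star)$ forces, for every $u$, the convex combination $\bar{J}_u(s|x_1)\triangleq\frac{1}{2}\big(Q(s|x_1,u)+Q'(s|x_1,u)\big)$ to be a symmetrizing–$\Xset_1|\Xset_2$ kernel in the sense of (\ref{eq:Msymmetrizable1}). I would verify this by expanding $\sum_s \bar{J}_u(s|\tx_1)\mac(y|x_1,x_2,s)$ into its $Q$–part and its $Q'$–part, rewriting the $Q$–part directly via $(\star)$ and the $Q'$–part via $(\star)$ with the roles of $x_1$ and $\tx_1$ interchanged, and observing that the two resulting terms reassemble exactly into $\sum_s \bar{J}_u(s|x_1)\mac(y|\tx_1,x_2,s)$. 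Note that $\bar{J}_u$ is automatically a stochastic kernel, being a convex combination of two such.

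Given that $\{\bar{J}_u\}_{u\in\Uset}$ is a family of symmetrizing–$\Xset_1|\Xset_2$ kernels, its averaged state cost $\sum_{u,x_1,s}P_U(u)P_{X_1|U}(x_1|u)\bar{J}_u(s|x_1)l(s)$ equals, by linearity, the arithmetic mean of the two costs compared in (\ref{eq:MA2sump}), hence is strictly below $\tLambda_1(P_{U,X_1})$. But, exactly as in Remark~\ref{rem:MLambdaJeq} (now for $\Xset_1|\Xset_2$ kernels), $\tLambda_1(P_{U,X_1})=\sum_{u}P_U(u)\Psi_1(P_{X_1|U=u})$ is precisely the \emph{minimum} of that averaged cost over all families $\{J_u\}$ of symmetrizing–$\Xset_1|\Xset_2$ kernels, since the per–$u$ minimizations are independent; therefore every such family — in particular $\{\bar{J}_u\}$ — has averaged cost at least $\tLambda_1(P_{U,X_1})$, which is the desired contradiction. (If no symmetrizing–$\Xset_1|\Xset_2$ kernel exists at all, then $\tLambda_1(P_{U,X_1})=+\infty$ by convention, (\ref{eq:MA2sump}) holds vacuously, and $(\star)$ already exhibits such a kernel — again a contradiction.)

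The only delicate point is the bookkeeping inside the symmetrization verification: one must apply $(\star)$ with the correct assignment of $x_1,\tx_1$ to each of the two halves, which is easy to get backwards. Everything else is routine. Should a single $\xi$ valid uniformly over all admissible $(Q,Q')$ be needed later — e.g. in the proof of Lemma~\ref{lemm:MdisDec}, where the relevant conditional state types have averaged cost $\leq\Lambda<\tLambda_1(P_{U,X_1})$ — it follows at once from compactness of the set of such pairs, continuity of the left–hand side of (\ref{eq:MdecCNresA2}) in $(Q,Q')$, and the pointwise positivity just established.
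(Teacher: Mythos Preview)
Your proof is correct and follows essentially the same approach as the paper: assume the maximum vanishes, form the averaged kernel $\bar{J}_u=\tfrac{1}{2}(Q+Q')$, show it symmetrizes-$\Xset_1|\Xset_2$ for every $u$, and contradict the definition of $\tLambda_1(P_{U,X_1})$ via the cost bound (\ref{eq:MA2sump}). The only cosmetic difference is that you verify the symmetrization by direct expansion of the two halves of $(\star)$, whereas the paper reaches the same conclusion via a triangle-inequality manipulation of the vanishing maximum; both arguments are equivalent.
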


\begin{proof}[Proof of Lemma~\ref{lemm:MA2}]
Assume to the contrary that the LHS in (\ref{eq:MdecCNresA2}) is zero, and
define
\begin{align}
Q_A(s|x_1,u)=\frac{1}{2}\left(Q(s|x_1,u)+Q'(s|x_1,u)\right) \,.
\end{align}
By symmetry,
\begin{align}
0=& \max_{u,x_1,\tx_1,x_2,y} \Big|\sum_{s}  Q(s|\tx_1,u)\mac(y|x_1,x_2,s) -\sum_{s}  Q'(s|x_1,u)\mac(y|\tx_1,x_2,s) \Big|
\nonumber\\
=&  \frac{1}{2} \max_{u,x_1,\tx_1,x_2,y} \Big|\sum_{s}  Q(s|\tx_1,u)\mac(y|x_1,x_2,s) -\sum_{s}  Q'(s|x_1,u)\mac(y|\tx_1,x_2,s)\Big|
\nonumber\\
&+\frac{1}{2}\max_{u,x_1,\tx_1,x_2,y} \Big|\sum_{s}  Q'(s|\tx_1,u)\mac(y|x_1,x_2,s) -\sum_{s}  Q(s|x_1,u)\mac(y|\tx_1,x_2,s)\Big| 
\nonumber\\
\geq&    \max_{u,x_1,\tx_1,x_2,y} \Big|\sum_{s} Q_A(s|x_1,u)\mac(y|\tx_1,x_2,s) -\sum_{s}  Q_A(s|\tx_1,u) \mac(y|x_1,x_2,s) \Big| \,,
\end{align}
 where the last line follows from the triangle inequality. Then,
it follows that
\begin{align}
\sum_{s\in\Sset} Q_A(s|x_1,u)\mac(y|\tx_1,x_2,s) =\sum_{s\in\Sset}  Q_A(s|\tx_1,u) \mac(y|x_1,x_2,s) \,,
\end{align}
 for all $u\in\Uset$, $x_1,\tx_1\in\Xset_1$, $x_2\in\Xset_2$, and $y\in\Yset$. In other words, $J_{1,u}\equiv Q_A(\cdot|\cdot,u)$ 
symmetrizes-$\Xset_1|\Xset_2$ the AVMAC, for all $u\in\Uset$.

Next, recall from Remark~\ref{rem:MLambdaJeq} that the minimal state cost in (\ref{eq:MtlambdaJ1}) can be written as 
\begin{align}
\tLambda_1(P_{U,X_1})= \min_{ \text{symm. $\{ J_{1,u} \}$}} \sum_{u\in\Uset} \sum_{x_1\in\Xset_1} \sum_{s\in\Sset} P_U(u) 
P_{X_1|U}(x_1|u) J_{1,u}(s|x_1) l(s) \,,
\end{align}
where the minimization is over the set of 
distributions $\{J_{1,u} \}_{u\in\Uset}$, such that each
$J_{1,u}(s|x_1)$ satisfies (\ref{eq:Msymmetrizable1}), for $u\in\Uset$. 
Nevertheless, by (\ref{eq:MA2sump}), 
\begin{align}
\sum_{u,x_1,s} P_U(u) P_{X_1|U}(x_1|u)Q_A(s|x_1,u)l(s)  <& \tLambda_1(P_{U,X_1}) \,.
\end{align}
This is a contradiction, since we have seen above that the distributions $J_{1,u}\equiv Q_A(\cdot|\cdot,u)$ symmetrize-$\Xset_1|\Xset_2$ the AVMAC, for all $u\in\Uset$. It follows that 
the LHS of  (\ref{eq:MdecCNresA2}) must be positive.
This completes the proof of the auxiliary Lemma.
\end{proof}

We move to the main part of the proof. Notice that while the parameters $\eta$, $\eta_1$ and $\eta_2$ can be chosen freely, the rest of the parameters, $\xi$, $\delta_0$, $\delta_1$, and $\delta_2$, depend on $P_U$ and $P_{X_k|U}$.
To show that (\ref{eq:MdisAmb1}) holds for sufficiently small $\eta$ and $\eta_1$, assume to the contrary that there exists $y^n$ in $\Dset(m_1,m_2)\cap\Dset(\breve{m}_1,m_2)\neq\emptyset$ for some $\breve{m}_1\neq m_1$.
By the assumption in the lemma, the codewords $\{f_k(m_k)\}_{m_k\in [1:2^{nR_k}]}$ in Codebook $k$ have the same conditional type, given
the time sharing sequence $u^n$, for $k=1,2$. In particular, $P_{\tX_1|U}=P_{X_1|U}=P_1$ and $P_{\tX_2|U}=P_{X_2|U}=P_2$. 

By Condition 1) of the decoding rule for $\Dset(m_1,m_2)$,
\begin{align}
&D(P_{U,X_1,X_2,S,Y}|| P_U\times P_{X_1|U}\times P_{X_2|U}\times P_{S|U} \times \mac) \nonumber \\
=&\sum_{u,x_1,x_2,s,y} P_{U,X_1,X_2,S,Y}(u,x_1,x_2,s,y) 
\cdot  \log \frac{P_{U,X_1,X_2,S,Y}(u,x_1,x_2,s,y)}{ P_U(u) P_1(x_1|u) P_2(x_2|u) P_{S|U}(s|u)  \mac(y|x_1,x_2,s)}  \leq \eta \,,
\label{eq:MDrule1q}
\end{align}
and by Condition 2b) of the decoding rule for $\Dset(m_1,m_2)$,
\begin{align}
I(X_1,X_2,Y;\tX_1|U,S) 
=& \sum_{u,x_1,\tx_1,x_2,s,y} P_{U,X_1,\tX_1,X_2,S,Y}(u,x_1,\tx_1,x_2,s,y)
\cdot
\log 
\frac{P_{\tX_1|U,X_1,X_2,S,Y}(\tx_1|u,x_1,x_2,s,y)}{P_{\tX_1|U,S}(\tx_1|u,s)} 
\leq \eta_1 \,,
\label{eq:MDrule2bq}
\end{align}
where $U,X_1,\tX_1,X_2,S,Y$ are distributed according to the joint type of 
$u^n$, $f_1(m_1)$, $f_1(\breve{m}_1)$, $f_2(m_2)$, $s^n$, and $y^n$, for some $s^n\in\Sset^n$ with $l^n(s^n)\leq \Lambda$. 
 Adding (\ref{eq:MDrule1q}) and (\ref{eq:MDrule2bq}) yields
\begin{multline}
\sum_{u,x_1,\tx_1,x_2,s,y} P_{U,X_1,\tX_1,X_2,S,Y}(u,x_1,\tx_1,x_2,s,y) 
\cdot\log 
\frac{P_{U,X_1,\tX_1,X_2,S,Y}(u,x_1,\tx_1,x_2,s,y)}{P_U(u) P_1(x_1|u) P_{\tX_1,S|U}(\tx_1,s|u) P_2(x_2|u)\mac(y|x_1,x_2,s)} \\ \leq \eta+\eta_1 \,.
\end{multline}
That is, $D(P_{U,X_1,\tX_1,X_2,S,Y}||  P_U\times P_1\times P_1\times P_{S|U,\tX_1}\times P_2\times \mac )\leq \eta+\eta_1$. Therefore, by the log-sum inequality (see \eg \cite[Theorem 2.7.1]{CoverThomas:06b}),  
\begin{align}
&D(P_{U,X_1,\tX_1,X_2,Y}|| P_U\times P_1\times P_1 \times P_2 \times  V_{Y|U,X_1,\tX_1,X_2}  ) \nonumber\\
\leq& D(P_{U,X_1,\tX_1,X_2,S,Y}|| P_U\times P_1\times P_1\times P_{S|U,\tX_1}\times P_2\times \mac )\leq \eta+\eta_1
\,,
\end{align}
where $V_{Y|U,X_1,\tX_1,X_2}(y|u,x_1,\tx_1,x_2)=\sum_{s\in\Sset} \mac(y|x_1,x_2,s)P_{S|U,\tX_1}(s|u,\tx_1)$.
Then, by Pinsker's inequality (see \eg \cite[Problem 3.18]{CsiszarKorner:82b}),
\begin{multline}
\sum_{u,x_1,\tx_1,x_2,y} |P_{U,X_1,\tX_1,X_2,Y}(u,x_1,\tx_1,x_2,y)  -
 P_U(u) P_1(x_1|u)
 P_1(\tx_1|u)  P_2(x_2|u)  V_{Y|U,X_1,\tX_1,X_2}(y|u,x_1,\tx_1,x_2)|\\   \leq c\sqrt{\eta+\eta_1} \,,
\label{eq:MDrule2bqm1}
\end{multline}
where $c>0$ is a constant. 

Similarly, following our assumption that the decoding rules  for $\Dset(\breve{m}_1,m_2)$ hold as well, Conditions 1) and 2b) claim that
\begin{align}
&D(P_{U,\tX_1,X_2,S,Y}|| P_U\times P_{\tX_1|U}\times P_{X_2|U}\times P_{\tS|U} \times \mac)\leq \eta \,, 
\intertext{and}
&I(\tX_1,X_2,Y;X_1|U,\tS)\leq \eta_1 \,,
\end{align}
where $U,X_1,\tX_1,X_2,\tS,Y$ are distributed according to the joint type of 
$u^n$, $f_1(m_1)$, $f_1(\breve{m}_1)$, $f_2(m_2)$, $\ts^n$, and $y^n$, for some $\ts^n\in\Sset^n$ with $l^n(\ts^n)\leq \Lambda$. 
Here,  $\tX_1$ and $\tS$ have switched places with $X$ and $S$, respectively, since $f_1(\breve{m}_1)$ and $\ts^n$ are the tested codeword and state sequence, while $f_1(m_1)$ and $s^n$ are the competing ones. 
By the same arguments that led to (\ref{eq:MDrule2bqm1}), it follows that 
\begin{multline}
\sum_{u,x_1,\tx_1,x_2,y} |P_{U,X_1,\tX_1,X_2,Y}(u,x_1,\tx_1,x_2,s)-
 P_U(u)P_1(x_1|u)
 P_1(\tx_1|u)  P_2(x_2|u)  V_{Y|U,X_1,\tX_1,X_2}'(y|u,x_1,\tx_1,x_2)| \\ \leq c\sqrt{\eta+\eta_1} \,,
\label{eq:MDrule2bqbm1}
\end{multline}
where $ V_{Y|U,X_1,\tX_1,X_2}'(y|u,x_1,\tx_1,x_2)=\sum_{s\in\Sset} \mac(y|\tx_1,x_2,s)P_{\tS|U,X_1}(s|u,x_1)$.
Since $P_U(u)\geq\delta_0$ and $P_{X_k|U}(x_k|u)\geq \delta_k$, for all $u\in\Uset$ and $x_k\in\Xset_k$, for $k=1,2$, we have by
(\ref{eq:MDrule2bqm1}) and (\ref{eq:MDrule2bqbm1}) that
\begin{align}
\max_{u,x_1,\tx_1,x_2,y} 
\Big| V_{Y|U,X_1,\tX_1,X_2}(y|u,x_1,\tx_1,x_2) - V_{Y|U,X_1,\tX_1,X_2}'(y|u,x_1,\tx_1,x_2) \Big| 
\leq \frac{2c\sqrt{\eta+\eta_1}}{\delta_0\delta_1^2\delta_2} 
\,,
\label{eq:MVlowd}
\end{align}
Equivalently, the above can be expressed as
\begin{multline}
\max_{u,x_1,\tx_1,x_2,y} 
\Big|\sum_{s} P_{S|U,\tX_1}(s|u,\tx_1) \mac(y|x_1,x_2) 
-\sum_{s} P_{\tS|U,X_1}(s|u,x_1) \mac(y|\tx_1,x_2,s) \Big| 
\\
\leq
 \frac{2c\sqrt{\eta+\eta_1}}{\delta_0 \delta_1^2\delta_2 
} \,,
\label{eq:MVlowd1}
\end{multline}

Now, we show that the state distributions $Q=P_{S|U,\tX_1}$ and $Q'=P_{\tS|U,X_1}$ satisfy the conditions of 
Lemma~\ref{lemm:MA2}. Indeed, 
\begin{align}
&\max\left\{ \sum_{u,\tx_1,s}P_U(u) P_{1}(\tx_1|u)Q(s|u,\tx_1)l(s) ,\, \sum_{u,x_1,s} P_U(u) P_{1}(x_1|u)Q'(s|u,x_1)l(s) \right\} 
\nonumber\\
=& \max\left\{ \sum_{u,\tx_1,s} P_{U}(u) P_{1}(\tx_1|u)P_{S|U,\tX_1}(s|u,\tx_1)l(s) ,\, \sum_{u,x_1,s} P_U(u) P_{1}(x_1|u)P_{\tS|U,X_1}(s|u,x_1)l(s) \right\} 
\nonumber\\
=&\max\left\{ \sum_{s} P_S(s)l(s) ,\, \sum_{s} P_{\tS}(s)l(s)  \right\}
\nonumber\\
=&\max\left\{ l^n(s^n),\, l^n(\ts^n) \right\}\leq  \Lambda< \tLambda_1(P_{U,X_1}) \,,
\end{align}
where the last inequality is due to (\ref{eq:MdecLambda}). 
Thus, there exists $\xi>0$ such that (\ref{eq:MdecCNresA2}) holds with $Q=P_{S|U,\tX_1}$ and $Q'=P_{\tS|U,X_1}$, 
which contradicts (\ref{eq:MVlowd1}), if we choose $\eta$ and $\eta_1$ to be sufficiently small such that 
$\frac{2c\sqrt{\eta+\eta_1}}{\delta_0\delta_1^2\delta_2 }<\xi$.
\qed

\section{Proof of Lemma~\ref{lemm:McodeBsets}}
\label{app:McodeBsets}
Fix a sequence $u^n\in\Uset^n$ of type $P_U$. 
Let $\bar{ Z}^n(m)=(Z_1^n(m_1),Z_2^n(m_2))$,  $m\in [1:2^{n(R_1+R_2)}]$, be independent sequence pairs, uniformly distributed over the conditional type classes $\Tset^n(P_1)$ and $\Tset^n( P_2)$, 
where we have assigned an index $m\in [1:2^{n(R_1+R_2)}]$  to each message pair $(m_1,m_2)\in [1:2^{nR_1}]\times [1:2^{nR_2}]$.
Fix $a_1^n\in\Xset_1^n$, $a_2^n\in\Xset_2^n$, and $s^n\in\Sset^n$, and consider a joint type 
$P_{U,X_1,X_2,\tX_1,\tX_2,S}$, such that 
$P_{X_1,X_2|U}=P_{\tX_1,\tX_2|U}=P_1\times P_2$, \ie
\begin{align}
P_{X_1,X_2|U}(x_1,x_2|u)=P_{\tX_1,\tX_2|U}(x_1,x_2|u)=P_1(x_1|u)\cdot P_2(x_2|u) \,.
\end{align}
We intend to show that $\{Z_1^n(m_1)\}$ and $\{Z_2^n(m_2)\}$ satisfy each of the desired properties with double exponential high probability $(1-e^{-2^{\dB n}})$, $\dB>0$, implying that there exist deterministic codebooks that satisfy (\ref{eq:M11ebn})-(\ref{eq:M2c3ebn}) simultaneously.
This will only be shown for the properties in parts 1 and 2, since part 3 is symmetric with part 2.  

 We will use the following large deviations result by Csisz\'{a}r and Narayan 
\cite{CsiszarNarayan:88p}. 
\begin{lemma}[see {\cite[Lemma A1]{CsiszarNarayan:88p}}]
\label{lemm:MbookLD}
Let $\alpha,\beta\in [0,1]$, and consider a sequence of random vectors $ V^n(m)$, and functions $\varphi_m: \Xset^{nm}\rightarrow [0,1]$, for $m\in [1:\dM]$. If
\begin{align}
\E \left( \varphi_m( V^n(1)\,\ldots, V^n(m) ) \big|  V^n(1)\,\ldots,  V^n(m-1) \right) \leq \alpha \;\text{a.s., for
$m\in [1:\dM]$ } \,,
\end{align}
then
\begin{align}
\prob{\sum_{m=1}^{\dM} \varphi_m(  V^n(1)\,\ldots,  V^n(m) )>\dM \beta  }\leq 
 \exp\{ -\dM (\beta-\alpha\log e) \} \,.
\end{align}
\end{lemma}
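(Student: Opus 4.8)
The plan is to prove Lemma~\ref{lemm:MbookLD} by a standard Chernoff-type argument: apply the exponential Markov inequality to $\sum_{m=1}^{\dM}\varphi_m$ and then control the resulting moment generating function one summand at a time, via the tower property of conditional expectation. No independence among the $V^n(m)$ is needed; the only structural facts used are that $\varphi_m(V^n(1),\dots,V^n(m))$ is measurable with respect to $\mathcal{F}_m:=\sigma(V^n(1),\dots,V^n(m))$, and the conditional first-moment hypothesis.

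First I would fix a parameter $\lambda>0$ and, applying Markov's inequality to the nonnegative random variable $\exp\{\lambda\sum_{m=1}^{\dM}\varphi_m\}$, obtain
\begin{align}
\prob{\sum_{m=1}^{\dM}\varphi_m>\dM\beta}\le\exp\{-\lambda\dM\beta\}\,\E\!\left[\exp\Big\{\lambda\sum_{m=1}^{\dM}\varphi_m\Big\}\right].
\end{align}
To bound the moment generating function I would peel off the last summand; since $\exp\{\lambda\sum_{m=1}^{\dM-1}\varphi_m\}$ is $\mathcal{F}_{\dM-1}$-measurable, the tower property gives
\begin{align}
\E\!\left[\exp\Big\{\lambda\sum_{m=1}^{\dM}\varphi_m\Big\}\right]=\E\!\left[\exp\Big\{\lambda\sum_{m=1}^{\dM-1}\varphi_m\Big\}\,\E\big[\exp\{\lambda\varphi_{\dM}\}\,\big|\,\mathcal{F}_{\dM-1}\big]\right].
\end{align}

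Because $\varphi_{\dM}$ takes values in $[0,1]$, convexity of $t\mapsto e^{\lambda t}$ yields the pointwise linearization $e^{\lambda\varphi_{\dM}}\le 1+(e^{\lambda}-1)\varphi_{\dM}$, so with $\E[\varphi_{\dM}\mid\mathcal{F}_{\dM-1}]\le\alpha$ a.s.\ and $1+x\le e^{x}$,
\begin{align}
\E\big[\exp\{\lambda\varphi_{\dM}\}\,\big|\,\mathcal{F}_{\dM-1}\big]\le 1+(e^{\lambda}-1)\alpha\le\exp\{(e^{\lambda}-1)\alpha\}\qquad\text{a.s.}
\end{align}
Iterating this estimate from $m=\dM$ down to $m=1$ collapses the moment generating function to $\exp\{(e^{\lambda}-1)\alpha\dM\}$, and substituting back gives
\begin{align}
\prob{\sum_{m=1}^{\dM}\varphi_m>\dM\beta}\le\exp\big\{-\dM\big(\lambda\beta-(e^{\lambda}-1)\alpha\big)\big\}.
\end{align}

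It remains to pick the free parameter: the choice $\lambda=\ln 2$ makes $e^{\lambda}-1=1$, so the exponent becomes $\dM(\beta\ln 2-\alpha)$ and the bound reads $2^{-\dM(\beta-\alpha\log e)}$, which is the stated $\exp\{-\dM(\beta-\alpha\log e)\}$ under the base-$2$ convention for $\exp$ (consistent with the $\log e=\log_2 e$ appearing in the statement). I do not anticipate a genuine obstacle here — this is a routine martingale-style concentration estimate — but two points require care: the filtration/measurability bookkeeping, since the $V^n(m)$ are not assumed independent and one must genuinely condition on the past rather than multiply marginals; and tracking the logarithm/exponential base so that the constant comes out exactly as $\beta-\alpha\log e$ rather than one of the other admissible values of $\lambda\beta-(e^{\lambda}-1)\alpha$.
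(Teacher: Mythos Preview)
Your proof is correct and is precisely the standard Chernoff--martingale argument used to establish this lemma; the paper itself does not supply a proof but simply cites \cite[Lemma~A1]{CsiszarNarayan:88p}, where the same argument appears. Your handling of the filtration via the tower property and the convexity linearization $e^{\lambda t}\le 1+(e^\lambda-1)t$ for $t\in[0,1]$ is exactly what is needed, and your remark on the base-$2$ convention for $\exp$ (so that $\log e=\log_2 e$) correctly accounts for the constant in the exponent.
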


\subsection*{Part 1}

To show that (\ref{eq:M11ebn}) holds, consider the indicator function
\begin{align}
\varphi_{m}(\bar{ Z}^n(1),\ldots,\bar{ Z}^n(m))=
\begin{cases}
1 &\text{if $(u^n, a_1^n, a_2^n,  Z_1^n(m_1), Z_2^n(m_2),s^n)\in\Tset^n(P_{U,X_1,X_2,\tX_1,\tX_2,S})$}
\\
0 &\text{otherwise}
\end{cases}
\label{eq:MindJ1}
\end{align}
where $\bar{ Z}^n(m)=(Z_1^n(m_1),Z_2^n(m_2))$ as defined above.
By standard type class considerations (see \eg \cite[Theorem 1.3]{Kramer:08n}), we have that
\begin{align}
\E \left[ \varphi_{m}(\bar{ Z}^n(1),\ldots,\bar{ Z}^n(m)  \big|
\bar{ Z}^n(1),\ldots,\bar{ Z}^n(m-1) \right] \leq&  
2^{-n\left(I(\tX_1,\tX_2;U,X_1,X_2,S)-\frac{\eps}{4} \right)} 
\label{eq:MphiZb11}
\\
\leq& 2^{-n\left(I(\tX_1,\tX_2;X_1,X_2,S|U)-\frac{\eps}{4} \right)} 
 \,,
\label{eq:MphiZb1}
\end{align}
where the last inequality holds since $I(\tX_1,\tX_2;U,X_1,X_2,S)\geq I(\tX_1,\tX_2;X_1,X_2,S|U)$.

Next, we use Lemma~\ref{lemm:MbookLD}, and plug
\begin{align}
&( V(1),\ldots, V(\dM)) \leftarrow (\bar{ Z}^n(1),\ldots,\bar{ Z}^n(2^{n(R_1+R_2)})) 
\,,\; \dM=2^{n(R_1+R_2)} \,,
\nonumber\\
& \alpha= 2^{-n\left(I(\tX_1,\tX_2;X_1,X_2,S|U)-\frac{\eps}{4} \right)}  \,,\;  \nonumber\\
& \beta=2^{n\left( \left[ R_1+R_2-I(\tX_1,\tX_2;X_1,X_2,S|U) \right]_{+} -(R_1+R_2)+\eps \right)} \,.
\end{align}
For sufficiently large $n$, we have that $\dM(\beta-\alpha\log e)\geq 2^{n\eps/2}$. Hence, by 
Lemma~\ref{lemm:MbookLD},
\begin{align}
&\prob{ \sum_{m=1}^{2^{n(R_1+R_2)}} \varphi_{m}(\bar{ Z}^n(1),\ldots,\bar{ Z}^n(2^{n(R_1+R_2)}))
>2^{n\left( \left[ R_1+R_2-I(\tX_1,\tX_2;X_1,X_2,S|U) \right]_{+} +\eps \right)  }} 
\leq 
 e^{- 2^{n\eps/2}}  \,.
\label{eq:MdLm1n}
\end{align}
The double exponential decay of the probability above 
implies that there exist codebooks which satisfy (\ref{eq:M11ebn}).

Similarly, to show  (\ref{eq:M12ebn}), we replace the indicator of the type $P_{U,X_1,X_2,\tX_1,\tX_2,S}$ in (\ref{eq:MindJ1}) by an indicator of the type $P_{U,\tX_1,\tX_2,S}$, and rewrite (\ref{eq:MphiZb11})
with 
$I(\tX_1,\tX_2;S|U)$,
 to obtain
\begin{align}
\Pr\Big( |\{ (\tm_1,\tm_2) \,:\; (u^n,Z_1^n(\tm_1),Z_2^n(\tm_2),s^n)\in\Tset^n(P_{U,\tX_1,\tX_2,S}) \}|>
2^{n\left( \left[ R_1+R_2-I(\tX_1,\tX_2;S|U) \right]_{+} +\eps_1 \right)}
 \Big) \;<  e^{- 2^{n\eps_1/2}}  \,,
\label{eq:Txs1}
\end{align}
where $\eps_1>0$ is arbitrarily small.
If $I(\tX_1,\tX_2;S|U)>\eps$ and $R_1+R_2\geq\eps$, then choosing $\eps_1=\frac{\eps}{2}$, we have that
\begin{align}
\left[ R_1+R_2-I(\tX_1,\tX_2;S|U) \right]_{+}+\eps_1 \leq R_1+R_2-\frac{\eps}{2} \,,
\end{align}
 hence,
\begin{align}
\Pr\Big( |\{ (\tm_1,\tm_2) \,:\; (u^n,Z_1^n(\tm_1),Z_2^n(\tm_2),s^n)\in\Tset^n(P_{U,\tX_1,\tX_2,S}) \}|>
2^{n\left(  R_1+R_2- \frac{\eps}{2} \right)}
 \Big) \;
<
e^{- 2^{n\eps/4}}  \,.
\end{align}

It remains to show that (\ref{eq:M13ebn}) holds. Assume that
\begin{align}
I(X_1,X_2;\tX_1,\tX_2,S|U)-\left[ R_1+R_2-I(\tX_1,\tX_2;S|U) \right]_{+}>\eps \,.
\label{eq:MencJassump1}
\end{align}
 Let $\Aset_m$ denote the set of indices 
$\tm<m$ such that $(u^n,\bar{ Z}^n(\tm),s^n)\in\Tset^n(P_{U,\tX_1,\tX_2,S})$, provided that their number does not exceed $2^{n\left(\left[ R_1+R_2-I(\tX_1,\tX_2;S|U) \right]_{+} +\frac{\eps}{8} \right)}$; else, let $\Aset_m=\emptyset$. Also, let
\begin{align}
\psi_m(\bar{ Z}^n(1),\ldots,\bar{ Z}^n(m))=\begin{cases}
1 &\text{if $(u^n,\bar{ Z}^n(m),\bar{ Z}^n(\tm),s^n)\in\Tset^n(P_{U,X_1,X_2,\tX_1,\tX_2,S})$}\\
  &\text{for some $\tm\in\Aset_m$}\,, \\
0 &\text{otherwise.}
\end{cases}
\end{align}
Then, choosing $\eps_1=\frac{\eps}{8}$ in (\ref{eq:Txs1}) yields
\begin{align}
&\Pr \Big( \sum_{m=1}^{2^{n(R_1+R_2)}} \psi_m(\bar{ Z}^n(1),\ldots,\bar{ Z}^n(m))\neq\; \nonumber\\&
|\{ m \,:\; (u^n,\bar{ Z}^n(m),\bar{ Z}^n(\tm),s^n)\in\Tset^n(P_{U,X_1,X_2,\tX_1,\tX_2,S}) \;\text{for some $\tm<m$} \}|    
\Big) \;<  e^{- 2^{n\eps/16}} \,.
\label{eq:MsetsEquiv1}
\end{align}
Therefore, instead of bounding the set of message pairs, it is sufficient to consider the sum $\sum
\psi_m(\bar{ Z}^n(1),\ldots,\bar{ Z}^n(m))$.
Furthermore, by standard type class considerations (see \eg \cite[Theorem 1.3]{Kramer:08n}), we have that
\begin{align}
&\E \left( \psi_m(\bar{ Z}^n(1),\ldots,\bar{ Z}^n(m)) \big| \bar{ Z}^n(1),\ldots,\bar{ Z}^n(m-1) \right) \leq
|\Aset_m|\cdot 2^{-n\left(I(X_1,X_2;\tX_1,\tX_2,S|U)-\frac{\eps}{8} \right)}
\nonumber\\
\leq& 2^{n\left(\left[ R_1+R_2-I(\tX_1,\tX_2;S|U) \right]_{+}-I(X_1,X_2;\tX_1,\tX_2,S|U)+\frac{\eps}{4} \right)}
< 2^{-3n\eps/4} \,,
\end{align}
where the last inequality is due to (\ref{eq:MencJassump1}). Thus, by Lemma~\ref{lemm:MbookLD},
\begin{align}
\prob{ \sum_{m=1}^{2^{n(R_1+R_2)}} \psi_m(\bar{ Z}^n(1),\ldots,\bar{ Z}^n(m))>
2^{n\left( R_1+R_2-\frac{\eps}{2} \right)} }<
e^{-2^{n\left(R_1+R_2-\frac{3\eps}{4}  \right)}}\leq e^{-2^{n\eps/4}} \,, 
\label{eq:Mb1j}
\end{align}
as we have assumed that $R_1+R_2\geq \eps$.
Equations (\ref{eq:MsetsEquiv1}) and (\ref{eq:Mb1j}) imply that the property in (\ref{eq:M13ebn}) holds with double exponential probability $1-e^{-2^{\dE_1\cdot n}}$, where $\dE_1>0$.

\subsection*{Part 2}
Fix $m_2\in [1:2^{nR_2}]$ and $ Z_2^n(m_2)=z_2^n\in \Tset^n(P_2)$. 
To show that (\ref{eq:M2b1ebn}) holds, consider the indicator
\begin{align}
\varphi_{ m_1}( Z_1^n(1),\ldots, Z_1^n(m_1))=
\begin{cases}
1 &\text{if $( u^n,a_1^n,z_2^n, Z_1^n(m_1),s^n)\in\Tset^n(P_{U,X_1,X_2,\tX_1,S})$
} 
\\
0 &\text{otherwise}
\end{cases}
\label{eq:MindJ2b}
\end{align}
By standard type class considerations (see \eg \cite[Theorem 1.3]{Kramer:08n}), we have that
\begin{align}
\E \left[ \varphi_{ m_1}( Z_1^n(1),\ldots, Z_1^n(m_1))  \big|
 Z_1^n(1),\ldots, Z_1^n(m_1-1) \right] \leq 
2^{-n\left(I(\tX_1;X_1,X_2,S|U)-\frac{\eps}{4} \right)}  \,.
\end{align}

Next, we use Lemma~\ref{lemm:MbookLD}, and plug
\begin{align}
&( V(1),\ldots, V(\dM)) \leftarrow ( Z_1^n(1),\ldots, Z_1^n(2^{nR_1})) 
\,,\; \dM=2^{nR_1} \,,
\nonumber\\
& \alpha= 2^{-n\left(I(\tX_1;X_1,X_2,S|U)-\frac{\eps}{4} \right)}  \,,\;  \nonumber\\
& \beta=2^{n\left( \left[ R_1-I(\tX_1;X_1,X_2,S|U) \right]_{+} -R_1+\eps \right)} \,.
\end{align}
For sufficiently large $n$, we have that $\dM(\beta-\alpha\log e)\geq 2^{n\eps/2}$. Hence, by 
Lemma~\ref{lemm:MbookLD},
\begin{align}
&\prob{ \sum_{m_1=1}^{2^{nR_1}} \varphi_{ m_1}( Z_1^n(1),\ldots, Z_1^n(2^{nR_1}))
>2^{n\left( \left[ R_1-I(\tX_1;X_1,X_2,S|U) \right]_{+} +\eps \right)  }} \leq 
 e^{- 2^{n\eps/2}}  \,.
\label{eq:MdLm2bn}
\end{align}
By the symmetry between $m_1$ and $\tm_1$ in the derivation above, the double exponential decay of the probability in (\ref{eq:MdLm2bn}) implies that there exist codebooks which satisfy (\ref{eq:M2b1ebn}).

Next, we show that (\ref{eq:M2b3ebn}) holds. 
Replacing the indicator of the type $P_{X_1,X_2,\tX_1,S}$ in (\ref{eq:MindJ2b}) with an indicator of the type $P_{\tX_1,S}$ yields
\begin{align}
\Pr\Big( |\{ \tm_1 \,:\; (u^n,x_1^n(\tm_1),s^n)\in\Tset^n(P_{U,\tX_1,S}) \}|>
2^{n\left( \left[ R_1-I(\tX_1;S|U) \right]_{+} +\eps_2 \right)}
 \Big) \;<  e^{- 2^{n\eps_2/2}}  \,,
\label{eq:Txs2b}
\end{align}
where $\eps_2>0$ is arbitrarily small.
 Assume that
\begin{align}
I(X_1,X_2;\tX_1,S|U)-\left[ R_1-I(\tX_1;S|U) \right]_{+}>\eps \,.
\label{eq:M2bencJassump}
\end{align}
 Let $\Aset_{ m_1}$ denote the set of indices 
$\tm_1<m_1$ such that $( Z_1^n(\tm_1),s^n)\in\Tset^n(P_{\tX_1,S})$, provided that their number does not exceed $2^{n\left(\left[ R_1-I(\tX_1;S) \right]_{+} +\frac{\eps}{8} \right)}$; else, let $\Aset_{ m_1}=\emptyset$. Also, let
\begin{align}
\psi_{ m_1}( Z_1^n(1),\ldots, Z_1^n(m_1))=\begin{cases}
1 &\text{if $(u^n, Z_1^n(m_1),z_2^n, Z_1^n(\tm_1),s^n)\in\Tset^n(P_{U,X_1,X_2,\tX_1,S})$, 
for some $\tm_1\in\Aset_{ m_1}$}\,, \\
0 &\text{otherwise.}
\end{cases}
\end{align}
Then, choosing $\eps_2=\frac{\eps}{8}$ in (\ref{eq:Txs2b}) yields
\begin{multline}
\Pr \Big( \sum_{m_1=1}^{2^{nR_1}} \psi_{ m_1}( Z_1^n(1),\ldots, Z_1^n(m_1))\neq\; 
|\{ m_1 \,:\;
 (u^n, Z_1^n(m_1),z_2^n, Z_1^n(\tm_1),s^n)\in\\ 
\Tset^n(P_{U,X_1,X_2,\tX_1,S}) \;\text{for some $\tm_1<m_1$} \}|    
\Big) \;<  e^{- 2^{n\eps/16}} \,.
\label{eq:MsetsEquiv2b}
\end{multline}
Therefore, instead of bounding the set of messages, it is sufficient to consider the sum $\sum
\psi_{ m_1}( Z_1^n(1),\ldots, Z_1^n(m_1))$.
Furthermore, by standard type class considerations (see \eg \cite[Theorem 1.3]{Kramer:08n}), we have that
\begin{align}
\E \left( \psi_{ m_1}( Z_1^n(1),\ldots, Z_1^n(m_1)) \big|  Z_1^n(1),\ldots, Z_1^n(m_1-1) \right) 
 \leq&
|\Aset_{ m_1}|\cdot 2^{-n\left(I(X_1,X_2;\tX_1,S|U)+\frac{\eps}{8} \right)}
\nonumber\\
\leq& 2^{n\left(\left[ R_1-I(\tX_1;S|U) \right]_{+}-I(X_1,X_2;\tX_1,S|U)+\frac{\eps}{4} \right)}
< 2^{-3n\eps/4} \,,
\end{align}
where the last inequality is due to (\ref{eq:M2bencJassump}). Thus, by Lemma~\ref{lemm:MbookLD},
\begin{align}
\prob{ \sum_{m_1=1}^{2^{nR_1}} \psi_{ m_1}( Z_1^n(1),\ldots, Z_1^n(m_1))>
2^{n\left(R_1-\frac{\eps}{2} \right)} }<
e^{-2^{n\left(R_1-\frac{3\eps}{4}  \right)}}\leq e^{-2^{n\eps/4}} \,, 
\label{eq:Mb2bj}
\end{align}
as we have assumed that $R_1+R_2\geq \eps$.
The double exponential bounds in (\ref{eq:MsetsEquiv2b}) and (\ref{eq:Mb2bj}) imply that there exists codebooks that satisfy (\ref{eq:M2b3ebn}) as well. 
\qed

\section{Proof of  Theorem~\ref{theo:MCavc}}
\label{app:MCavc}
To prove the theorem, we consider each case in the definition of the region $\MICavc$ separately (see Definition~\ref{def:MICavc}). Case A requires most of the effort, as the other cases follow from similar, yet simpler, considerations.

\subsection*{Case A}
Suppose that $L^*>\Lambda$, $L_1^*>\Lambda$ and $L_2^*>\Lambda$.

\subsubsection*{Achievability Proof}
Let $\eps>0$ be chosen later, and $u^n\in\Uset^n$ be a sequence in the type class of $P_U$, such that $P_U(u)>0$ $\forall$ $u\in\Uset$.
For $k=1,2$, let $P_{X_k|U}$ be a conditional type over $\Xset_k$, for which $P_{X_k|U}(x_k|u)>0$ $\forall x_k\in\Xset_k$, $u\in\Uset$, 
$\E \cost_k(X_k)\leq \plimit_k$,  with
\begin{align}
\tLambda_k(P_{U,X_k})>&\Lambda \,,\;
\intertext{and}
\tLambda(P_{U,X_1,X_2})>&\Lambda \,.
\end{align}
Furthermore, choose $\eta,\eta_1,\eta_2>0$ accordingly to be sufficiently small, such that Lemma~\ref{lemm:MdisDec} guarantees that the decoder in Definition~\ref{def:MLdecoder} is well defined.
%
Now, Lemma~\ref{lemm:McodeBsets} assures that there are codebooks, $\{x_1^n(m_1):m_1\in [1:2^{nR_1}]\}$ of type $P_{X_1|U}$, and $\{x_2^n(m_2):m_2\in [1:2^{nR_2}]\}$ of type $P_{X_2|U}$, which satisfy (\ref{eq:M11ebn})-(\ref{eq:M2c3ebn}).
Consider the following coding scheme.

\emph{Encoding}: To send $m_k\in [1:2^{nR_k}]$, Encoder $k$ transmits $x_k^n(m_k)$, for $k=1,2$.

\emph{Decoding}: Find a unique message pair $(\hm_1,\hm_2)$ such that the received sequence $y^n$ belongs to $\Dset(\hm_1,\hm_2)$, as in Definition~\ref{def:MLdecoder}. If there is none, declare an error.
Lemma~\ref{lemm:MdisDec} guarantees that there cannot be two message pairs for which this holds.

\emph{Analysis of Probability of Error}: Fix $s^n\in\Sset^n$ with $l^n(s^n)\leq\Lambda$, let $q=P_{S|U}$ denote the conditional type of $s^n$ given $u^n$, and let $(M_1,M_2)$ denote the transmitted message pair.
Consider the error events
\begin{align}
\Eset_{1}=&\{ D(P_{U,X_1,X_2,S,Y}||P_U\times P_{X_1|U}\times P_{X_2|U}\times P_{S|U} \times \mac)> \eta \}
\\
\Eset_{2\text{a}}=&\{ \text{Condition 2a) of the decoding rule is violated} \}
\\
\Eset_{2\text{b}}=&\{ \text{Condition 2b) of the decoding rule is violated} \}
\\
\Eset_{2\text{c}}=&\{ \text{Condition 2c) of the decoding rule is violated} \}
\end{align}
and 
\begin{align}
\Fset_1=&\{ I_q(U,X_1,X_2;S)>\eps \} \,, \\
\Fset_2=&\{ I_q(X_1,X_2;\tX_1,\tX_2,S|U)>\left[ R_1+R_2-I(\tX_1,\tX_2;S|U) \right]_{+}+\eps \,,
\; 
\text{for some $\tm_1\neq M_1$ and $\tm_2\neq M_2$}
\} \,,
\\
\Fset_3=&\{ I_q(X_1,X_2;\tX_1,S|U)>\left[ R_1-I(\tX_1;S|U) \right]_{+}+\eps \,,\;
\text{for some $\tm_1\neq M_1$}
\} \,,
\\
\Fset_4=&\{ I_q(X_1,X_2;\tX_2,S|U)>\left[ R_2-I(\tX_2;S|U) \right]_{+}+\eps \,,\;
\text{for some $\tm_2\neq M_2$}
\} \,,
\end{align}
where $(U,X_1,X_2,\tX_1,\tX_2,S)$ are dummy random variables, which are distributed as the joint type 
of $(x_1^n(M_1),x_2^n(M_2),x_1^n(\tm_1),$ $x_2^n(\tm_2),s^n)$. By the union of events bound,
\begin{align}
\cerr(\code)\leq& \prob{\Fset_1}+\prob{\Fset_2}+\prob{\Fset_3}+\prob{\Fset_4}
\nonumber\\&
+\prob{\Eset_{1}\cap\Fset_1^c}+
\prob{\Eset_{2\text{a}}\cap\Eset_1^c\cap\Fset_2^c}+
\prob{\Eset_{2\text{b}}\cap\Eset_1^c\cap\Fset_3^c}+
\prob{\Eset_{2\text{c}}\cap\Eset_1^c\cap\Fset_4^c} \,,
\end{align}
where the conditioning on $s^n$ is omitted for convenience of notation.
Based on Lemma~\ref{lemm:McodeBsets}, 
the probabilities of the events $\Fset_1$, $\Fset_2$, $\Fset_3$, and $\Fset_4$, tend to zero as $n\rightarrow\infty$,
by (\ref{eq:M12ebn}), (\ref{eq:M13ebn}), (\ref{eq:M2b3ebn}), and (\ref{eq:M2c3ebn}), respectively.

Now, suppose that Condition 1) of the decoding rule is violated. 
Observe that the event $\Eset_{1}\cap\Fset_1^c$ implies that 
\begin{align}
&D(P_{U,X_1,X_2,S,Y}||P_{U,X_1,X_2,S}\times \mac)
\nonumber\\
=& D(P_{U,X_1,X_2,S,Y}||P_U\times P_{X_1|U}\times P_{X_2|U}\times P_{S|U} \times \mac)-I(X_1,X_2;S|U)
>\eta-\eps \,.
\end{align}
Then,  by standard large deviations considerations (see \eg \cite[pp. 362--364]{CoverThomas:06b}),
\begin{align}
\prob{\Eset_{1}\cap\Fset_1^c}  
\leq& \max_{P_{U,X_1,X_2,S,Y}\,:\; \Eset_{1}\cap\Fset_1^c\;\text{holds}
} 2^{-n (D(P_{U,X_1,X_2,S,Y} || P_{U,X_1,X_2,S}\times \mac)-\eps)} 
< 2^{-n(\eta-2\eps)} \,,
\end{align}
which tends to zero as $n \rightarrow \infty$, for sufficiently small $\eps>0$, with $\eps<\frac{1}{2}\eta$.

Moving to Condition 2a) of the decoding rule, let $\Dset_{2\text{a}}$ denote the set of joint types 
$P_{U,X_1,X_2,\tX_1,\tX_2,S}$ such that
\begin{align}
&D(P_{U,X_1,X_2,S,Y}||P_U\times P_{X_1|U}\times P_{X_2|U}\times P_{S|U} \times \mac)\leq \eta
\label{eq:D2a11}
 \,, \\
&D(P_{U,\tX_1,\tX_2,\tS,Y}||P_U\times P_{\tX_1|U}\times P_{\tX_2|U}\times P_{\tS|U} \times \mac)\leq \eta \,,\;
\text{for some  $\tS\sim \tq(s|u)$} \,,
\label{eq:D2a2}
\\
&I_q(X_1,X_2,Y;\tX_1,\tX_2|U,S)>\eta \,.
\label{eq:D2a3}
\end{align}
Observe that the event $\Eset_1^c$ implies that (\ref{eq:D2a11}) holds.
Also, when the event $\Eset_{\text{2a}}$ occurs, \ie Condition 2a) of the decoding rule is violated, then there exist $\tm_1\neq m_1$ and $\tm_2\neq m_2$ such that
(\ref{eq:D2a2}) and (\ref{eq:D2a3}) hold for some $\ts^n\in\Sset^n$ with $l^n(\ts^n)\leq \Lambda$.
Then,  
by standard type class considerations (see \eg \cite[Theorem 1.3]{Kramer:08n}), 
\begin{align}
&\cprob{\Eset_{2\text{a}}\cap\Eset_1^c\cap\Fset_2^c}{M_1=m_1,M_2=m_2} \nonumber\\ \leq&
\sum_{  \substack{P_{U,X_1,X_2,\tX_1,\tX_2,S}\in \Dset_{2\text{a}} \,:\; \\   \Fset_2^c \;\text{holds}  }   }
|\{(\tm_1,\tm_2)\,:\;
 (u^n,x_1^n(m_1),x_2^n(m_2),x_1^n(\tm_1),x_2^n(\tm_2),s^n)\in \Tset^n(P_{U,X_1,X_2,\tX_1,\tX_2,S}) \}|
\nonumber\\
&\times 2^{-n\left( I_q(\tX_1,\tX_2;Y|U,X_1,X_2,S)-\eps  \right)} \,,
\end{align}
for every given $m_1\in [1:2^{nR_1}]$ and $m_2\in [1:2^{nR_2}]$. Hence, by (\ref{eq:M11ebn}),
\begin{align}
&\prob{\Eset_{2\text{a}}\cap\Eset_1^c\cap\Fset_2^c}\leq 
\sum_{  \substack{P_{U,X_1,X_2,\tX_1,\tX_2,S}\in \Dset_{2\text{a}} \,:\;  \\   \Fset_2^c \;\text{holds}  }   } 
 2^{-n\left( I_q(\tX_1,\tX_2;Y|U,X_1,X_2,S)
-\left[ R_1+R_2-I_q(\tX_1,\tX_2;X_1,X_2,S|U) \right]_{+}-2\eps  \right)} \,.
\label{eq:E2E02cB}
\end{align}

To further bound $\prob{\Eset_{2\text{a}}\cap\Eset_1^c\cap\Fset_2^c}$, consider the following cases.
Suppose that $R_1+R_2\leq I_q(\tX_1,\tX_2;S|U)$. Then, given $\Fset_2^c$, we have that
\begin{align}
I_q(X_1,X_2;\tX_1,\tX_2|U,S)\leq I_q(X_1,X_2;\tX_1,\tX_2,S|U)\leq \eps \,.
\end{align}
By (\ref{eq:D2a3}), it then follows that
\begin{align}
I_q(\tX_1,\tX_2;Y|U,X_1,X_2,S)=&I_q(\tX_1,\tX_2;X_1,X_2,Y|U,S)-I_q(\tX_1,\tX_2;X_1,X_2|U,S) 
\nonumber\\
\geq& \eta-\eps \,.
\label{eq:Mrule2aIcase1}
\end{align}
Returning to (\ref{eq:E2E02cB}), we note that since the number of types is polynomial in $n$, the cardinality of the set of types $\Dset_{2\text{a}}$ can be bounded by $2^{n\eps}$, for sufficiently large $n$. Hence, by (\ref{eq:E2E02cB}) and (\ref{eq:Mrule2aIcase1}), we have that $\prob{\Eset_{2\text{a}}\cap\Eset_1^c\cap\Fset_2^c}\leq 2^{-n(\eta-4\eps)}$, which tends to zero as $n\rightarrow\infty$, for $\eps<\frac{1}{4}\eta$.

Otherwise, if $R_1+R_2> I_q(\tX_1,\tX_2;S|U)$, then given $\Fset_2^c$,
\begin{align}
R_1+R_2>&I_q(X_1,X_2;\tX_1,\tX_2,S|U)+I(\tX_1,\tX_2;S|U)-\eps 
\nonumber\\
=& I_q(\tX_1,\tX_2;X_1,X_2,S|U)+I(X_1,X_2;S|U)-\eps 
\nonumber\\
\geq& I_q(\tX_1,\tX_2;X_1,X_2,S|U)-\eps \,.
\end{align}
Thus,
\begin{align}
\left[ R_1+R_2-I_q(\tX_1,\tX_2;X_1,X_2,S|U) \right]_{+}\leq 
R_1+R_2-I_q(\tX_1,\tX_2;X_1,X_2,S|U)+\eps \,.
\end{align}
Hence, by (\ref{eq:E2E02cB}) we have that
\begin{align}
\prob{\Eset_{2\text{a}}\cap\Fset_2^c}\leq& \sum_{  \substack{P_{U,X_1,X_2,\tX_1,\tX_2,S}\in \Dset_{2\text{a}}  \\   \Fset_2^c \;\text{holds}  }   } 2^{-n(I(\tX_1,\tX_2;X_1,X_2,S,Y|U)-R_1-R_2-3\eps )}
\nonumber\\
\leq& \sum_{  \substack{P_{U,X_1,X_2,\tX_1,\tX_2,S}\in \Dset_{2\text{a}} \,:\; \\   \Fset_2^c \;\text{holds}  }   } 2^{-n(I_q(\tX_1,\tX_2;Y|U)-R_1-R_2-3\eps )} \,.
\end{align}
For $P_{U,X_1,X_2,\tX_1,\tX_2,S}\in \Dset_{2\text{a}}$, we have by (\ref{eq:D2a2}) that
$P_{\tX_1,\tX_2,\tS,Y|U}$ is arbitrarily close to some 
$P_{X_1,X_2,\tS,\tY|U}$, where
\begin{align}
P_{X_1,X_2,\tS,\tY|U}(x_1,x_2,s,y|u)=P_{X_1|U}(x_1|u)P_{X_2|U}(x_2|u)\tq(s|u)\mac(y|x_1,x_2,s) \,,
\end{align}
if $\eta>0$ is sufficiently small. In which case, 
\begin{align}
I_q(\tX_1,\tX_2;Y|U)\geq I_{\tq}(X_1,X_2;Y|U)-\delta \,,
\end{align}
where $\delta>0$ is arbitrarily small.
Therefore, provided that
\begin{align}
R_1+R_2 <& \min_{q(s|u) \,:\; \E_q l(S)\leq\Lambda} I_q(X_1,X_2;Y|U)-\delta-5\eps
\,,
\end{align}
we have that $\prob{\Eset_{2\text{a}}\cap\Fset_2^c}\leq 2^{-n(I_q(\tX_1,\tX_2;Y|U)-R_1-R_2-4\eps )}$ tends to zero as $n\rightarrow\infty$.

Next, consider Condition 2b) of the decoding rule, and let $\Dset_{2\text{b}}$ denote the set of joint types 
$P_{U,X_1,X_2,\tX_1,S}$ such that
\begin{align}
&D(P_{U,X_1,X_2,S,Y}||P_U\times P_{X_1|U}\times P_{X_2|U}\times P_{S|U} \times \mac)\leq \eta \,, \\
&D(P_{U,\tX_1,X_2,\tS,Y}||P_U\times P_{\tX_1|U}\times P_{X_2|U}\times P_{\tS|U} \times \mac)\leq \eta \,,\;
\text{for some $\tS\sim \tq(s|u)$} 
\label{eq:D2b2}
\\
&I_q(X_1,X_2,Y;\tX_1|U,S)>\eta \,.
\label{eq:D2b3}
\end{align}
Observe that when 
the event $\Eset_{\text{2b}}$ occurs, \ie Condition 2b) of the decoding rule is violated, then there exists $\tm_1\neq m_1$  such that
(\ref{eq:D2b2}) and (\ref{eq:D2b3}) hold for some $\ts^n\in\Sset^n$ with $l^n(\ts^n)\leq \Lambda$.
Then,  by standard type class considerations (see \eg \cite[Theorem 1.3]{Kramer:08n}), 
\begin{align}
&\cprob{\Eset_{2\text{b}}\cap\Eset_1^c\cap\Fset_3^c}{M_1=m_1,M_2=m_2} \nonumber\\ \leq&
\sum_{  \substack{P_{U,X_1,X_2,\tX_1,S}\in \Dset_{2\text{b}} \,:\;  \\   \Fset_3^c \;\text{holds}  }   }
|\{ \tm_1 \,:\;
 (u^n,x_1^n(m_1),x_2^n(m_2),x_1^n(\tm_1),s^n)\in \Tset^n(P_{U,X_1,X_2,\tX_1,S}) \}|
\cdot
 2^{-n\left( I_q(\tX_1;Y|U,X_1,X_2,S)-\eps  \right)} \,,
\end{align}
for every given $m_1\in [1:2^{nR_1}]$ and $m_2\in [1:2^{nR_2}]$. Hence, by (\ref{eq:M2b1ebn}),
\begin{align}
&\prob{\Eset_{2\text{b}}\cap\Eset_1^c\cap\Fset_3^c}\leq
\sum_{  \substack{P_{U,X_1,X_2,\tX_1,S}\in \Dset_{2\text{b}}  \\   \Fset_{3}^c \;\text{holds}  }   } 
 2^{-n\left( I_q(\tX_1;Y|U,X_1,X_2,S)
-\left[ R_1-I_q(\tX_1;X_1,X_2,S|U) \right]_{+}-2\eps  \right)} \,.
\label{eq:E2bE02cB}
\end{align}

If $R_1\leq I_q(\tX_1;S|U)$, then given $\Fset_3^c$, we have that
\begin{align}
I_q(X_1,X_2;\tX_1|U,S)\leq I_q(X_1,X_2;\tX_1,S|U)\leq \eps \,.
\end{align}
Hence, by (\ref{eq:D2b3}), 
\begin{align}
I_q(\tX_1;Y|U,X_1,X_2,S)=I_q(\tX_1;X_1,X_2,Y|U,S)-I_q(\tX_1;X_1,X_2|U,S) \geq \eta-\eps \,.
\label{eq:Mrule2bIcase1}
\end{align}
Returning to (\ref{eq:E2bE02cB}), we note that since the number of types is polynomial in the sequence length, $|\Dset_{2\text{b}}|\leq 2^{n\eps}$
for sufficiently large $n$. Hence, by (\ref{eq:E2bE02cB}) and (\ref{eq:Mrule2bIcase1}), we have that 
$\prob{\Eset_{2\text{b}}\cap\Fset_3^c}\leq 2^{-n(\eta-4\eps)}$, which tends to zero as $n\rightarrow\infty$, for $\eps<\frac{1}{4}\eta$.

Otherwise, if $R_1> I_q(\tX_1;S|U)$, then given $\Fset_3^c$,
\begin{align}
R_1>&I_q(X_1,X_2;\tX_1,S|U)+I(\tX_1;S|U)-\eps 
\nonumber\\
=& I_q(\tX_1;X_1,X_2,S|U)+I(X_1,X_2;S|U)-\eps 
\nonumber\\
\geq& I_q(\tX_1;X_1,X_2,S|U)-\eps \,.
\end{align}
Thus,
\begin{align}
\left[ R_1-I_q(\tX_1;X_1,X_2,S|U) \right]_{+}\leq 
R_1-I_q(\tX_1;X_1,X_2,S|U)+\eps \,.
\end{align}
Hence, by (\ref{eq:E2bE02cB}) we have that
\begin{align}
\prob{\Eset_{2\text{b}}\cap\Eset_1^c\cap\Fset_3^c}\leq& \sum_{  \substack{P_{U,X_1,X_2,\tX_1,S}\in \Dset_{2\text{b}}  \\   \Fset_3^c \;\text{holds}  }   } 2^{-n(I_q(\tX_1;X_1,X_2,S,Y|U)-R_1-3\eps )}
\nonumber\\
\leq& \sum_{  \substack{P_{U,X_1,X_2,\tX_1,S}\in \Dset_{2\text{b}} \,:\; \\   \Fset_3^c \;\text{holds}  }   } 2^{-n(I_q(\tX_1;Y|X_2,U)-R_1-3\eps )} \,,
\end{align}
where the last inequality holds since
\begin{align}
I_q(\tX_1;X_1,X_2,S,Y|U)=&
I_q(\tX_1;Y|X_2,U)+I_q(\tX_1;X_2|U)+I_q(\tX_1;X_1,S|X_2,Y,U) \nonumber\\
\geq&  I_q(\tX_1;Y|X_2,U)\,.
\end{align}
For $P_{U,X_1,X_2,\tX_1,S}\in \Dset_{2\text{b}}$, we have by (\ref{eq:D2b2}) that
$P_{\tX_1,X_2,\tS,Y|U}$ is arbitrarily close to some 
$P_{X_1,X_2,\tS,\tY|U}$, where
\begin{align}
P_{X_1,X_2,\tS,\tY|U}(x_1,x_2,s,y|u)=P_{X_1|U}(x_1|u)P_{X_2|U}(x_2|u)\tq(s|u)\mac(y|x_1,x_2,s) \,,
\end{align}
if $\eta>0$ is sufficiently small. In which case, 
\begin{align}
I_q(\tX_1;Y|X_2,U)\geq I_{\tq}(X_1;Y|X_2,U)-\delta_1 \,,
\end{align}
where $\delta_1>0$ is arbitrarily small.
Therefore, provided that
\begin{align}
R_1 < \min_{q(s|u) \,:\; \E_q l(S)\leq\Lambda} I_q(X_1;Y|X_2,U)-\delta_1-5\eps
\end{align}
we have that $\prob{\Eset_{2\text{b}}\cap\Fset_3^c}\leq 2^{-n(I_q(\tX_1;Y|X_2,U)-R_1-4\eps )}$ tends to zero as $n\rightarrow\infty$.

In same manner, it can be shown that $\prob{\Eset_{2\text{c}}\cap\Fset_4^c}$ tends to zero as $n\rightarrow\infty$, provided that 
\begin{align}
R_2 < \min_{q(s|u) \,:\; \E_q l(S)\leq\Lambda} I_q(X_2;Y|X_1,U)-\delta_2-5\eps \,,
\end{align}
where $\delta_2>0$ is arbitrarily small.
\qed

\subsubsection*{Converse Proof}
We will use the following lemma, based on the observations of Gubner \cite{Gubner:90p}.
\begin{lemma} 
\label{lemm:Gubner}
Consider the AVMAC free of state constraints, and let  $\code=(f_1,f_2,g)$ be a
$(2^{nR_1},2^{nR_2},n)$ deterministic code.
\begin{enumerate}[1)]
\item
Suppose that $\mac$ is symmetrizable-$\Xset_1\times\Xset_2$, and let $J_i(s|x_1,x_2)$, $i\in [1:n]$, be a set of conditional state distributions that satisfy (\ref{eq:MsymmetrizableJ}).
If $R_1+R_2>0$, then
\begin{align}
&\err(\tq,\code) \geq \frac{1}{4} \,,\;
\intertext{for} 
&\tq(s^n)= \frac{1}{2^{n(R_1+R_2)}} \sum_{m_1=1}^{2^{nR_1}} \sum_{m_2=1}^{2^{nR_2}} J^n(s^n|f_1(m_1),f_2(m_2)) 
\label{eq:MconvFtq}
\,,
\end{align}
where $J^n(s^n|x_1^n,x_2^n)=\prod_{i=1}^n J_i(s_i|x_{1,i},x_{2,i})$.
\item
Suppose that $\mac$ is symmetrizable-$\Xset_1|\Xset_2$, and let $J_{1,i}(s|x_1)$, $i\in [1:n]$, be a set of conditional state 
distributions that satisfy (\ref{eq:Msymmetrizable1}).
If $R_1>0$, then
\begin{align}
&\err(\tq_1,\code) \geq \frac{1}{4} \,,\;
\intertext{for} 
&\tq_1(s^n)= \frac{1}{2^{nR_1}} \sum_{m_1=1}^{2^{nR_1}}  J^n_1(s^n|f_1(m_1)) 
\,,
\label{eq:MconvFtq1}
\end{align}
where $J_1^n(s^n|x_1^n)=\prod_{i=1}^n J_{1,i}(s_i|x_{1,i})$.
\item
Suppose that $\mac$ is symmetrizable-$\Xset_2|\Xset_1$, and let $J_{2,i}(s|x_2)$, $i\in [1:n]$, be a set of conditional state distributions that satisfy (\ref{eq:Msymmetrizable2}).
If $R_2>0$, then
\begin{align}
&\err(\tq_2,\code) \geq \frac{1}{4} \,,\;
\intertext{for} 
&\tq_2(s^n)= \frac{1}{2^{nR_2}} \sum_{m_2=1}^{2^{nR_2}}  J^n_2(s^n|f_2(m_2)) 
\,,
\label{eq:MconvFtq2}
\end{align}
where $J_2^n(s^n|x_2^n)=\prod_{i=1}^n J_{2,i}(s_i|x_{2,i})$.
\end{enumerate}
\end{lemma}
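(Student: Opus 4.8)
The plan is to adapt Gubner's symmetrization argument \cite{Gubner:90p} to the $n$-letter setting, proving part~1) in full and obtaining parts~2) and~3) by the same reasoning restricted to one user at a time. First I would introduce the \emph{averaged channel} induced by the jammer's symmetrizing kernels: with $J^n(s^n|x_1^n,x_2^n)=\prod_{i=1}^n J_i(s_i|x_{1,i},x_{2,i})$, set
\begin{align}
\widetilde{W}^n(y^n|x_1^n,x_2^n,\tx_1^n,\tx_2^n)\triangleq\sum_{s^n\in\Sset^n}\Mnchannel(y^n|x_1^n,x_2^n,s^n)\,J^n(s^n|\tx_1^n,\tx_2^n) \,.
\end{align}
Since the channel is memoryless, $\widetilde{W}^n$ factorizes over the coordinates, and each factor $\sum_{s}\mac(y_i|x_{1,i},x_{2,i},s)J_i(s|\tx_{1,i},\tx_{2,i})$ is invariant under the swap $(x_{1,i},x_{2,i})\leftrightarrow(\tx_{1,i},\tx_{2,i})$ by the symmetrizing identity (\ref{eq:MsymmetrizableJ}). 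Hence $\widetilde{W}^n$ is symmetric under $(x_1^n,x_2^n)\leftrightarrow(\tx_1^n,\tx_2^n)$. Also, $\tq$ in (\ref{eq:MconvFtq}) is a convex combination of product pmfs on $\Sset^n$, hence a legitimate state distribution, and since the lemma concerns the AVMAC free of state constraints, no cost restriction is in force.

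Next I would express the probability of correct decoding under $\tq$. Indexing the $N\triangleq 2^{n(R_1+R_2)}$ message pairs by a single variable and feeding $\tq$ through $W^n$ as in (\ref{eq:Mcerr}), one obtains
\begin{align}
1-\err(\tq,\code)=\frac{1}{N^2}\sum_{m_1,m_2}\,\sum_{\tm_1,\tm_2}\;\sum_{y^n\in\Dset(m_1,m_2)}\widetilde{W}^n\bigl(y^n\,\big|\,f_1(m_1),f_2(m_2),f_1(\tm_1),f_2(\tm_2)\bigr) \,.
\end{align}
The $N$ diagonal terms, where the two message pairs coincide, contribute at most $N$ to $N^2\bigl(1-\err(\tq,\code)\bigr)$, since $\widetilde{W}^n(\cdot|\cdot)$ is a pmf on $\Yset^n$ and $\Dset(m_1,m_2)\subseteq\Yset^n$. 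For the off-diagonal terms I would pair $\bigl((m_1,m_2),(\tm_1,\tm_2)\bigr)$ with its transpose; using the symmetry of $\widetilde{W}^n$ together with the disjointness of $\Dset(m_1,m_2)$ and $\Dset(\tm_1,\tm_2)$ for distinct pairs, the two ordered terms of each pair together sum over the disjoint union $\Dset(m_1,m_2)\cup\Dset(\tm_1,\tm_2)$ and hence contribute at most $1$. With $N(N-1)/2$ such pairs this yields $N^2\bigl(1-\err(\tq,\code)\bigr)\leq N+\tfrac{N(N-1)}{2}=\tfrac{N(N+1)}{2}$, so $1-\err(\tq,\code)\leq\tfrac{N+1}{2N}$. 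Since $R_1+R_2>0$ forces $N\geq 2$, the right-hand side is at most $3/4$, which gives $\err(\tq,\code)\geq\tfrac14$.

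For part~2) I would fix $m_2$ and replace only User~1's codeword by an impostor, working with the single-user averaged channel $\sum_{s^n}\Mnchannel(y^n|x_1^n,x_2^n,s^n)J_1^n(s^n|\tx_1^n)$, which by (\ref{eq:Msymmetrizable1}) is symmetric in $x_1^n\leftrightarrow\tx_1^n$ for every fixed $x_2^n$, and using that $\Dset(m_1,m_2)$ and $\Dset(\tm_1,m_2)$ are disjoint for $m_1\neq\tm_1$. Running the diagonal/off-diagonal count above over the $2^{nR_1}$ User~1 messages for each fixed $m_2$ shows that the contribution of each $m_2$ to $1-\err(\tq_1,\code)$ is at most $3/4$ once $R_1>0$; averaging over $m_2$ preserves the bound, yielding (\ref{eq:MconvFtq1}). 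Part~3) is identical with the two users interchanged, using (\ref{eq:Msymmetrizable2}). I do not anticipate a genuine obstacle: the only steps needing care are the tensorization argument—checking that the product of the per-coordinate symmetrizing kernels yields a symmetric $n$-letter averaged channel even though the $J_i$ may vary with $i$—and the bookkeeping in the pairing step that converts symmetry of $\widetilde{W}^n$ plus disjointness of the decoding sets into the factor $1/2$.
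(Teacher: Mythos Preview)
Your proposal is correct and follows essentially the same approach as the paper: both exploit that the $n$-letter averaged channel is symmetric under the swap of genuine and impostor codewords, and then use that distinct decoding sets are disjoint (equivalently, that the two error events cover $\Yset^n$). The only cosmetic difference is that the paper lower-bounds $\err$ directly while you upper-bound $1-\err$; the arithmetic comes out identical, yielding $\err(\tq,\code)\geq\frac{N-1}{2N}\geq\frac14$ once $N\geq 2$.
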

For completeness, we give the proof below.
\begin{proof}[Proof of Lemma~\ref{lemm:Gubner}]
Denote the codebooks size by  $\dM_k=2^{nR_k}$, $k=1,2$, $\dM=2^{n(R_1+R_2)}$, and the codewords by $x_k^n(m_k)=f_k(m_k)$, $k=1,2$.

Under the conditions of part 1,
\begin{align}
& \err(\tq,\code)=\sum_{s^n\in\Sset^n} q(s^n) \frac{1}{\dM} \sum_{m_1,m_2} 
\sum_{y^n \,:\; g(y^n)\neq (m_1,m_2)} W^n(y^n|x_1^n(m_1),x_2^n(m_2),s^n)
\nonumber\\
=& \frac{1}{\dM^2} \sum_{\tm_1,\tm_2} \sum_{s^n\in\Sset^n} J^n(s^n|x_1^n(\tm_1),x_2^n(\tm_2)) \sum_{m_1,m_2} 
\sum_{y^n \,:\; g(y^n)\neq (m_1,m_2)} W^n(y^n|x_1^n(m_1),x_2^n(m_2),s^n)
\end{align}
where have defined $W^n\equiv W_{Y^n|X_1^n,X_2^n,S^n}$ for short notation.
By switching between the summation indices $(m_1,m_2)$ and $(\tm_1,\tm_2)$, we obtain
\begin{align}
 \err(\tq,\code)
=& \frac{1}{2\dM^2} \sum_{m_1,m_2,\tm_1,\tm_2}\; \sum_{y^n \,:\; g(y^n)\neq (m_1,m_2)} 
\sum_{s^n\in\Sset^n} W^n(y^n|x_1^n(m_1),x_2^n(m_2),s^n) J^n(s^n|x_1^n(\tm_1),x_2^n(\tm_2))  
\nonumber\\
+& \frac{1}{2\dM^2} \sum_{m_1,m_2,\tm_1,\tm_2}\; \sum_{y^n \,:\; g(y^n)\neq (\tm_1,\tm_2)} 
\sum_{s^n\in\Sset^n} W^n(y^n|x_1^n(\tm_1),x_2^n(\tm_2),s^n) J^n(s^n|x_1^n(m_1),x_2^n(m_2))  \,.
\end{align}
Now, as the channel is memoryless, 
\begin{align}
&\sum_{s^n\in\Sset^n} W^n(y^n|x_1^n(\tm_1),x_2^n(\tm_2),s^n) J^n(s^n|x_1^n(m_1),x_2^n(m_2)) 
\nonumber\\ =&
\prod_{i=1}^n \sum_{s_i\in\Sset} \mac(y_i|x_{1,i}(\tm_1),x_{2,i}(\tm_2),s_i)J_i(s_i|x_{1,i}(m_1),x_{2,i}(m_2))
\nonumber\\ =&
\prod_{i=1}^n \sum_{s_i\in\Sset} \mac(y_i|x_{1,i}(m_1),x_{2,i}(m_2),s_i)J_i(s_i|x_{1,i}(\tm_1),x_{2,i}(\tm_2))
\nonumber\\ =&
\sum_{s^n\in\Sset^n} W^n(y^n|x_1^n(m_1),x_2^n(m_2),s^n) J^n(s^n|x_1^n(\tm_1),x_2^n(\tm_2)) \,,
\end{align}
where the second equality is due to (\ref{eq:MsymmetrizableJ}). Therefore, 
\begin{align}
 \err(\tq,\code)
\geq & \frac{1}{2\dM^2} \sum_{(\tm_1,\tm_2)\neq (m_1,m_2)}\; \sum_{s^n\in\Sset^n}
\Big[ \sum_{y^n \,:\; g(y^n)\neq (m_1,m_2)} 
 W^n(y^n|x_1^n(m_1),x_2^n(m_2),s^n) J^n(s^n|x_1^n(\tm_1),x_2^n(\tm_2))  
\nonumber\\&
+\sum_{y^n \,:\; g(y^n)\neq (\tm_1,\tm_2)} 
 W^n(y^n|x_1^n(m_1),x_2^n(m_2),s^n) J^n(s^n|x_1^n(\tm_1),x_2^n(\tm_2))
\Big]
\nonumber\\
\geq& \frac{1}{2\dM^2} \sum_{(\tm_1,\tm_2)\neq (m_1,m_2)}\; \sum_{s^n\in\Sset^n}
\sum_{y^n\in\Yset^n} 
 W^n(y^n|x_1^n(m_1),x_2^n(m_2),s^n) J^n(s^n|x_1^n(\tm_1),x_2^n(\tm_2))  
\nonumber\\
=& \frac{\dM(\dM-1)}{2\dM^2}=\frac{1}{2}\left( 1-\frac{1}{\dM} \right) \,.
\end{align}
Assuming  the sum rate is positive, we have that $\dM\geq 2$, hence $\err(\tq,\code)\geq
\frac{1}{4}$.

Under the conditions of part 2,
\begin{align}
 \err(\tq_1,\code)
=& \frac{1}{\dM_1^2\dM_2} \sum_{\tm_1} \sum_{s^n\in\Sset^n} J^n_1(s^n|x_1^n(\tm_1)) \sum_{m_1,m_2} 
\sum_{y^n \,:\; g(y^n)\neq (m_1,m_2)} W^n(y^n|x_1^n(m_1),x_2^n(m_2),s^n) \,.
\end{align}
Then, switching between $m_1$ and $\tm_1$ yields
\begin{align}
 \err(\tq_1,\code)
=& \frac{1}{2\dM_1^2\dM_2} \sum_{m_1,m_2,\tm_1}\; \sum_{y^n \,:\; g(y^n)\neq (m_1,m_2)} 
\sum_{s^n\in\Sset^n} W^n(y^n|x_1^n(m_1),x_2^n(m_2),s^n) J_1^n(s^n|x_1^n(\tm_1))  
\nonumber\\
&+ \frac{1}{2\dM_1^2\dM_2} \sum_{m_1,m_2,\tm_1}\; \sum_{y^n \,:\; g(y^n)\neq (\tm_1,m_2)} 
\sum_{s^n\in\Sset^n} W^n(y^n|x_1^n(\tm_1),x_2^n(m_2),s^n) J_1^n(s^n|x_1^n(m_1))  \,.
\end{align}
Now, as the channel is memoryless, we have by (\ref{eq:Msymmetrizable1}) that
\begin{align}
&\sum_{s^n\in\Sset^n} W^n(y^n|x_1^n(\tm_1),x_2^n(m_2),s^n) J_1^n(s^n|x_1^n(m_1)) 
\nonumber\\ =&
\sum_{s^n\in\Sset^n} W^n(y^n|x_1^n(m_1),x_2^n(m_2),s^n) J_1^n(s^n|x_1^n(\tm_1)) \,.
\end{align}
 Therefore, 
\begin{align}
 \err(\tq_1,\code)
\geq & \frac{1}{2\dM_1^2\dM_2} \sum_{\tm_1\neq m_1,m_2}\; \sum_{s^n\in\Sset^n}
\Big[ \sum_{y^n \,:\; g(y^n)\neq (m_1,m_2)} 
 W^n(y^n|x_1^n(m_1),x_2^n(m_2),s^n) J_1^n(s^n|x_1^n(\tm_1))  
\nonumber\\&
+\sum_{y^n \,:\; g(y^n)\neq (\tm_1,m_2)} 
 W^n(y^n|x_1^n(m_1),x_2^n(m_2),s^n) J_1^n(s^n|x_1^n(\tm_1))
\Big]
\nonumber\\
\geq& \frac{1}{2\dM_1^2\dM_2} \sum_{\tm_1\neq m_1,m_2}\; \sum_{s^n\in\Sset^n}
\sum_{y^n\in\Yset^n} 
 W^n(y^n|x_1^n(m_1),x_2^n(m_2),s^n) J_1^n(s^n|x_1^n(\tm_1))  
\nonumber\\
=& \frac{\dM_1(\dM_1-1)\dM_2}{2\dM_1^2}\geq \frac{1}{2}\left( 1-\frac{1}{\dM_1} \right) \,.
\end{align}
Since User 1 has a positive rate, $R_1>0$, the corresponding codebook has size $\dM_1\geq 2$, hence $\err(q,\code)\geq
\frac{1}{4}$.

The proof of part 3 is similar, and thus omitted.
\end{proof}

Now, we are in position to prove the converse part of Theorem~\ref{theo:MCavc}.
Consider a sequence of $(2^{nR_1},2^{nR_2},n,\alpha_n)$ deterministic codes $\code_n$ over the AVMAC under input constraints $(\plimit_1,\plimit_2)$ and state constraint $\Lambda$, where $\alpha_n\rightarrow 0$ as
$n\rightarrow \infty$. 
In particular, we have that the conditional probability of error given a state sequence $s^n$ is bounded by
\begin{align}
\cerr(\code_n)\leq \alpha_n \,,\;\text{for $s^n\in\Sset^n$ with $l^n(s^n)\leq\Lambda$} \,.
\label{eq:MStateConverse1bDet}
\end{align}
For simplicity, we assume that both $R_1>0$ and $R_2>0$, but the proof can be easily modified elsewhere.

First, we show that 
\begin{align}
R_1\leq& \min_{q(s|u) \,:\; E_q l(S)\leq \Lambda} I_q(X_1;Y|X_2,U)+\eps_n \,, \label{eq:MavcConvR1u} \\
R_2\leq& \min_{q(s|u) \,:\; E_q l(S)\leq \Lambda} I_q(X_2;Y|X_1,U)+\eps_n \,, \label{eq:MavcConvR2u}\\
R_1+R_2 \leq& \min_{q(s|u) \,:\; E_q l(S)\leq \Lambda} I_q(X_{1},X_2;Y|U)+\eps_n \label{eq:MavcConvSumu}\,,
\end{align}
where $\eps_n>0$ tends to zero as $n\rightarrow\infty$. To this end,	consider using the same code in the following setting. 
Consider a different channel model, with an average state constraint. 
Specifically, consider a MAC where the jammer selects an independent state sequence at random, $\oS^n\sim \prod_{i=1}^n \oq_i(z_i)$, under the \emph{average} state constraint $\frac{1}{n}\sum_{i=1}^n \E l(S_i)\leq\Lambda-\delta$. 
Here, there is no state constraint with probability $1$, as the jammer may select a sequence $\oS^n$ with $l^n(\oS^n)>\Lambda$. 
We claim that the code sequence of the constrained AVMAC achieves the same rate pair $(R_1,R_2)$ over the ``new"  MAC $W_{Y|X_1,X_2,\oS}$, which is governed by the state sequence $\oS^n$, under an average constraint. Indeed, using the code $\code_n$ over the MAC  $W_{Y|X_1,X_2,\oS}$, the probability of error is given by
\begin{align}
\err(\oq,\code_n)=&\sum_{s^n\in\Sset^n} \oq^n(s^n)\cerr(\code_n)
\nonumber\\
\leq& 
\sum_{s^n\in\Sset^n \,:\; l^n(s^n)\leq\Lambda} \oq^n(s^n)\cerr(\code_n)
+\prob{l^n(\oS^n)\geq\Lambda} \,.
\label{eq:MConv2}
\end{align} 
By (\ref{eq:MStateConverse1bDet}), we have that the sum in the RHS 
is bounded by $\alpha_n$, hence tends to zero as $n\rightarrow\infty$. As for the second term, 
\begin{align}
\prob{l^n(\oS^n)\geq\Lambda}\leq \prob{ \frac{1}{n}\sum_{i=1}^n (l(\oS_i)-\E l(\oS_i))\geq \delta }\leq 
\frac{\sum_{i=1}^n \var\left(  l(\oS_i) \right)} {n^2\delta^2}\leq \frac{l_{max}^2}{n\delta^2}
\label{eq:Hconv1}
\end{align}
where the first inequality holds since $\frac{1}{n}\sum_{i=1}^n \E l(\oS_i)\leq\Lambda-\delta$, 
and the second is due to Chebyshev's inequality. 
Thus, we have by (\ref{eq:MConv2}) that the probability of error tends to zero as $n\rightarrow\infty$, when using the code $\code_n$ over the MAC governed by $\oS^n$.

Therefore, it suffices to prove the converse part for the MAC $W_{Y|X_1,X_2,\oS}$ governed by the state sequence
$\oS^n\sim \oq^n(s^n)= \prod_{i=1}^n \oq_i(s_i)$.
Then, let $X_1^n=f_{1}^n(M_1)$ and $X_2^n=f_{2}^n(M_2)$ be the channel input sequences, and 
$Y^n$ be the corresponding output sequence.
Fano's inequality implies that
for every jamming strategy $\oq^n(s^n)$,
\begin{align}
R_1\leq&  \frac{1}{n}\sum_{i=1}^n I_{\oq_i}(X_{1,i};Y_i|X_{2,i})+\eps_n\\
R_2\leq&  \frac{1}{n} \sum_{i=1}^n I_{\oq_i}(X_{2,i};Y_i|X_{1,i})+\eps_n \\
R_1+R_2 \leq& \frac{1}{n}\sum_{i=1}^n I_{\oq_i}(X_{1,i},X_{2,i};Y_i)+\eps_n \,.
\end{align}
(see \cite[Section 15.3.4]{CoverThomas:06b}).
Let $U$ be a random variable which is uniformly distributed over $[1:n]$, and independent of $(X_1^n,X_2^n,S^n,Y^n)$. Then, the bounds can be expressed as
\begin{align}
R_1\leq&  I_{q}(X_{1,U};Y_U|X_{2,U},U)+\eps_n \,, 			\label{eq:MConvR2} \\
R_2\leq& I_{q}(X_{2,U};Y_U|X_{1,U},U)+\eps_n \,, 			\label{eq:MConvR1}	\\
R_1+R_2 \leq& I_{q}(X_{1,U},X_{2,U};Y_U|U)+\eps_n \,,	\label{eq:MConvSum}
\end{align}
where we have defined  $q(s|u)=\oq_u$ for $u\in [1:n]$.
Then, the bounds (\ref{eq:MConvR2})-(\ref{eq:MConvSum}) hold for every conditional state distribution $q(s|u)$ such that 
$\E l(S_U)\leq \Lambda$. 
Thus, the bounds in (\ref{eq:MavcConvR1u})-(\ref{eq:MavcConvSumu}) follow by 
defining 
 \begin{align}
X_1=X_{1,U} \,,\; X_2=X_{2,U} \,,\;\text{and}\; Y=Y_U \,,
\label{eq:XYUconvAVMAC}
\end{align} 
Note that $X_1$ and $X_2$ are conditionally independent given $U$, as required. 
%

Returning to the original AVMAC, we now show that $\tLambda(P_{U,X_1,X_2})\geq \Lambda$.
If the AVMAC is non-symmetrizable-$\Xset_1\times\Xset_2$, then  $\tLambda(P_{U,X_1,X_2})=+\infty$, and there is nothing to show. Hence, consider the case where the AVMAC is symmetrizable-$\Xset_1\times\Xset_2$.
Assume to the contrary that $\tLambda(P_{U,X_1,X_2})<\Lambda$. 
Based on Remark~\ref{rem:MLambdaJeq}, and our definition of the external variable $U$,
this means that there exist conditional state distributions $J_i(s|x_1,x_2)$, $i\in [1:n]$, which symmetrize-$\Xset_1\times\Xset_2$ the AVMAC, such that
\begin{align}
\tLambda(P_{U,X_1,X_2})=
\frac{1}{n}\sum_{i=1}^n \sum_{x_{1,i},x_{2,i},s_i} P_{X_{1,i},X_{2,i}}(x_{1,i},x_{2,i}) J_i(s_i|x_{1,i},x_{2,i})l(s_i) \leq \Lambda \,.
\label{eq:JnConvH10}
\end{align}
Now,  consider the following jamming strategy. First, the jammer selects from the codebooks a pair of codewords $(\tX_1^n,\tX_2^n)$ uniformly at random. Then, 
the jammer selects a sequence $\tS^n$ at random, according to the conditional distribution
\begin{align}
\cprob{\tS^n=s^n}{\tX_1=x_1^n,\tX_2=x_2^n}= J^n(s^n|x_1^n,x_2^n)\triangleq \prod_{i=1}^n J_i(s_i|x_{1,i},x_{2,i})\,.
\label{eq:HtSn}
\end{align} 
At last, if $l^n(\tS^n)\leq \Lambda$, the jammer chooses the state sequence to be $S^n=\tS^n$. Otherwise, the jammer chooses $S^n$ to be some sequence of zero cost. Such jamming strategy satisfies the state constraint $\Lambda$ with probability $1$.

To contradict our assumption that $\tLambda(P_{U,X_1,X_2})<\Lambda$, we first show that 
$\E l^n(\tS^n)=\tLambda(P_{U,X_1,X_2})$.
Observe that for every $(x_1^n,x_2^n)\in\Xset_1^n\times\Xset_2^n$, 
\begin{align}
\E\, \left( l^n(\tS^n) | \tX_1^n=x_1^n ,\, \tX_2^n=x_2^n \right)  =& 
\frac{1}{n} \sum_{i=1}^n \sum_{s\in\Sset} l (s) J_i(s|x_{1,i},x_{2,i}) \,.
\end{align}
Since $(\tX_1^n,\tX_2^n)$ are distributed as $(X_1^n,X_2^n)$, we obtain
\begin{align}
\E\, l^n(\tS^n)=&  \sum_{s\in\Sset} l (s) \cdot \frac{1}{n}  \sum_{i=1}^n \E J_i(s|X_{1,i},X_{2,i}) \nonumber\\
=& \sum_{u,x_1,x_2,s} P_U(u) P_{X_1,X_2|U}(x_1,x_2|u)J_u(s|x_1,x_2) l (s) \nonumber\\
=  & \tLambda(P_{X_1,X_2})<\Lambda \,.
\end{align}
Thus, by Chebyshev's inequality we have that for sufficiently large $n$, 
\begin{align}
\prob{ l^n(\tS^n)>\Lambda} \leq \delta_0 \,,
\end{align}
where $\delta_0>0$ is arbitrarily small.
Now, on the one hand, the probability of error is bounded by
\begin{align}
\err(q,\code_n)
\geq& \prob{g(Y^n)\neq (M_1,M_2), l^n(\tS^n)\leq \Lambda}
\nonumber\\
=&   \sum_{s^n \,:\; l^n(s^n)\leq\Lambda} \tq(s^n) \cerr(\code_n) \,,
\label{eq:MconvEb1}
\end{align}
where $\tq(s^n)$ is as defined in (\ref{eq:MconvFtq}).
On the other hand, the sequence $\tS^n$ can be thought of as the state sequence of an AVMAC without a state constraint, hence, by 
part 1 of Lemma~\ref{lemm:Gubner}, 
\begin{align}
\frac{1}{4}\leq &\err(\tq,\code_n)
\leq  \sum_{s^n \,:\; l^n(s^n)\leq\Lambda} \tq(s^n) \cerr(\code_n) +\prob{ l^n(\tS^n)> \Lambda}
\nonumber\\
\leq&  \sum_{s^n \,:\; l^n(s^n)\leq\Lambda} \tq(s^n) \cerr(\code_n)  +\delta_0 \,.
\label{eq:MconvEbf}
\end{align}
Thus, by (\ref{eq:MconvEb1})-(\ref{eq:MconvEbf}), the probability of error is bounded by 
$\err(q,\code_n)\geq \frac{1}{4}-\delta_0$. As this cannot be the case for a code with vanishing probability of error, we deduce that the assumption is false, \ie $\tLambda(P_{U,X_1,X_2})\geq \Lambda$.

It remains to show that $\tLambda_1(P_{U,X_1})\geq \Lambda$ and $\tLambda_2(P_{U,X_2})\geq \Lambda$.
Due to the symmetry, it suffices to show this for User 1.
We only need to consider an AVMAC which is symmetrizable-$\Xset_1|\Xset_2$, as otherwise, $\tLambda_1(P_{X_1})=+\infty$. Then, assume to the contrary that $\tLambda_1(P_{X_1})<\Lambda$,
 and let $J_{1,i}(s|x_1)$, $i\in [1:n]$, be the symmetrizing distributions that satisfy (\ref{eq:Msymmetrizable1}) and achieves the minimum in (\ref{eq:MtlambdaJ1}), \ie
\begin{align}
\tLambda_1(P_{U,X_1})=&
\sum_{ u,x_1,s} P_U(u) P_{X_1|U}(x_1|u) J_{1,u}(s|x_1) l(s) <\Lambda \,.
\label{eq:MtlambdaJConv1b}
\end{align}
%
Consider a jamming strategy, where the jammer first selects a codeword $\tX_1^n$ from the codebook of User 1, uniformly at random. Then, 
the jammer selects a sequence $\tS^n_1$ at random, according to the conditional distribution
\begin{align}
\cprob{\tS_1^n=s^n}{\tX_1=x_1^n}= J^n_1(s^n|x_1^n)\triangleq \prod_{i=1}^n J_{1,i}(s_i|x_{1,i})\,.
\end{align} 
At last, if $l^n(\tS^n_1)\leq \Lambda$, the jammer chooses the state sequence to be $S^n=\tS^n_1$. Otherwise, the jammer chooses $S^n$ to be some sequence of zero cost. 

To contradict our assumption that $\tLambda_1(P_{U,X_1})<\Lambda$, we first show that 
$\E l^n(\tS^n_1)=\tLambda_1(P_{U,X_1})$.
Observe that for every $x_1^n\in\Xset_1^n$, 
\begin{align}
\E\, \left( l^n(\tS^n_1) | \tX_1^n=x_1^n  \right)  =& 
\frac{1}{n} \sum_{i=1}^n \sum_{s\in\Sset} l (s) J_{1,i}(s|x_{1,i}) \,.
\end{align}
Since $\tX_1^n$ is distributed as $X_1^n$, we obtain
\begin{align}
\E\, l^n(\tS^n_1)=&  
\sum_{u,s\in\Sset} P_U(u) P_{X_1|U}(x_1|u) J_{1,u}(s|x_1) l (s) 
							 =   
 \tLambda_1(P_{U,X_1})<\Lambda \,.
\label{eq:MElSn1}
\end{align}
where the last equality is due to 
(\ref{eq:MtlambdaJConv1b}). 
Next, the probability of error is bounded by
\begin{align}
\err(q,\code_n)
\geq& \prob{g(Y^n)\neq (M_1,M_2), l^n(\tS^n_1)\leq \Lambda}
\nonumber\\
=&   \sum_{s^n \,:\; l^n(s^n)\leq\Lambda} \tq_1(s^n) \cerr(\code_n) \,,
\label{eq:MconvEb11}
\end{align}
where $\tq_1(s^n)$ is as defined in (\ref{eq:MconvFtq1}).
On the other hand, the sequence $\tS^n_1$ can be thought of as the state sequence of an AVMAC without a state constraint, hence, by 
part 2 of Lemma~\ref{lemm:Gubner}, 
\begin{align}
\frac{1}{4}\leq &\err(\tq_1,\code_n)
\leq  \sum_{s^n \,:\; l^n(s^n)\leq\Lambda} \tq_1(s^n) \cerr(\code_n) +\prob{ l^n(\tS^n_1)> \Lambda}
\nonumber\\
\leq&  \sum_{s^n \,:\; l^n(s^n)\leq\Lambda} \tq_1(s^n) \cerr(\code_n)  +\delta_1 \,,
\label{eq:MconvEbf1}
\end{align}
where the last line is due to (\ref{eq:MElSn1}) and Chebyshev's inequality, with arbitrarily small $\delta_1>0$.
Thus,  (\ref{eq:MconvEb11})-(\ref{eq:MconvEbf1}) imply that 
$\err(q,\code_n)\geq \frac{1}{4}-\delta_1$,  which cannot hold for a
 code with vanishing probability of error. We deduce that the assumption is false, \ie 
$\tLambda_1(P_{U,X_1})\geq \Lambda$.
This completes the converse proof.
\qed

\subsection*{Case B and Case C}
Before we begin, we note that Cases B-D can also be proved by directly adjusting the techniques of Csisz\'{a}r and Narayan for the single user AVC \cite{CsiszarNarayan:88p}.
Although, as explained in Remark~\ref{remark:WieseBoche}, it is not an immediate consequence.
Thereby, it is easier for us to use our previous derivations instead.


The proof follows similar arguments as in Case A, and thus we only give the outline.
Since Case B and Case C in Definition~\ref{def:MICavc} are symmetric, we only treat the former.
Suppose that $L^*>\Lambda$ and $L_2^*>\Lambda$, but $L_1^*<\Lambda$.
For the direct part, we can use the same coding scheme as in Case A with the following changes.
First, coded time sharing is no longer necessary, hence we take $U=\emptyset$.
Then, let $ P_{X_1}$ and $P_{X_2}$ be types, such that $\E \cost_k(X_k)\leq\plimit_k$, for $k=1,2$, 
$\tLambda_2(P_{X_2})>\Lambda$ and $\tLambda(P_{X_1,X_2})>\Lambda$. 
As User 1 transmits at zero rate, we can discard of Condition 2b) of the decoding rule (see Definition~\ref{def:MLdecoder}).
Nevertheless, given our assumption in Remark~\ref{remark:stochE}, Encoder 1 may use ``local randomness" and send a sequence $x_1^n=f_1(\sigma)$, where $\sigma\in[1:2^{nR_1}]$ is drawn 
 uniformly at random, with $R_1=\eps$. 
Upon receiving $y^n\in\Yset^n$, the decoder declares its estimation $g(y^n)=m_2$ iff there exists $\sigma$ 
such that $y^n\in\Dset(\sigma,m_2)$, where the decoding sets $\Dset(\sigma,m_2)\subseteq \Yset^n$ are as in Definition~\ref{def:MLdecoder}.
The message of User 2 is still  decoded uniquely,
since the only part of Lemma~\ref{lemm:MdisDec} that depends on $\tLambda_1(P_{X_1})$ is part 2, which is no longer necessary. The analysis of the probability of error remains exactly the same, except that the error event $\Eset_{2\text{b}}$ can be ignored. It follows that the probability of error tends to zero, provided that 
\begin{align}
R_1+R_2 <& \min_{q(s) \,:\; \E_q l(S)\leq\Lambda} I_q(X_1,X_2;Y)-\delta-5\eps \,,
\nonumber\\ 
R_2 <& \min_{q(s) \,:\; \E_q l(S)\leq\Lambda} I_q(X_2;Y|X_1)-\delta_2-5\eps \,.
\end{align}
Since $R_1=\eps$, and $I_q(X_1,X_2;Y)\geq I_q(X_2;Y|X_1)$, the first inequality is inactive, and the direct part follows.

The converse part also follows from the converse proof for Case A. It was shown that if  the jammer selects the state sequence to be
\begin{align}
S^n=\begin{cases}
\tS^n_1 &\text{if $l^n(\tS_1^n)\leq \Lambda$},\\
(s_0,\ldots,s_0) &\text{otherwise}
\end{cases} \,,
\end{align} 
for $\tS^n_1\sim\tq_1(s^n)$ as in (\ref{eq:MconvFtq1}), and $s_0\in\Sset$ with $l(s_0)=0$,
then the probability of error is lower bounded by $\err(q,\code_n)\geq \frac{1}{4}-\delta_1$ for $R_1>0$, hence User 1 cannot achieve positive rates. As for User 2, we have by (\ref{eq:MavcConvR2u}) that
\begin{align}
R_2\leq& \min_{q(s|u) \,:\; E_q l(S)\leq \Lambda-\delta} I_q(X_2;Y|X_1,U)+\eps_n \,. \label{eq:MavcConvR2uB}
\end{align}
Then, observe that $I_q(X_2;Y|X_1,U)\leq I_q(X_2;Y|X_1)$, since $U\Cbar (X_1,X_2)\Cbar Y$ form a Markov chain, and conditioning reduces entropy (see \eg \cite[Theorem 2.6.5]{CoverThomas:06b}). By the same considerations as in Case A, $\tLambda_2(P_{X_2})\geq \Lambda$, and the converse part follows.
\qed

\subsection*{Case D}
Suppose that $L^*<\Lambda$.
 It was shown in the converse proof for Case A, that if  the jammer selects the state sequence 
\begin{align}
S^n=\begin{cases}
\tS^n &\text{if $l^n(\tS^n)\leq \Lambda$},\\
(s_0,\ldots,s_0) &\text{otherwise}
\end{cases} \,,
\end{align} 
for $\tS^n\sim\tq(s^n)$ as in (\ref{eq:MconvFtq}), and $s_0\in\Sset$ with $l(s_0)=0$,
then the probability of error is lower bounded by $\err(q,\code_n)\geq \frac{1}{4}-\delta_0$ for 
$R_1+R_2>0$. Thus, positive rates cannot be achieved. 

Now, suppose that both $L^*_1<\Lambda$ and $L^*_2<\Lambda$. We have already seen in the proof of Case B, that $L^*_1<\Lambda$ implies that User 1 cannot achieve $R_1>0$, and by symmetry,
$L^*_2<\Lambda$ implies that User 2 cannot achieve $R_2>0$.
Therefore, if $L^*<\Lambda$, or both $L^*_1<\Lambda$ and $L^*_2<\Lambda$, then the deterministic code capacity region is 
$\{(0,0)\}$, as we were set to prove.
This concludes the proof of Theorem~\ref{theo:MCavc}.
\qed

\section{Proof of Corollary~\ref{coro:MCavc01}}
\label{app:MCavc01}
Assume that the AVMAC $\avmac$ satisfies the conditions of Corollary~\ref{coro:MCavc01}. 
%
Looking into  the converse proof of Theorem~\ref{theo:MCavc} in Appendix~\ref{app:MCavc}
above, the following addition suffices.
We show that for every code $\code_n$ as in Appendix~\ref{app:MCavc},
$\tLambda(P_{U,X_1,X_2})=\Lambda$ implies that $R_1+R_2=0$, 
$\tLambda_1(P_{U,X_1})=\Lambda$ implies that $R_1=0$, and $\tLambda_2(P_{U,X_2})=\Lambda$ implies that $R_2=0$.
Since there is only a polynomial number of types, we may consider $P_{U,X_1,X_2}$ to be the joint type of $(u^n,f_1(m_1),f_2(m_2))$, for all $m_1$ and $m_2$ (see \cite[Problem 6.19]{CsiszarKorner:82b}).

Suppose that  $\tLambda(P_{U,X_1,X_2})=\Lambda$,  assume to the contrary that $R_1+R_2>0$,
 and let $J_u(s|x_1,x_2)$ be distributions that achieve the minimum in (\ref{eq:MtlambdaJ}), \ie
\begin{align}
\tLambda(P_{U,X_1,X_2})=&
 \sum_{ u,x_1,x_2,s} P_U(u) P_{X_{1},X_{2}|U}(x_1,x_2|u) J_u(s|x_1,x_2) l(s) =\Lambda \,.
\label{eq:MtlambdaJConvc}
\end{align}
Based on the condition of the corollary, we may assume that  $J_u(s|x_1,x_2)$  is a $0$-$1$ law, \ie 
\begin{align}
J_u(s|x_1,x_2)=\begin{cases}
1 &\text{if $s=G_u(x_1,x_2)$},\\
0 &\text{otherwise}
\end{cases} \,,
\end{align}
for some deterministic function $G_u:\Xset_1\times\Xset_2\rightarrow\Sset$.  
Thus, by (\ref{eq:MtlambdaJConvc}),
\begin{align}
 \E l(G_U(X_1,X_2))= \sum_{ u, x_1,x_2,s} P_U(u) P_{X_1,X_2|U}(x_1,x_2|u) J_u(s|x_1,x_2) l(s) =\Lambda \,.
\label{eq:MtlambdaJConvEG}
\end{align}
Recall that we have defined $U$ in the converse proof as a uniformly distributed variable over $\Uset=[1:n]$.
Now,  consider the following jamming strategy. First, the jammer selects from the codebooks a pair of codewords $(\tX_1^n,\tX_2^n)$ uniformly at random. Then, given $\tX_1^n=x_1^n$ and $\tX_2^n=x_2^n$, 
the jammer chooses the state sequence $S^n=\left( G_i(x_{1,i},x_{2,i}) \right)_{i=1}^n$. 
Observe that given pair of codewords, $\tX_1^n=x_1^n$ and $\tX_2^n=x_2^n$, 
\begin{align}
 l^n(S^n)   =&  \frac{1}{n} \sum_{i=1}^n l(G_i(x_{1,i},x_{2,i})) 
=    \E l(G_U(X_{1},X_{2})) 
	=\Lambda \,,
\end{align}
where the last equality is 
due to 
(\ref{eq:MtlambdaJConvEG}).
Thus, the state sequence satisfies the state constraint.
Now, observe that the jamming strategy $S^n=\left( G_i(\tX_{1,i},\tX_{2,i}) \right)_{i=1}^n$ is equivalent to 
$S^n\sim\tq(s^n)$ as in (\ref{eq:MconvFtq}).
Thus, by part 1 of Lemma~\ref{lemm:Gubner}, 
we have that 
$\err(\tq,\code_n)\geq \frac{1}{4}$, hence both users cannot achieve a positive rate.

Next, consider the case where  $\tLambda_1(P_{U,X_1})=\Lambda$.  Assume to the contrary that $R_1>0$,
 and let $J_{1,u}(s|x_1)$ be distributions that achieves the minimum in (\ref{eq:MtlambdaJ1}), \ie
\begin{align}
\tLambda_1(P_{U,X_1})=&
\sum_{ u,x_1,s} P_U(u) P_{X_1|U}(x_1|u) J_u(s|x_1) l(s) =\Lambda \,.
\label{eq:MtlambdaJConv1d}
\end{align}
By assumption, every $J_{1,u}(s|x_1)$ 
has a $0$-$1$ law, 
\begin{align}
J_{1,u}(s|x_1)=\begin{cases}
1 &\text{if $s=G_{1,u}(x_1)$},\\
0 &\text{otherwise}
\end{cases} \,,
\end{align}
for some deterministic function $G_{1,u}:\Xset_1\rightarrow\Sset$.  
Thus, by (\ref{eq:MtlambdaJConv1d}),
\begin{align}
\E l(G_{1,U}(X_1))= \sum_{ u,x_1} P_U(u) P_{X_1|U}(x_1|u) J_{1,u}(s|x_1) l(s) =\Lambda \,.
\label{eq:MtlambdaJConvEG1e}
\end{align}
Now,  suppose the jammer selects from the codebook of User 1, a codeword $\tX_1^n$ uniformly at random. Then, given $\tX_1^n=x_1^n$, 
the jammer chooses the state sequence $S^n=\left( G_{1,i}(x_{1,i}) \right)_{i=1}^n$. Hence, 
For every given codeword $\tX_1^n=x_1^n$, 
\begin{align}
 l^n(S^n)   =&  \frac{1}{n} \sum_{i=1}^n l(G_{1,i}(x_{1,i}))=\E l(G_{1,U}(X_{1})) \,.
\end{align}
Thus, by (\ref{eq:MtlambdaJConvEG1e}), we have that
$l^n(S^n)=\Lambda$ with probability $1$. This means that the state sequence satisfies the state constraint.
Now, observe that the jamming strategy $S^n=\left( G_{1,i}(\tX_{1,i}) \right)_{i=1}^n$ is equivalent to 
$S^n\sim\tq_1(s^n)$ as in (\ref{eq:MconvFtq1}).
Thus, by part 2 of Lemma~\ref{lemm:Gubner}, 
we have that 
$\err(\tq_1,\code_n)\geq \frac{1}{4}$, hence $R_1=0$. 
By symmetry, we have that $\tLambda_2(P_{X_2})=\Lambda$ implies that $R_2=0$.
\qed

\section{Analysis of Example~\ref{example:BSMAC}}
\label{app:BSMAC}
Let $\avmac$ be the arbitrarily varying binary symmetric MAC, with two independent binary symmetric channels, as in Example~\ref{example:BSMAC}.

We begin with the random code capacity region.
To show achievability, set 
$U=\emptyset$, $X_1\sim\text{Bernoulli}(\omega_1)$, $X_2\sim\text{Bernoulli}(\omega_2)$, and observe that
\begin{align}
&I_q(X_1;Y_1,Y_2|X_2)\geq I_q(X_1;Y_1|X_2)  \stackrel{(a)}{=} I_q(X_1;Y_1) \,,\nonumber\\
&I_q(X_2;Y_1,Y_2|X_1)\geq I_q(X_2;Y_2|X_1)  \stackrel{(b)}{=} I_q(X_2;Y_2) \,,
\end{align} 
and
\begin{align}
 I_q(X_1,X_2;Y_1,Y_2)=& H(X_1)+H(X_2)-H_q(X_1,X_2|Y_1,Y_2) \nonumber\\
\stackrel{(c)}{\geq} & H(X_1)+H(X_2)-H_q(X_1|Y_1,Y_2)-H_q(X_2|Y_1,Y_2) \nonumber\\
\stackrel{(d)}{\geq} & I_q(X_1;Y_1)+I_q(X_2;Y_2) \,,
\end{align}
where $(a)$ holds since $X_2$ is independent of $(X_1,S_1,Y_1)$;
$(b)$ holds since $X_1$ is independent of $(X_2,S_2,Y_2)$;
$(c)$ is due to the independence bound on entropy \cite[Theorem 2.6.6.]{CoverThomas:06b};
and $(d)$ holds since conditioning reduces entropy \cite[Theorem 2.6.5.]{CoverThomas:06b}.
Therefore, based on Theorem~\ref{theo:MrCav}, $(R_1,R_2)$ is achievable for
\begin{align}
R_1\leq \min_{q(s_1):\E S_1\leq\Lambda} I_q(X_1;Y_1)
=\min_{0\leq q_1\leq \Lambda} \left[  h(\omega_1*q_1)-h(q_1) \right] \,,\\
R_2\leq \min_{q(s_2):\E S_2\leq\Lambda} I_q(X_2;Y_2) =\min_{0\leq q_1\leq \Lambda} \left[  h(\omega_2*q_2)-h(q_2) \right] \,.
\end{align}
Since $h(\omega*t)-h(t)$ is a convex-$\cup$ function over $0\leq t\leq 1$ with minimum at $t=\frac{1}{2}$,
we have that
\begin{align}
\MrCav\supseteq
\left\{
\begin{array}{lrl}
(R_1,R_2) \,:\; & R_1 		\leq&   h(\omega_1*\lambda)-h(\lambda)  \,, \\
								& R_2 		\leq&   h(\omega_2*\lambda)-h(\lambda)  
\end{array}
\right\} \,,
\end{align}
which completes the achievability proof.
To prove the converse part, we observe that the rate of User 1 is bounded by
\begin{align}
&\min_{q(s_1,s_2):\E S_1+\E S_2\leq\Lambda} I_q(X_1;Y_1,Y_2|X_2,U) \leq 
I_q(X_1;Y_1,Y_2|X_2,U) \Big|_{ \substack{S_1\sim\text{Bernoulli}(\lambda) \\ S_2=0  } }
\nonumber\\
=& I_q(X_1;Y_1|X_2,U) \Big|_{ \substack{S_1\sim\text{Bernoulli}(\lambda) \\ S_2=0  } }
=h(p_1*\lambda)-h(\lambda) \leq h(\omega_1*\lambda)-h(\lambda) \,,
\end{align}
with $X_1\sim\text{Bernoulli}(p_1)$, for $0\leq p_1\leq \plimit_1$, where the first equality holds since $S_2=0$ implies that $Y_2=X_2$, and the last inequality holds since $h(\alpha*t)$ is a concave-$\cap$ function over $0\leq t\leq 1$ with maximum at $t=\frac{1}{2}$. 
Similarly, the rate of User 2 is bounded by
\begin{align}
&\min_{q(s_1,s_2):\E S_1+\E S_2\leq\Lambda} I_q(X_2;Y_1,Y_2|X_1,U) \leq 
h(\omega_2*\lambda)-h(\lambda) \,.
\end{align} 
This proves that the random code capacity region of the AVMAC in in Example~\ref{example:BSMAC}  is given by 
(\ref{eq:MrCavBSMAC}).

Moving to the deterministic code capacity region, we first compute $L^*$, $L_1^*$ and $L_2^*$.
For every $P_{X_1,X_2}$,
\begin{align}
&\Psi(P_{X_1,X_2})=\min_{0\leq\alpha_1,\alpha_2\leq 1} (\alpha_1*p_1+\alpha_2*p_2) 
=\min(p_1,1-p_1)+\min(p_2,1-p_2) \,,  \\
&\Psi_1(p_1)=\min_{0\leq\alpha_1\leq 1} \alpha_1*p_1=\min(p_1,1-p_1) \,, \\
&\Psi_2(p_2)=\min_{0\leq\alpha_2\leq 1} \alpha_2*p_2=\min(p_2,1-p_2) \,,
\end{align}
where we have used the notation  $p_1=P_{X_1}(1)=1-P_{X_1}(0)$ and $p_2=P_{X_2}(1)=1-P_{X_2}(0)$.
Therefore,
\begin{align}
& L^*=\max_{ \substack{ 0\leq p_1\leq \plimit_1 \,,\\ 0\leq p_2\leq \plimit_2 }} \Psi(P_{X_1,X_2})
=\omega_1+\omega_2 \,,
\label{eq:BSMACls} \\
& L_1^*=\max_{0\leq p_1\leq \plimit_1}\Psi_1(p_1) =\omega_1   \,,
\label{eq:BSMACls1}\\
& L_2^*=\max_{0\leq p_2\leq \plimit_2}\Psi_2(p_2) =\omega_2   \,.
\label{eq:BSMACls2}
\end{align}
Observe that based on the result on the random code capacity region, we have that for  $\Lambda\geq \frac{1}{2}$, or equivalently, $\lambda=\frac{1}{2}$, the capacity region is given by $\MCavc=\MrCav=\{(0,0)\}$, in agreement with (\ref{eq:MCavcBSMACa})-(\ref{eq:MCavcBSMACd}).
Henceforth, assume that $\Lambda<\frac{1}{2}$.

The first case to consider is $\plimit_1>\Lambda$ and $\plimit_2>\Lambda$. 
The converse part is immediate, since the deterministic code capacity region is always bounded by the random code capacity region. As for the direct part, we are going to show that under the assumption that $\Lambda< \frac{1}{2}$, we have that $L^*$, $L_1^*$ and $L_2^*$ are greater than $\Lambda$. Indeed, if $\plimit_k\geq \frac{1}{2}$, then $\omega_k=\frac{1}{2}>\Lambda$, for $k=1,2$. Otherwise, if $\plimit_k< \frac{1}{2}$, then $\omega_k=\plimit_k>\Lambda$, for $k=1,2$.
Therefore, in both cases, we have by (\ref{eq:BSMACls})-(\ref{eq:BSMACls2}) that 
$L^*>\Lambda$, $L_1^*>\Lambda$, and $L_2^*>\Lambda$, which corresponds to Case a) in Definition~\ref{def:MICavc}. It is further inferred that taking $p_1=\omega_1$ and $p_2=\omega_2$, we have that 
$\tLambda(p_1,p_2)=\omega_1+\omega_2>\Lambda$ and
$\tLambda_k(p_k)=\omega_k>\Lambda$ for $k=1,2$. It follows that this inputs distribution is legitimate, in the sense that it belongs to the optimization set $\overline{\pSpace}_{\plimit_1,\plimit_2,\Lambda}(\Uset\times\Xset_1\times\Xset_2)$ (see definition in (\ref{eq:MpI})).
 This completes the achievability proof for the first case, because we have already seen in the achievability proof of the random code capacity region above, that the inputs distribution $p_1=\omega_1$ and $p_2=\omega_2$ achieves the region in the RHS of (\ref{eq:MCavcBSMACa}), with $U=\emptyset$. 

Next, we consider the second case, $\plimit_1\leq\Lambda$ and $\plimit_2>\Lambda$. 
 Assuming that $\Lambda< \frac{1}{2}$, we have that $\plimit_1<\frac{1}{2}$, hence 
$L_1^*=\omega_1=\plimit_1\leq \Lambda$. As for $L_2^*$, if $\plimit_2\geq \frac{1}{2}$,
then $\omega_2=\frac{1}{2}>\Lambda$, and if $\plimit_2< \frac{1}{2}$, then
$\omega_2=\plimit_2>\Lambda$. It follows that $L^*=\omega_1+\omega_2>\Lambda$ and
$L_2^*=\omega_2>\Lambda$, but $L_1\leq\Lambda$,  which corresponds to Case b) in Definition~\ref{def:MICavc}.
Furthermore, the input distributions $p_1=\omega_1$ and $p_2=\omega_2$ belong to the maximization set in 
(\ref{eq:Mcasea}), as $\tLambda(p_1 p_2)=\omega_1+\omega_2>\Lambda$ and $\tLambda_2(p_2)=\omega_2>\Lambda$.
Thus, User 2 can achieve rates below
\begin{align}
&\min_{q(s_1,s_2) : \E S_1+\E S_2\leq \Lambda} I_q(X_2;Y_1,Y_2|X_1) \geq
\min_{q(s_1,s_2) : \E S_1+\E S_2\leq \Lambda} I_q(X_2;Y_2|X_1)\nonumber\\ =&
\min_{0\leq q_2 \leq \Lambda} \left[ h(\omega_2*q_2)-h(q_2) \right]=
h(\omega_2*\lambda)-h(\lambda) \,.
\end{align}
It is also the highest rate achievable for User 2, since
\begin{align}
R_2\leq& \min_{q(s_1,s_2) : \E S_1+\E S_2\leq \Lambda} I_q(X_2;Y_1,Y_2|X_1)
\leq I_q(X_2;Y_1,Y_2|X_1) \Big|_{ \substack{ S_1=0 \\ S_2\sim\text{Bernoulli}(\lambda)  } }
\nonumber\\
=& I_q(X_2;Y_2) \Big|_{ \substack{ S_1=0 \\ S_2\sim\text{Bernoulli}(\lambda)  } }
\leq \max_{0\leq p_2\leq \plimit_2} h(p_2*\lambda)-h(\lambda) = h(\omega_2*\lambda)-h(\lambda) \,,
\end{align}
following the same considerations as in the derivation of the random code capacity region above. 
The third case, $\plimit_1>\Lambda$ and $\plimit_2\leq\Lambda$, follows by symmetry.

In the fourth case, $\plimit_1<\Lambda$ and $\plimit_2<\Lambda$, we have that
\begin{align}
&L^*_1=\omega_1\leq\plimit_1<\Lambda \,,\nonumber\\
&L^*_2=\omega_2\leq\plimit_2<\Lambda \,,
\end{align}
as in Case d) in Definition~\ref{def:MICavc}. Thus, the capacity region is 
$\MCavc=\{(0,0)\}$.  
\qed

\section{Analysis of Example~\ref{example:GaussMAC}}
\label{app:GaussMAC}
Deriving the random code capacity region is straightforward. To show achievability, we set $U=\emptyset$, $X_1\sim\mathcal{N}(0,\plimit_1)$, and 
$X_2\sim\mathcal{N}(0,\plimit_2)$. Then, since Gaussian noise is known to be the worst additive noise under variance constraint \cite[Lemma II.2]{DiggaviCover:01p}, we have that
\begin{align}
&\min\limits_{F_S \,:\; \E S^2\leq\Lambda} I(X_1;Y|X_2)=\frac{1}{2}\log\left(1+\frac{\plimit_1}{\Lambda+\sigma^2}\right) \,,\\
&\min\limits_{F_S \,:\; \E S^2\leq\Lambda} I(X_2;Y|X_1)=\frac{1}{2}\log\left(1+\frac{\plimit_2}{\Lambda+\sigma^2}\right) \,,\\
&\min\limits_{F_S \,:\; \E S^2\leq\Lambda} I(X_1,X_2;Y)=\frac{1}{2}\log\left(1+\frac{\plimit_1+\plimit_2}{\Lambda+\sigma^2}\right) \,.
\end{align}
 This proves achievability. 
To prove the converse part,  observe that the rate of User 1 is bounded by 
\begin{align}
&\min_{F_S:\E S^2\leq\Lambda} I(X_1;Y|X_2,U) \leq  
I(X_1;Y|X_2,U) \Big|_{ S\sim \mathcal{N}(0,\Lambda)  } \leq \frac{1}{2}\log\left( 1+\frac{\plimit_1}{\Lambda+\sigma^2 } \right) \,,
\end{align}
where the last inequality follows as in the converse proof of the classical Gaussian MAC $Y=X_1+X_2+\tZ$, with
$\tZ\sim\mathcal{N}(0,\Lambda +\sigma^2)$  \cite{Wyner:74p}.
The bounds on the rate of User 2 and on the sum rate are proved in the same manner. We have thus determined the random code capacity region.

We move to the deterministic code capacity region, as in
Theorem~\ref{theo:MCavc}. 
 First, we calculate the thresholds $L^*$, $L_1^*$, and $L^*_2$. 
Based on Definition~\ref{def:Msymmetrizable}, the Gaussian AVMAC is symmetrized-$\Xset_1\times\Xset_2$ by 
a conditional pdf
$\varphi(s|x_1,x_2)$ if 
\begin{align}
\label{eq:GPsymmetrizableMj}
\int_{-\infty}^\infty  \varphi(s|\tx_1,\tx_2) f_{Z}(y-x_1-x_2-s) \,ds=
\int_{-\infty}^\infty  \varphi(s|x_1,x_2) f_{Z}(y-\tx_1-\tx_2-s) \,ds
 \,,\; 
\forall\, x_1,x_2,\tx_1,\tx_2,y\in\mathbb{R} \,,
\end{align}
where $f_{Z}(z)=\frac{1}{\sqrt{2\pi\sigma^2}} e^{-z^2/2\sigma^2}$. 
%
In particular, observe that (\ref{eq:GPsymmetrizableMj}) holds for $\varphi(s|x_1,x_2)=\delta(s-x_1-x_2)$, where $\delta(\cdot)$ is the Dirac delta function. In other words, the channel is symmetrized by a distribution $\varphi(s|x_1,x_2)$ which gives probability $1$ to 
$S=x_1+x_2$.
The minimal state cost $\tLambda(F_{X_1}F_{X_2})$ for the jammer to symmetrize-$\Xset_1\times\Xset_2$, for the input distribution 
$f_{X_1}(x_1)f_{X_2}(x_2)$, is the contiuous version of (\ref{eq:MtlambdaJ}), 
\begin{align}
\tLambda(F_{X_1}F_{X_2})=\min\, \int_{-\infty}^\infty  \int_{-\infty}^\infty  \int_{-\infty}^\infty
  f_{X_1}(x_1)f_{X_2}(x_2)\varphi(s|x_1,x_2)s^2 \,ds \,dx_1 \,dx_2
 \,,
\label{eq:GPlambdaTMj}
\end{align}
where the minimization is over all conditional pdfs $\varphi(s|x_1,x_2)$ that symmetrize-$\Xset_1\times\Xset_2$ the channel, that is, satisfy (\ref{eq:GPsymmetrizableMj}). Similar expressions can be written for the individual state costs $\tLambda_1(F_{X_1})$ and $\tLambda_2(F_{X_2})$ as the continuous versions of (\ref{eq:MtlambdaJ1}) and (\ref{eq:MtlambdaJ2}), respectively.
The following lemma states that the minimal state cost for joint symmetrizability is the total input power, and the minimal state cost for individual symmetrizability is the input power of the corresponding transmitter.
\begin{lemma}
\label{lemm:GPscostPM}
For zero mean random variables $X_1$ and $X_2$,
\begin{align}
\Psi(F_{X_1} F_{X_2})=& \E X_1^2+ \E X_2^2  \,,\\
\Psi_1(F_{X_1})=& \E X_1^2 \,,\\
\Psi_2(F_{X_2})=& \E X_2^2 \,.
\end{align}
\end{lemma}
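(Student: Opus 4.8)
The plan is to characterize exactly which conditional pdfs $\varphi(s\mid x_1,x_2)$ symmetrize-$\Xset_1\times\Xset_2$ the Gaussian channel, and then minimize the quadratic cost over this class. The key observation is that the symmetrization equation \eqref{eq:GPsymmetrizableMj} is a convolution identity: writing $\phi_x(\cdot)$ for the pdf of the random variable $x_1+x_2+S$ when $S\sim\varphi(\cdot\mid x_1,x_2)$, the left side of \eqref{eq:GPsymmetrizableMj} is $(f_Z * \phi_{\tilde x})(y - \tilde x_1 - \tilde x_2)$ shifted appropriately; after taking Fourier transforms the Gaussian factor $\widehat{f_Z}(\omega) = e^{-\sigma^2\omega^2/2}$ never vanishes, so it can be divided out. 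This forces the distribution of $x_1+x_2+S$ (given $(x_1,x_2)$) to equal the distribution of $\tilde x_1 + \tilde x_2 + \tilde S$ (given $(\tilde x_1,\tilde x_2)$) for \emph{all} input pairs; i.e., there is a fixed law $\mu$ on $\mathbb{R}$ such that $S + x_1 + x_2 \sim \mu$ for every $(x_1,x_2)$. Equivalently, $\varphi(\cdot\mid x_1,x_2)$ is $\mu$ translated by $-(x_1+x_2)$.

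Given this, the first step is to make that deconvolution argument rigorous (working with characteristic functions and noting $f_Z$ has a nowhere-zero Fourier transform, so the map $\nu \mapsto f_Z * \nu$ is injective on probability measures). Second, I would compute the cost. If $S = -(x_1+x_2) + T$ with $T\sim\mu$ independent of nothing in particular, then
\begin{align}
\E(S^2 \mid X_1 = x_1, X_2 = x_2) = (x_1+x_2)^2 - 2(x_1+x_2)\,\E T + \E T^2,
\end{align}
so, using $\E X_1 = \E X_2 = 0$ and independence of $X_1,X_2$,
\begin{align}
\Psi(F_{X_1}F_{X_2}) = \min_{\mu}\Big[ \E X_1^2 + \E X_2^2 - 2(\E X_1 + \E X_2)\E T + \E T^2 \Big] = \E X_1^2 + \E X_2^2 + \min_\mu \E T^2,
\end{align}
and the minimum over $\mu$ is attained by $\mu = \delta_0$ (i.e. $T\equiv 0$), giving $\Psi(F_{X_1}F_{X_2}) = \E X_1^2 + \E X_2^2$. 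The choice $\mu=\delta_0$ corresponds exactly to $\varphi(s\mid x_1,x_2) = \delta(s - x_1 - x_2)$, the symmetrizing law flagged after \eqref{eq:GPsymmetrizableMj}. The same computation with $\varphi_1(s\mid x_1) = \delta(s - x_1) * \mu$ (note: here $X_2$ does not enter, and the deconvolution argument applied to \eqref{eq:Msymmetrizable1} in continuous form shows $S + x_1 \sim \mu$ for all $x_1$) yields $\Psi_1(F_{X_1}) = \E X_1^2 + \min_\mu \E T^2 = \E X_1^2$, and symmetrically $\Psi_2(F_{X_2}) = \E X_2^2$.

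The main obstacle is the deconvolution step: one must argue that the symmetrizing conditional distribution must be a \emph{deterministic shift of a single fixed law}, rather than something more exotic, and this requires care because $\varphi(\cdot\mid x_1,x_2)$ is only a conditional distribution (possibly without a density, though the problem states "conditional pdf"). Using characteristic functions circumvents smoothness issues: from \eqref{eq:GPsymmetrizableMj}, $e^{-\sigma^2\omega^2/2}\,e^{i\omega(\tilde x_1+\tilde x_2)}\,\Phi_{\tilde x}(\omega) = e^{-\sigma^2\omega^2/2}\,e^{i\omega(x_1+x_2)}\,\Phi_{x}(\omega)$ where $\Phi_{x}(\omega) = \E[e^{i\omega S}\mid x_1,x_2]$; cancelling the Gaussian factor and setting $\Psi_x(\omega) := e^{i\omega(x_1+x_2)}\Phi_x(\omega)$ shows $\Psi_x \equiv \Psi_{\tilde x}$ for all inputs, so $\Psi_x(\omega) = \widehat{\mu}(\omega)$ for a fixed law $\mu$, i.e. $S + x_1 + x_2$ has law $\mu$ conditionally on $(x_1,x_2)$. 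Once this structural fact is in hand, the rest is the elementary second-moment optimization above, with the minimizer forced to $\mu=\delta_0$ since $\E T^2 \geq 0$ with equality iff $T = 0$ a.s. I would also remark that by the general convention in the paper, if no symmetrizing $\varphi$ existed the quantities would be $+\infty$; here the explicit $\delta$-law construction shows they are finite and achieves the claimed values, so the minima are attained.
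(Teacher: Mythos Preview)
Your approach is correct and essentially the same as the paper's: both characterize the symmetrizing conditional laws as translates of a single fixed distribution and then minimize the resulting second moment, with your characteristic-function argument making rigorous the deconvolution step that the paper leaves implicit (the paper simply asserts ``it follows that $\varphi_{0,0}(y-x_1-x_2)=\varphi_{x_1,x_2}(y)$'' after observing both sides of \eqref{eq:GPsymmetrizableEq2} are convolutions with $f_Z$). There is a harmless sign slip in your Fourier identity---taking the transform of \eqref{eq:GPsymmetrizableMj} actually gives $e^{i\omega(x_1+x_2)}\Phi_{\tilde x}(\omega)=e^{i\omega(\tilde x_1+\tilde x_2)}\Phi_{x}(\omega)$, so it is $e^{-i\omega(x_1+x_2)}\Phi_x(\omega)$ that is constant in $(x_1,x_2)$, i.e.\ $S-(x_1+x_2)\sim\mu$ rather than $S+(x_1+x_2)\sim\mu$---but because $\E X_1=\E X_2=0$ the second-moment computation is unaffected, and the optimizer $\varphi(s\mid x_1,x_2)=\delta(s-x_1-x_2)$ you identify is indeed consistent with the correct sign.
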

\begin{proof}[Proof of Lemma~\ref{lemm:GPscostPM}]
First, we evaluate $\Psi(F_{X_1} F_{X_2})$.
Observe that by (\ref{eq:GPlambdaTMj}), the Gaussian AVMAC is symmetrized-$\Xset_1\times\Xset_2$ by a conditional pdf 
$\varphi_{x_1,x_2}(s)=\varphi(s|x_1,x_2)$ if
\begin{align}
\label{eq:GPsymmetrizableEqMj}
\int_{-\infty}^\infty  \varphi_{0,0}(s)f_{Z}(y-x_1-x_2-s) \,ds=
\int_{-\infty}^\infty  \varphi_{x_1,x_2}(s)f_{Z}(y-s)\,ds
 \,, 
\end{align}
for all $x_1,x_2,y\in\mathbb{R}$.
By substituting $z=y-x_1-x_2-s$ in the LHS, and $\bar{z}=y-s$ in the RHS, this is equivalent to 
\begin{align}
\label{eq:GPsymmetrizableEq2}
\int_{-\infty}^\infty  \varphi_{0,0}(y-x_1-x_2-z)f_{Z}(z)\, dz=
\int_{-\infty}^\infty  \varphi_{x_1,x_2}(y-\bar{z})f_{Z}(\bar{z})\, d\bar{z} \,.
\end{align}
For every $x_1,x_2\in\mathbb{R}$, define the random variable $\oS(x_1,x_2)\sim\varphi_{x_1,x_2}$.
We note that the RHS is the convolution of the pdfs of the random variables $Z$ and 
$\oS(x_1,x_2) $, while the LHS is the convolution of the pdfs of the random variables $Z$ and 
$\oS(0,0)+x_1+x_2 $. This is not surprising since the channel output $Y$ is  a sum of independent random variables, and thus the pdf of $Y$ is a convolution of pdfs.
It follows that $\varphi_{0,0}(y-x_1-x_2)=\varphi_{x_1,x_2}(y)$, and
 by plugging $s$ instead of $y$, we have that $\varphi_{x_1,x_2}$ symmetrizes-$\Xset_1\times\Xset_2$ the Gaussian AVMAC if and only if 
\begin{align}
\varphi_{x_1,x_2}(s)=\varphi_{0,0} (s-x_1-x_2) \,.
\label{eq:GPvarphiSymmEqM}
\end{align} 
Then, the corresponding state cost satisfies
\begin{align}
&\int_{-\infty}^\infty\int_{-\infty}^\infty\int_{-\infty}^\infty f_{X_1}(x_1)f_{X_2}(x_2) \varphi_{x_1,x_2}(s) s^2 
\, dx_1 \,dx_2 \, ds
\nonumber\\
=& \int_{-\infty}^\infty\int_{-\infty}^\infty\int_{-\infty}^\infty f_{X_1}(x_1)f_{X_2}(x_2) \varphi_{0,0} (s-x_1-x_2) s^2 
\, dx_1 \,dx_2 \, ds
\nonumber\\
=& \int_{-\infty}^\infty\int_{-\infty}^\infty\int_{-\infty}^\infty f_{X_1}(x_1)f_{X_2}(x_2) \varphi_{0,0} (a) (a+x_1+x_2)^2 \, da
\, dx_1 \,dx_2
\nonumber\\
=& \int_{-\infty}^\infty
\left[ \int_{-\infty}^\infty  \int_{-\infty}^\infty (x_1+x_2+a)^2
f_{X_1}(x_1)f_{X_2}(x_2) \, dx_1 \, dx_2
\right]
 \varphi_{0,0} (a)  \, da 
\label{eq:GPscostEq1Mj}
\end{align}
where the second equality follows by the integral substitution of $a=s-x_1-x_2$.
Observe that the bracketed integral can be expressed as 
$
\E[(X_1+X_2+a)^2]=\E X_1^2+\E X_2^2+a^2 
$. 
Thus, 
\begin{align}
 \int_{-\infty}^\infty  \int_{-\infty}^\infty (x_1+x_2+a)^2
f_{X_1}(x_1)f_{X_2}(x_2) \, dx_1 \, dx_2
\geq& \E X_1^2+ \E X_2^2 \,.
\label{eq:GPCscostTrMj}
\end{align}
The last inequality holds for any $\varphi_{x_1,x_2}$ which symmetrizes-$\Xset_1\times\Xset_2$ the channel. 
Now, observe that (\ref{eq:GPvarphiSymmEqM}) holds
 for $\hat{\varphi}_{x_1,x_2}(s)=\delta(s-x_1-x_2)$, where $\delta(\cdot)$ is the Dirac delta function, 
hence $\hat{\varphi}_{x_1,x_2}$ symmetrizes-$\Xset_1\times\Xset_2$ the channel.
In addition, since $\hat{\varphi}_{0,0}$ gives probability $1$ to $S=0$, we have that 
 (\ref{eq:GPCscostTrMj}) holds with equality for $\hat{\varphi}_{x_1,x_2}$,
and thus, $\Psi( F_{X_1}F_{X_2} )=\E X_1^2+\E X_2^2$. 

Next, consider $\Psi_1(F_{X_1})$. The Gaussian AVMAC is symmetrized-$\Xset_1|\Xset_2$ by a conditional pdf 
$\varphi'_{x_1}(s)=\varphi_1(s|x_1)$ if
 \begin{align}
\label{eq:GPsymmetrizableEqM1}
\int_{-\infty}^\infty  \varphi'_{0}(s)f_{Z}(y-x_1-x_2-s) \,ds=
\int_{-\infty}^\infty  \varphi'_{x_1}(s)f_{Z}(y-x_2-s)\,ds
 \,, 
\end{align}
for all $x_1,x_2,y\in\mathbb{R}$. 
By substituting $y'=y-x_2$, $z=y'-x_1-s$ in the LHS, and $\bar{z}=y'-s$ in the RHS, this is equivalent to 
\begin{align}
\label{eq:GPsymmetrizableEq2M1}
\int_{-\infty}^\infty  \varphi'_{0}(y'-x_1-z)f_{Z}(z)\, dz=
\int_{-\infty}^\infty  \varphi'_{x_1}(y'-\bar{z})f_{Z}(\bar{z})\, d\bar{z} \,.
\end{align}
As earlier, it follows that 
 $\varphi'_{x_1}$ symmetrizes-$\Xset_1|\Xset_2$ 
if and only if 
$
\varphi'_{x_1}(s)=\varphi'_{0} (s-x_1) 
$. 
By similar derivation as in (\ref{eq:GPscostEq1Mj}), the corresponding state cost satisfies
\begin{align}
\int_{-\infty}^\infty\int_{-\infty}^\infty f_{X_1}(x_1) \varphi'_{x_1}(s) s^2 
\, dx_1  \, ds
=& \E X_1^2+
 \int_{-\infty}^\infty
a'^2
 \varphi'_{0} (a')  \, da' \geq \E X_1^2 \,, 
\label{eq:GPscostEq1M1}
\end{align}
with equality for $\varphi'_{x_1}(s)=\delta(s-x_1)$. Hence, $\tLambda_1( F_{X_1})=\E X_1^2$, and by symmetry, $\Psi_2( F_{X_2})=
\E X_2^2$. This completes the proof of Lemma~\ref{lemm:GPscostPM}.
\end{proof}
Going forward with the derivation of the deterministic code capacity region,
 we have by Lemma~\ref{lemm:GPscostPM} that the thresholds
defined in (\ref{eq:Lstar})-(\ref{eq:Lstar2}) are given by
\begin{align}
L^*=&\max_{F_{X_1}F_{X_2}: \E X_1^2\leq \plimit_1 , \E X_2^2\leq\plimit_2} \tLambda(F_{U,X_1,X_2}) =\plimit_1+\plimit_2 \,, \\
L^*_1=&\max_{F_{X_1}\E X_1^2\leq \plimit_1 } \tLambda_1(F_{U,X_1}) =\plimit_1 \,,\\ 
L^*_2=&\max_{F_{X_2}\E X_2^2\leq \plimit_2 } \tLambda_2(F_{U,X_2}) =\plimit_2 \,.
\end{align}
We can now complete the derivation by applying Theorem~\ref{theo:MCavc} to cases where $\plimit_k\neq\Lambda$ for $k=1,2$.

If $\plimit_1>\Lambda$ and $\plimit_2>\Lambda$, then $L^*>\Lambda$, $L^*_1>\Lambda$, and $L^*_2>\Lambda$, which corresponds to Case a) in 
Definition~\ref{def:MICavc}.
We have seen that the random code capacity region is achieved with the input distribution specified by  $U=\emptyset$, $X_1\sim\mathcal{N}(0,\plimit_1)$, and $X_2\sim\mathcal{N}(0,\plimit_2)$, which is in the set $\overline{\pSpace}_{\plimit_1,\plimit_2,\Lambda}(\Uset\times\Xset_1\times\Xset_2)$ (see (\ref{eq:MpI})). It follows that the capacity region is the same as the random code capacity region,
\ie $\MCavc=\MrCav$, as in (\ref{eq:MCavcGaussa}).

%

For the Gaussian AVMAC, as opposed to the scenario discussed in Remark~\ref{rem:SingleDiffM}, the cases where one of the users has zero capacity can be derived from the single user results. 
This occurs as the minimal state cost $\tLambda_1(F_{U,X_1})$ for symmetrizability-$\Xset_1$, given in Lemma~\ref{lemm:GPscostPM},  is the same as the minimal state cost for single user symmmetrizability of the Gaussian AVC (see \cite{CsiszarNarayan:88p}). Now, based on Csisz\'ar and Narayan's results on the single user Gaussian AVC \cite{CsiszarNarayan:91p}, we have the following. If $\plimit_1\leq\Lambda$ and $\plimit_2>\Lambda$, then the individual capacities of User 1 and User 2 are $C_1=0$ and $C_2=\frac{1}{2}\log\left( 1+\frac{\plimit_2}{\Lambda+\sigma^2} \right)$, respectively, which implies (\ref{eq:MCavcGaussb}). Similarly, if $\plimit_1>\Lambda$ and $\plimit_2\leq\Lambda$, then the individual capacities are $C_1=\frac{1}{2}\log\left( 1+\frac{\plimit_1}{\Lambda+\sigma^2} \right)$ and $C_2=0$, which results in (\ref{eq:MCavcGaussc}).
If $\plimit_1\leq\Lambda$ and $\plimit_2\leq\Lambda$, then $C_1=C_2=0$, hence, (\ref{eq:MCavcGaussd}) follows. 
\qed

\end{appendices}

\section*{Acknowledgment}
We gratefully thank Vinod M. Prabhakaran (Tata Institute of Fundamental Research) and Yiqi Chen (Shanghai Jiao Tong University)
 for useful discussions and valuable comments.

\bibliography{references2}{}
 
\end{document}